\setlist{itemsep=0pt,topsep=4pt}
\newcommand{\comment}[4][inline]{
  \ifthenelse{\equal{#1}{margin}}
  {
    \marginpar{\scriptsize \hrule\noindent
    {\bf #2:\\} {\em\textcolor{#3}{#4}} \hrule}
  }
  {
    \hrule\noindent {\bf #2:} {\em\textcolor{#3}{#4}} \hrule
  }
}
\newcommand{\Na}{\mathbf{a}} 
\newcommand{\Nb}{\mathbf{b}} 
\newcommand{\Nc}{\mathbf{c}} 
\newcommand{\Nr}{\mathbf{r}} 
\newcommand{\Nn}{{n}} 
\newcommand{\cards}{D} 
\newcommand{\deal}{\mathit{deal}} %
\newcommand{\signature}{\gamma} %
\newcommand{\MA}{\mathcal{M}} 
\renewcommand{\phi}{\varphi}
\newcommand{\powerset}[1]{\mathscr{P}(#1)}
\newcommand{\powersetS}[2]{\mathscr{P}_{#1}(#2)} 
\def\cG{{\cal G}}
\def\cI{{\cal I}}
\def\cO{{\cal O}}
\def\cT{{\cal T}}
\def\cI{{\cal I}}
\newcommand{\ceil}[1]{\left\lceil #1 \right\rceil}
\newcommand{\floor}[1]{\left\lfloor #1 \right\rfloor}
\newcommand{\set}[1]{\left\{ #1 \right\}}
\newcommand{\var}[1]{\lstinline+#1+}
\begin{document}
\title{A  Distributed Computing Perspective of  Unconditionally Secure Information
Transmission in Russian Cards Problems}
%
\titlerunning{Unconditionally Secure Information Transmission in Russian Cards Problems}
%
\author{Sergio Rajsbaum}
\authorrunning{S. Rajsbaum}
%
\institute{Instituto de Matem\'aticas \\
Universidad Nacional Aut\'onoma de M\'exico (UNAM)\\
Mexico City 04510, Mexico\\
\email{rajsbaum@im.unam.mx}
}
\maketitle              
\begin{abstract}
The  problem of $A$ privately transmitting information  to
$B$ by a  public announcement overheard by an eavesdropper $C$ is considered.
To do so by a deterministic protocol, their inputs must be correlated.
Dependent inputs are represented using a deck of  cards.
There is a publicly known \emph{signature} $(\Na,\Nb,\Nc)$,
  where $\Nn=\Na+\Nb+\Nc+\Nr$, and  $A$ gets $\Na$ cards, 
  $B$ gets $\Nb$ cards, and $C$ gets  $\Nc$ cards, out of the deck of $\Nn$ cards.
Using a deterministic protocol, $A$ decides its announcement  based on  her hand.\\

Using techniques  from coding theory, Johnson graphs,
and additive number theory,  a novel perspective inspired
by distributed computing theory is provided, to
analyze the  amount
of information that $A$ needs to send, while  preventing $C$
from learning a single card  of her hand.
In one extreme, the generalized Russian cards problem, 
$B$ wants to learn all of $A$'s cards, and in the other,  
 $B$ wishes to learn \emph{something} about $A$'s hand. 
\keywords{Johnson graphs \and Secret sharing  \and Distributed computing \and Russian cards problem \and Information Theoretic Security \and Combinatorial cryptography \and Binary Constant Weight Codes \and Additive number theory.}
\end{abstract}

\section{Introduction}

The idea that card games could be used to achieve
security in the presence of computationally unbounded adversaries 
proposed by  Peter Winkler~\cite{Winkler83}
led to an active research line
e.g.~\cite{FischerPR89secBitTr,FischerW92crypto,FWeffic93,FW96,KMizukiTN08,MaurerW99,MizukiT99,MizukiSN02journal,Winkler83}.
It motivated Fischer and Wright~\cite{FWeffic93}  to consider
  \emph{card games}, where $A,B,C$ draw cards from a deck $\cards$ of $\Nn$ cards,
as specified by a \emph{signature} $(\Na,\Nb,\Nc)$, with
 $\Nn=\Na+\Nb+\Nc+\Nr$.
Nobody gets $\Nr$ cards, while $A$ gets $\Na$ cards, 
  $B$ gets $\Nb$ cards, and $C$ gets  $\Nc$ cards.
  
 Fischer and Wright thought of the cards as representing
 correlated random initial local variables for the players, that have a simple structure. 
  They were interested in knowing
 which distributions of private initial values allow $A$ and $B$ to obtain 
a key, that remains secret to $C$. Their protocols mostly use randomization, and they 
are information-theoretic secure. However, they do not keep the cards of $A$ and $B$
secret from $C$.

Another research line started with an in depth, combinatorial and epistemic logic study of  van Ditmarsch~\cite{DitmarschRC03} of the~\emph{Russian cards}  problem,  presented at the Moscow Mathematics Olympiad in 2000, where the cards of $A$ and $B$ should be kept
secret from $C$.
Here $A$, $B$ and $C$ draw $(3,3,1)$ 
cards, respectively, from a deck of $7$ cards. 
  First $A$ makes an announcement
that allows $B$ to identify her set of cards, while $C$ cannot deduce a single card of $A$. 
After the announcement of $A$,
 $B$ knows the cards of each player, and hence he may announce $C$'s card, 
 from which $C$ learns nothing, but allows $A$ to infer the cards of $B$. 
The problem  has received  a fair amount of attention since then\footnote{The  $\Nr=0$ case is mostly considered
here, as well as in the secret key research line.
} 
e.g.~\cite{Albert2011securComm,Albert2005SafeCF,Cordon-FrancoDF12,geomProtCDFS15,colorGRSP13,Ditmarsch2011ThreeS,Duan2010,LFcaseStudy17,swansStinson14,SSadd2014},
in its  \emph{generalized} form of signature $(\Na,\Nb,\Nc)$, 
and other variants, including multiround, multiplayer, and different security requirements.
Solutions are based either   on modular arithmetic or
on combinatorial designs.



The original solution for $(3,3,1)$ uses modular arithmetic, where $A$ announces the sum of her cards
modulo $7$, and then $B$ announces $C$'s card~\cite{makarychev2001importance}. 
For the general case when $\Nc=1$ (and $\Nr=0$), solutions exist that announce the cards sum modulo
an appropriate prime number greater or equal to $\Nn$~\cite{Cordon-FrancoDF12}. These solutions use only two announcements. A solution using three announcements for $(4, 4, 2)$ is reported in~\cite{Ditmarsch2011ThreeS}, 
and a four-step protocol where $C$ holds approximately the square of the number of cards of
$A$ is presented in~\cite{colorGRSP13}. 

The relation to Steiner triple system and combinatorial designs  
goes  back  to 1847 Kirkman~\cite{kirkman}.
Using combinatorial designs Cord\'{o}n-Franco et al.~\cite{colorGRSP13} prove that  solutions exist 
when $\Na$ is a power of a prime, and present the first solutions when $\Nc > \Na$.
The solution used 4 communication steps, as opposed to the usual 2-step protocols.
Albert et al.~\cite{Albert2005SafeCF}  show that there is no 2-step solution if $c\geq a-1$.


We provide an extensive discussion of related work in Appendix~\ref{app:relatedW}. 
In addition to the papers mentioned above, 
through our new perspective on these problems, we have uncovered relations
with other areas: intersecting families of sets, coding theory, additive number theory, and distributed computability.

\paragraph{The new approach.}

Given a  publicly known {signature} $(\Na,\Nb,\Nc)$, for a deck $\cards$ of
 $\Nn=\Na+\Nb+\Nc+\Nr$ cards, 
the  basic problem underlying the situations described above, is 
to design a safe protocol $P_A$, so that $A$ 
makes a public announcement, $P_A(a)$, based on her hand, $a$. 
From the announcement $P_A(a)$,   and using his own hand, $b$, $B$
should learn something about $A$'s hand. 
 The announcement  $P_A(a)$ is deterministically determined
by the input of $A$, and the knowledge of the signature. No randomized solutions are considered in this paper.

In the language of e.g.~\cite{Cordon-FrancoDF12,colorGRSP13,Ditmarsch2011ThreeS}, 
a protocol $P_A$ should be \emph{informative} for $B$ and \emph{safe} from $C$. 
A protocol is \emph{safe} if  $C$ 
 does not learn any of the cards of $A$.
It is informative, if $B$  learns the hand of $A$.

We define
 the notion of a protocol being  \emph{minimally informative,} where the goal is
that  $B$ learns \emph{something} about the hand of $A$. We prove that
the minimal information problem is a kind of oblivious transfer problem,
in the sense that, when $\Nc+\Nr=1$, $B$ learns one card of $A$,  but $A$ does not know which one.
If $\Nc+\Nr>1$ then $B$ learns even less; he
learns that $A$ has  one of the cards of a set $s$, $|s|=\Nc+\Nr$.

In Section~\ref{sec:basicProb} we formalize this setting  
based on distributed computability~\cite{HerlihyKR:2013}, and more specifically 
when  the least amount of communication is studied~\cite{DFRbits2020}.

In Section~\ref{sec:mainProt},
using this formalization, we show  that a protocol can be viewed
as a  coloring of the set of vertices
$\powersetS{\Na}{\cards}$, all subsets of size $\Na$ of  $\cards$,
$$
P_A: \powersetS{\Na}{\cards} \rightarrow \MA,
$$
for the set of messages $\MA$ that $A$ may send. Thus, $\powersetS{\Na}{\cards}$ is the set of vertices of a Johnson graph $J(\Nn,\Na)$,
where $\Nn=|\cards|$.
We are interested in the question of how small can $\MA$ be, i.e.,
the  number of bits, $\log_2 | \MA |$, that $A$ needs to  transmit to implement either and
informative
or a minimally informative safe protocol.

We show in Theorem~\ref{def:prColoring} that $P_A$ is informative if and only if $P_A$ is a proper coloring
of  the 
\emph{$d$-distance Johnson graph} $J^d(\Nn,\Na)$, $d=\Nc+\Nr$.
Vertices $a,a'$ of  $J^d(\Nn,\Na)$ are adjacent whenever $\Na-d\leq |a\cap a'|$.
In particular, we have a Johnson graph when $d=1$.

It is well-known that there is a family of maximal clicks of  $J(\Nn,\Na)$ of size $\Na+1$, e.g.~\cite{godsil_meagher_2015}. It turns out, that the inputs of $A$ that $B$ with input $b$
considers possible,  form a maximal click of $J^{\Nc+\Nr}(\Nn,\Na)$, denoted $K_p(\bar{b})$.
 The click $K_p(\bar{b})$ consists  of all hands $a\subset \bar{b}$,  $|a|=\Na$, and hence
$p=  \binom{\Na+\Nc+\Nr}{\Na} $.
 Similarly, the hands that  $C$ considers possible with input $c$ form a click $K_p(\bar{c})$
of  $J^{\Nb+\Nr}(\Nn,\Na)$, and such clicks are of size $p=  \binom{\Na+\Nb+\Nr}{\Na}$.

We show also in Theorem~\ref{def:prColoring} that $P_A$ is 
minimally informative if and only if $P_A$ colors
at least one edge of each  click $K_p(\bar{b})$ with two different colors.
In contrast, informative requires that $P_A$ colors
every edge of  $K_p(\bar{b})$ 
with two different colors.


Thus, the chromatic number of $J^d(\Nn,\Na)$ determines the number
of messages needed for a protocol $P_A$ to be informative.
There are many interesting open questions concerning the chromatic number of Johnson graphs~\cite{godsil_meagher_2015}.
 Upper bounds have been thoroughly studied for special cases,
because they imply lower bounds on codes e.g.~\cite{BE11,EBitan96}. 
In addition to some special cases, only the trivial lower bound implied by the maximal clicks is known. 
Briefly, it is known that $\Nn/2 \leq \chi(J(\Nn,\Na))\leq \Nn$, often the chromatic number is a little bit smaller\footnote{Apparently there is no $\Nn,\Na$ where it is known that $\chi(J(\Nn,\Na))< \Nn-2$. In some special cases the exact number has been determined, Figure~\ref{fig-tableChrJohnsonG}.},  
more specifics 
 are in Appendix~\ref{sec:Johnson}.
Indeed, using coding theory techniques we show the easy result that there is an informative
protocol when $\Nc+\Nr=1$ with $\Nn$ messages (Lemma~\ref{lem:properColBasic}), and the more difficult new result for the general case,  $\Nc+\Nr\geq 1$,
that   $(2\Nn)^{\Nc+\Nr}$ messages suffice, i.e.,
to properly
color $J^{\Nc+\Nr}(\Nn,\Na)$ (Lemma~\ref{lem:properColBasicG}).
It follows that $\Theta( (\Nc+\Nr)\log \Nn)$ bits are needed and sufficient for an informative protocol; the lower bound is implied by the size of the maximal clicks of $J^{\Nc+\Nr}(\Nn,\Na)$,
more details in Section~\ref{sec:modInfo-gen-c}.

 Remarkably, only 1 bit suffices for minimal information transmission, when $\Nb<\floor{\Nn/2}$.
We study the minimal information problem in  Section~\ref{sec:sBitTransm}, where we present this
and other results. We show that if additionally $\Nc\leq \floor{\Nn/2}-2$ the $1$-bit protocol is also safe.
Also, we present a reduction from an informative protocol, showing that
when $\Nc+\Nr=1$, as $\Na$ grows from $3$
up to roughly  $\Nn/2$, the number of different messages 
goes down from $\Nn/3$ to $2$, for a safe and minimally informative protocol. 
We find it surprising that there is a safe minimally informative
protocol for the classic Russian cards case $(3,3,1)$ using $2$ messages  ($\Nn=7)$.
Namely, with a message consisting of only one bit, $A$ can transfer one of her cards to $B$, privately. 

We study the classic Russian cards problem in Section~\ref{sec:classRusCrds}, determined by colorings of $J(7,3)$,
as a   concrete example of the previous ideas. There is an informative and safe solution with $7$
messages (known since~\cite{makarychev2001importance}), and one with $6$ messages~\cite{swansStinson14}.
Namely,    using $6$ messages is optimal, since the chromatic number of $J(7,3)$ is known
to be 6.
There is also  a safe informative solution using $6$ messages 
for the \emph{weak Russian cards} problem, i.e. when $\Nc=0$ and $\Nr=1$.

While the informative property requires that all vertices of each maximal click $K_p(\bar{b})$
are colored differently by $P_A$,  the safety property requires the opposite, that not all vertices 
of each maximal click $K_p(\bar{c})$ are colored differently.
Thus, a protocol $P_A$ can be informative and safe only
if $\Nb>\Nc$. In this case, while $K_p(\bar{c})$ induces a click in $J^{\Nb+\Nr}(\Nn,\Na)$,
it does not induce a click in  $J^{\Nc+\Nr}(\Nn,\Na)$.
\begin{wrapfigure}[8]{r}[-55pt]{0.2\textwidth}
 \centering
 \vspace{-22pt}
 \includegraphics[scale=0.25]{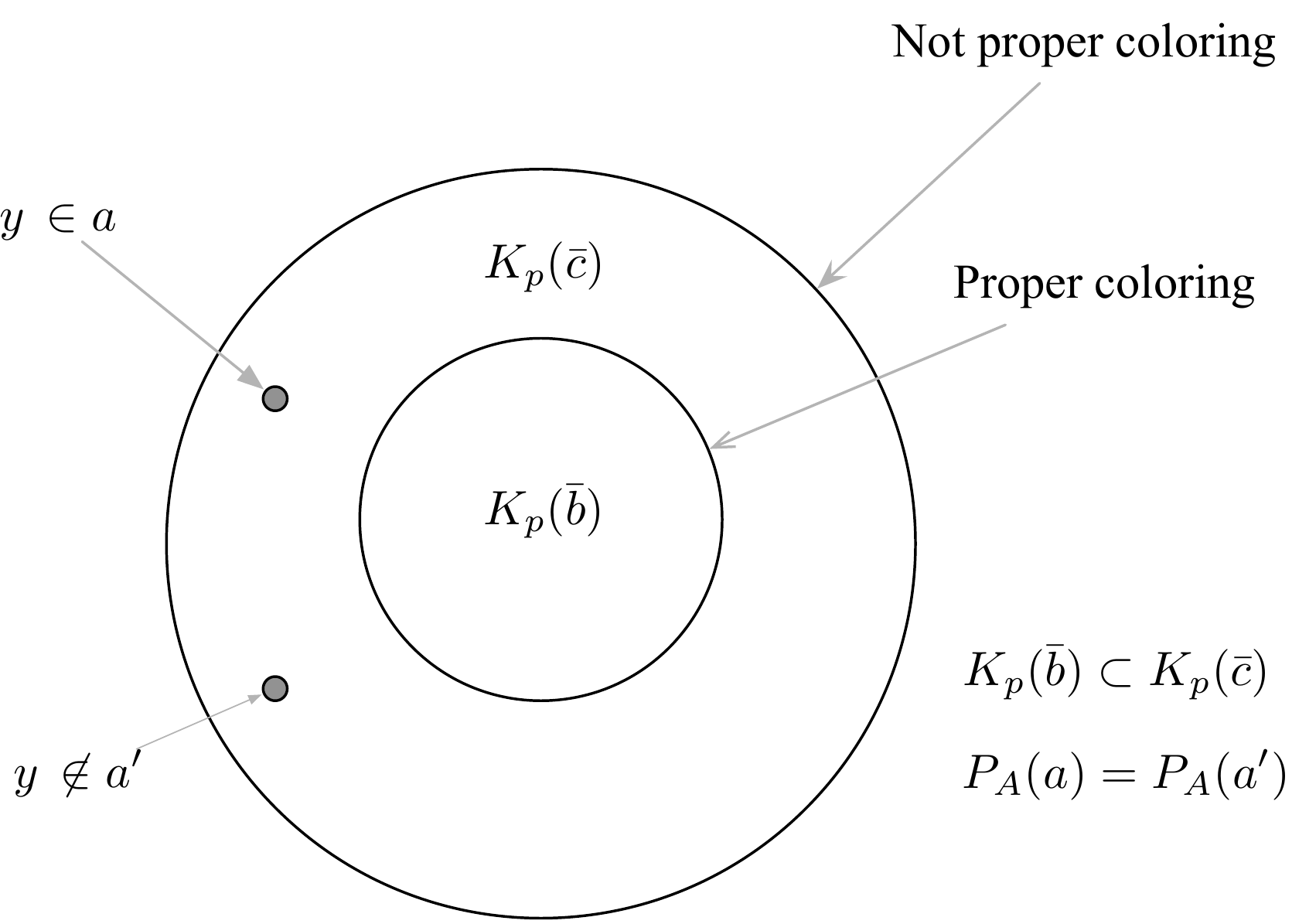}
\end{wrapfigure}
Safety requires that for each card $y$, there is a hand of $A$ that includes $y$,
and another that does not include it, both equally colored, in the complement of the hand of $C$.

We consider the protocol $\chi_{modn}$  in Section~\ref{sec:modularAlgo}, that
sends the sum of the cards modulo $\Nn$, for $\Nc+\Nr=1$,
and show that it is informative and safe, for $\Na,\Nb\geq 3$, $\Nn\geq 7$. 
Indeed, while informative is a
 coding theory property, safety is an additive number theory property.
We prove safety using simple shifting techniques~\cite{godsil_meagher_2015},
getting a    generalization  and simplification of results of~\cite{Cordon-FrancoDF12}.\footnote{
Cord{\'{o}}n{-}Franco et al.~\cite{Cordon-FrancoDF12} show that
$\chi_{modn}$ is safe
when $\Nn$ is prime 
using~\cite[Theorem 4.1]{DaSilvaDH94}, analogous to
the Cauchy-Davenport theorem, except for $(4,3,1)$, $(3,4,1)$.
}
Thus, only two additional messages are needed to make an informative protocol,
also safe (w.r.t. the best known solutions).
We present an informative protocol for the general case $\Nc+\Nr\geq 1$ based
on more involved coding theory ideas and
 discuss safety, in Section~\ref{sec:general-c}, but a detailed treatment is beyond the
 scope of this paper.

\subsubsection{Organization.}
In Section~\ref{sec:basicProb} we present the problems of secure information transmission that we study in this paper.
In Section~\ref{sec:mainProt} we review some basic facts about Johnson graphs, and  rephrase
in such terms the secure information transmission problems.
In Section~\ref{sec:genRussCards} we discuss the relation with the generalized Russian
cards problem, and some basic consequences of our formalization, e.g. there is
a safe proper coloring of $J(\Nn,\Na)$ iff there is a 
safe proper coloring of $J(\Nn,\Nn-\Na)$.
In Section~\ref{sec:classRusCrds} we present the results about six-message solutions for the weak and the classic Russian cards problem, $\Nn=7$.
In Section~\ref{sec:sBitTransm} we present the minimal information transmission results.
In Section~\ref{sec:modularAlgo} we show that $\Nn$ messages are sufficient for
safe, informative information transmission, when $\Nc+\Nr=1$, and the general
case is discussed in Section~\ref{sec:general-c}.
The conclusions are in Section~\ref{sec:concl}.
Additional details are at the end: further related work discussion 
in Appendix~\ref{sec:relatedWork}, Johnson graphs background in
Appendix~\ref{sec:Johnson}, 
additional proofs and figures are 
 in Appendix~\ref{app:lowerBoundI} and~\ref{app:symmModn}.


\section{Secure information transmission}
\label{sec:basicProb}

The model and the problem are defined here,
adapting the distributed computing formalization of~\cite{HerlihyKR:2013} to the case of an eavesdropper.
In Section~\ref{sec:inputC} we present the representation of the inputs to $A,B,C$ as a simplicial complex,
which  determines the Johnson graphs  that will play a central role in this paper.
In Section~\ref{sec:infoSafeP} the notions of protocol, and of a protocol being (minimally) informative and safe are defined.

\subsection{The input complex}
\label{sec:inputC}

Let $\cards=\set{0,\ldots,\Nn-1}$, $\Nn>1$,
 be the \emph{deck} of $\Nn$ distinct cards. An element in the deck is a \emph{card.}
 A subset $x$ of cards is a  \emph{hand,}  $x\in\powerset{\cards}$.
 We may say for short  that $x$, $|x|=m$, 
 is an $m$-set or $m$-hand, namely, if $x\in \powersetS{m}{\cards}$, the subsets of $\cards$ of size $m$.
 A  $\deal=(a,b,c)$ consists of  three disjoint  hands, meaning that
  cards in $a$ are dealt to $A$, cards in $b$ to $B$, and cards in $c$ to $C$.
  We say that the hand is the \emph{input} of the process.
 We call $\signature =(\Na,\Nb,\Nc)$ the \emph{signature} of the $\deal=(a,b,c)$
 if $|a|=\Na$, $|b|=\Nb$ and $|c|=\Nc$, following the notation introduced by Fischer and Wright~\cite{FischerW92crypto}.
We assume that $A$, $B$ and $C$ are aware of the deck and the signature.

It has been often assumed  that $\Nn=\Na + \Nb + \Nc$, but as we shall see,
it is natural to consider the case where nobody gets $\Nr$ cards, $\Nn=\Na + \Nb + \Nc+\Nr$.
While $A$ and $B$ get at least one card, $\Na,\Nb\geq 1$, $C$ may get none $\Nc\geq 0$.

All possible deals  for a given signature over $\cards$ are represented by a simplicial complex.
The vertices are of the form $(Y,y)$, $Y\in\set{A,B,C}$, and $y$ a hand. Such a vertex is called a $Y$-vertex.
 The \emph{input complex} $\cI(\Na,\Nb,\Nc)$, or $\cI$ for short,
 for signature  $\signature =(\Na,\Nb,\Nc)$ is defined as follows.
 The facets of $\cI$ are all the sets $\set{ (A,a),(B,b),(C,c)}$, where $a,b,c$ 
 is a deal of signature $\gamma$.
The input complex $\cI$ consists of all such facets, together with all their subsets.

   Notice that the $A$-vertices of $\cI$ are in a one-to-one correspondence with  
all subsets of size $\Na$ of $\cards$,   $\powersetS{\Na}{\cards}$,
 the $B$-vertices with $\powersetS{\Nb}{\cards}$,
  the $C$-vertices with $\powersetS{\Nc}{\cards}$.
  Indeed, when  $\Nc=0$, there is a single vertex for $C$ in $\cI$.
  

 The left part of  Figure~\ref{fig:basicEx} illustrates  the four $A$-neighbors of vertex $(B,\set{4,5,6})$, in $\cI$,
for signature $(3,3,1)$.
For short, we omit the commas and parenthesis from the set notation, and write $(B,\set{456})$.

\begin{figure}[h]
\centering
\includegraphics[scale=0.4]{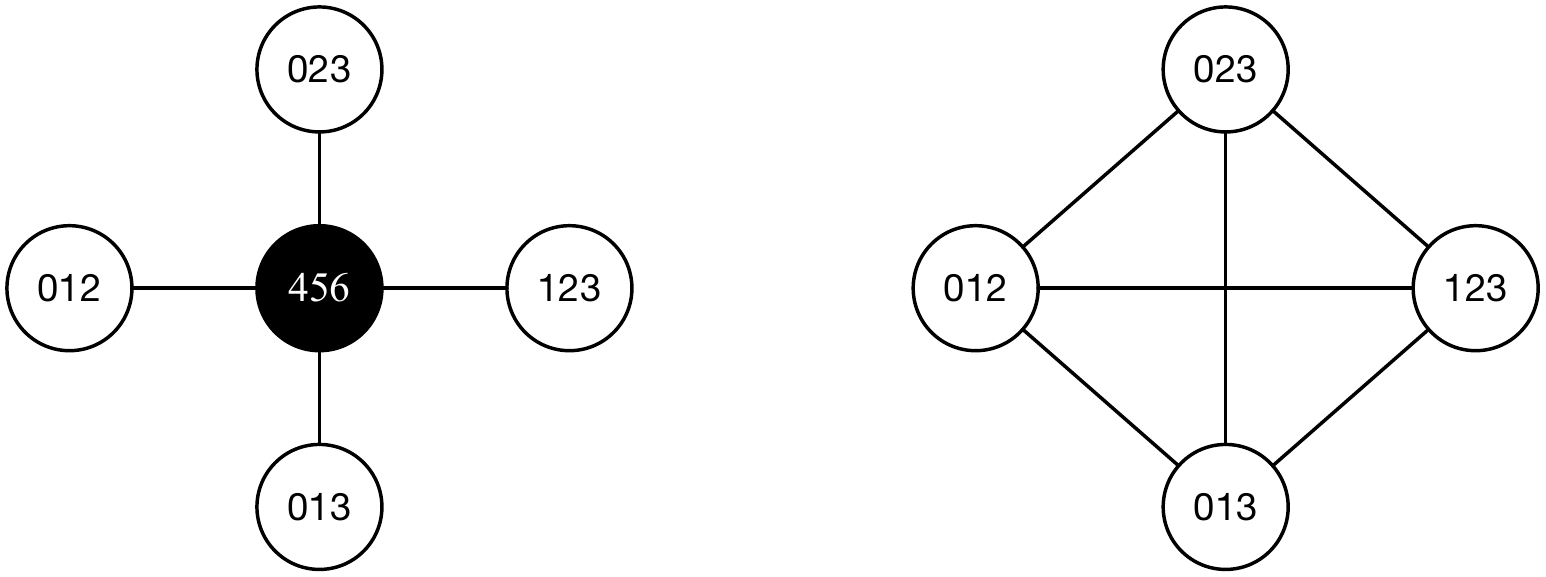}
\caption{White vertices correspond to $A$, and the black vertex correspond to $B$.
The four $A$-neighbours of  $(B,456)$ for signature $(3,3,1)$ form a click on the right, the corresponding
part of $J(7,3)=\cG_B$, defined in Section~\ref{sec:strGA}.}
\label{fig:basicEx}
\end{figure}

\begin{example}
In distributed computing the input complex with a signature $\signature =(1,1,1)$ for three processes
has been considered,
representing that processes get distinct input names from a set of $\Nn$ names~\cite{AttiyaBDPR90}.
The figure from~\cite{HerlihyS99} 
shows that in the case of $\Nn=4$, the complex is a  torus subdivided into triangles.
The vertices of each triangle are colored black, gray, and white
to represent the three different processes. Inside the vertex is the card dealt to the corresponding process.
\begin{figure}[h]
\centering
\includegraphics[scale=0.3]{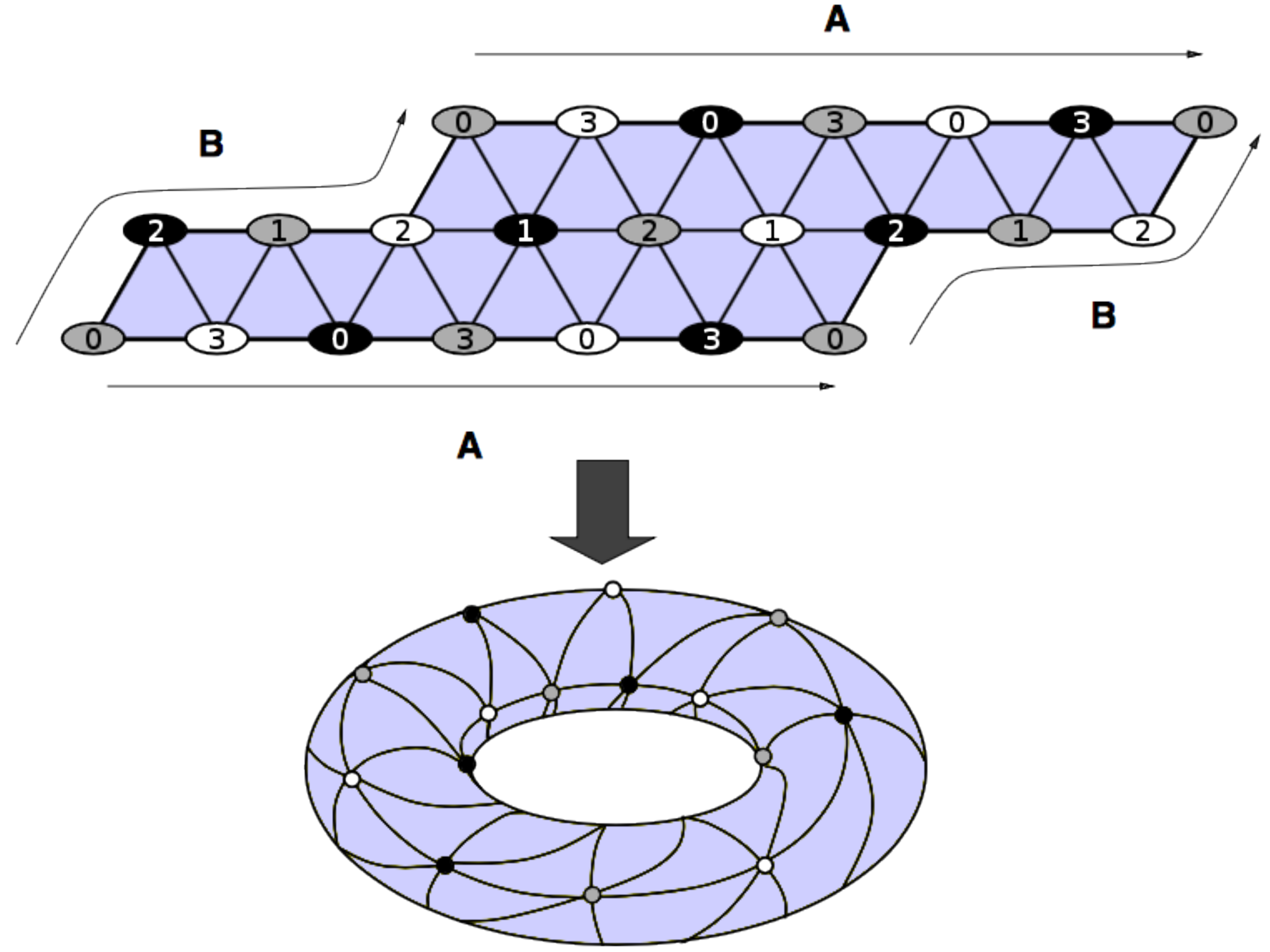}
\label{renaming-torus}
\caption{Input complex for signature $\signature =(1,1,1)$ with  $\Nn=4$ (from~\cite{HerlihyS99}).}
\end{figure}
\end{example}



\subsection{Informative and safe protocols}
\label{sec:infoSafeP}

Fix an {input complex} $\cI$
over $\cards$, $n=\Na+\Nb+\Nc+\Nr$.
In the language of e.g.~\cite{Cordon-FrancoDF12,colorGRSP13,Ditmarsch2011ThreeS}, 
a protocol should be ``informative'' for $B$ and ``safe" from $C$. 
In the case of the Russian cards problem, $B$ should learn the hand of $A$.
We define also the notion of  ``minimally informative."

  The goal is that $B$ learns something about the hand of $A$, after listening
 to an announcement $m$ made by $A$.
The announcement of $A$ is defined by a deterministic function $P_A(a)=M$, for each input vertex $(A,a)\in\cI$,
where $M$ belongs to $\MA$, the domain of possible messages that $A$ may send.
We say that $P_A$ is the \emph{protocol} of $A$.
For $B$, there is a \emph{decision function}  $\delta_B(b,M)$ 
 that produces a set of cards in $\powerset{\cards}$, based on the input $b$
of $B$, and the message $M$ received\footnote{
Since we have fixed  $\cards$ and the input complex $\cI$, implicitly $P_A(a)$ and $\delta_{B}(b,M)$
depend
on these parameters, in addition to the specific input $a$, resp. $(b,M)$. This is what we mean when we say that the players know the
input complex.}.

The minimally informative notion (consider in Section~\ref{sec:sBitTransm})
 requires only that $B$ learns \emph{something} about the hand of $A$.
As we shall see, the least one can expect is that $B$ 
 learns that $A$ has  one of the cards of a set $s$, $|s|=\Nn-\Na-\Nb$.
 Thus, if $n=\Na+\Nb+\Nc+\Nr$, with $\Nc+\Nr=1$, then $B$ should learn one of the cards in
 the hand of $A$. 
 When $\Nb=1$,  $B$ should learn that $A$ has one of the cards in a set $s$, $|s|=\Nn-\Na-1$,
 more than the trivial guess $s$, $s=\cards\setminus b$, where $b$ is $B$'s input card.
 When $\Nc+\Nr=0$ without any communication $B$ knows the hand
 of $A$, so it does not make sense to define a protocol where $B$ learns
 less information. Notice that when $\Nc+\Nr\geq 1$, we have that $\Nn-\Na-\Nb\geq 1$, 
 and the following minimally informative definition makes sense.

\begin{definition}[Informative and minimally informative]
\label{def:inform}
Let  $P_A$ be a protocol.
If  there exists $\delta_B$, such that for any given  input edge  $\set{ (A,a),(B,b)}\in\cI$, with  $M=P_A(a)$,
\begin{itemize}
\item
 $\delta_B(b,M)=a$, the protocol is \emph{informative},
\item 
for $\Nc+\Nr\geq 1$,   $\delta_B(b,M)=s\in\powersetS{\Nc+\Nr}{\cards}$, such that $a\cap s\neq \emptyset$, the protocol is \emph{minimally informative}.
\end{itemize}
\end{definition}
 
The previous definition does not talk about $C$. Indeed, it is based only on the graph which is
the subcomplex of $\cI$ induced by the  $A$-vertices and the $B$-vertices. 
A protocol is {safe} if $C$ cannot tell who holds even a single card (that she does not hold). 
Consider a deal $I=\set{(A,a),(B,b),(C,c)}\in\cI$.
Let  $P_A(a)=M$ be the announcement  sent by $A$, and denote it also by $P_A(I)$.
Two deals $I,I'\in\cI$  are \emph{initially indistinguishable} \cite{Indistinguishability}
to $C$ with input $c$ if  $(C,c)\in I,I'$. And they are \emph{indistinguishable after the protocol},
if additionally $P_A(I)=P_A(I')$.
We require then that for $C$ there are always two   indistinguishable inputs of $A$, $a,a'$,
after the protocol,  such that
 $x\in a$ and $x\not\in a'$ or else $x\not\in a$ and $x\in a'$.
 More precisely, for a vertex $(C,c)$, let $M$ be a \emph{possible} message, namely, 
 such that there exists  $I=\set{(A,a),(B,b),(C,c)}\in\cI$,
 and  $P_A(I)=M$. 
For a hand $c$, let $\bar{c} = \cards\setminus c$, and $\triangle$ the symmetric difference operator. 

\begin{definition}[Safety]
\label{def:safProt}
A protocol $P_A$ is \emph{safe,} if for any $(C,c)$, any $x\in \bar{c}$, and any possible message $M$ for $(C,c)$,
there are edges $I=\set{(A,a),(C,c)}\in\cI$, and
$I'=\set{(A,a'),(C,c)}\in\cI$, with  $P_A(I)=P_A(I')=M$
such that
  $x\in a\triangle a'$.
\end{definition}

Notice that while  $\Na,\Nb\geq 1$, the previous definition applies even when $\Nc=0$.

\begin{remark}[The $\Nc+\Nr\geq 1$ assumption]
\label{rm:NrNc0}
If  $\Nc=\Nr=0$, 
there is a single vertex for $C$ in $\cI$, and each vertex of $A$ and of $B$ belong to a single
triangle; without any communication they know each other hands. Even when $\Na=\Nb=1$,
 the protocol $P_A$ that always sends the same message, is  informative and safe.
\end{remark}

\section{Protocol as vertex coloring}
\label{sec:mainProt}

We represent subcomplexes of $\cI$ as Johnson graphs in Section~\ref{sec:strGA1}, and
some basic facts about these graphs are recalled in Section~\ref{sec:strGA}.
We reformulate the information transmission problem as  properties about vertex colorings
 of Johnson graphs in Section~\ref{sec:code}, and discuss corresponding chromatic number
 notions in Section~\ref{sec:chromNumb}.
 
 \subsection{Representing indistinguishability by Johnson graphs}$\,$
\label{sec:strGA1}

 The situation when $B$ has input $b$ is represented by a vertex $(B,b)\in\cI$.
 The  $A$-vertices that $B$ considers possible with input $b$, are the $A$-neighbors of $(B,b)$ in $\cI$. 
Thus, we define (following~\cite{DFRbits2020}) the graph  $\cG_B$   in terms of $\cI$, as follows.
The vertices of $\cG_B$ consist of all the $A$-vertices of $\cI$.
There is an edge joining two vertices $(A,a),(A,a')$ if and only if there are edges in $\cI$ connecting them
with the same vertex $(B,b)$.  
To analyze $\cG_B$, we omit the id $A$ from the vertices, 
and  let $V(\cG_B)=\powersetS{\Na}{\cards}$.
Thus, for two distinct  $a,a'\in \powersetS{\Na}{\cards}$,  $\{a,a'\} \in E(\cG_B)$ iff 
$\exists b\in \powersetS{\Nb}{\cards}$ such that
$a,a' \subseteq \bar{b} = \cards-b$. 
See Figure~\ref{fig:basicEx}.
 If $\Nr=\Nc=0$, and $\Nn=\Na+\Nb$, there are no two such distinct deals $a,a'$, and the graph has no edges (which is why it makes sense to assume $\Nc+\Nr\geq 1$,  Remark~\ref{rm:NrNc0}).

The graph $\cG_C$ is defined analogously, on the same set of vertices,
$V(\cG_C)=\powersetS{\Na}{\cards}$.
 When $C$ has input $c$ there is a vertex $(C,c)\in\cI$.
 The  $A$-vertices that $C$ considers possible with input $c$, are the $A$-neighbors of $(C,c)$ in $\cI$. 
Thus, for two distinct  $a,a'\in\powersetS{\Na}{\cards}$,  $\{a,a'\} \in E(\cG_C)$ iff 
$\exists c \in \powersetS{\Nc}{\cards}$ such that 
$a,a' \subseteq \bar{c} = \cards-c$.

%
%


\begin{lemma}
\label{lem:basicAdjacency1}
For $a,a'\in V(\cG_B)$, $\Nn= \Na+\Nb+\Nc+\Nr$, $\Nr\geq 0$, we have that
$\{a,a'\} \in E(\cG_B)$ iff $\Na-(\Nc +\Nr) \leq |a \cap a'|$.
Similarly, $\{a,a'\} \in E(\cG_C)$ iff $\Na-(\Nb +\Nr) \leq |a \cap a'|$.
\end{lemma}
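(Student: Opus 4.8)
The plan is to start from the adjacency characterization recorded just above the lemma statement and reduce it to a one-line cardinality count via inclusion–exclusion.

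For $\cG_B$, I would use that, by definition, $\{a,a'\} \in E(\cG_B)$ holds iff there is a hand $b$ of size $\Nb$ with $a, a' \subseteq \bar b$, i.e. with $b$ disjoint from $a \cup a'$. The key observation is that such a $b$ exists iff there are at least $\Nb$ cards outside $a \cup a'$, that is
\[
\Nn - |a \cup a'| \geq \Nb,
\]
since any $\Nb$-subset of $\cards \setminus (a \cup a')$ serves as a witness, and conversely any witness $b$ is contained in that complement.

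I would then apply inclusion–exclusion, $|a \cup a'| = 2\Na - |a \cap a'|$, substitute it into the displayed inequality, and rearrange to $|a \cap a'| \geq 2\Na + \Nb - \Nn$. Substituting $\Nn = \Na + \Nb + \Nc + \Nr$ collapses the right-hand side to $\Na - (\Nc + \Nr)$, which is exactly the claimed threshold. For $\cG_C$ I would run the identical argument with the avoiding hand now of size $\Nc$: existence of a $c$ of size $\Nc$ disjoint from $a\cup a'$ is equivalent to $\Nn - |a \cup a'| \geq \Nc$, which rearranges to $|a \cap a'| \geq 2\Na + \Nc - \Nn = \Na - (\Nb + \Nr)$.

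I do not expect any genuine obstacle here: the whole argument is elementary counting. The only point needing a word of care is the equivalence between ``there is an $\Nb$-hand avoiding $a \cup a'$'' and ``the complement of $a \cup a'$ has cardinality at least $\Nb$'', but this is immediate in both directions, and the two cases of the lemma differ only in whether the avoided hand has size $\Nb$ or $\Nc$.
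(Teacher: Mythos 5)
Your proposal is correct and follows essentially the same route as the paper's proof: both reduce adjacency to the existence of an $\Nb$-hand (resp.\ $\Nc$-hand) disjoint from $a \cup a'$, translate that existence into the cardinality condition $\Nb \leq |\cards \setminus (a \cup a')|$, and then apply $|a \cup a'| = 2\Na - |a \cap a'|$ together with $\Nn = \Na+\Nb+\Nc+\Nr$ to rearrange into the claimed threshold. Your explicit remark that the existence of a witnessing hand is equivalent to the complement having size at least $\Nb$ is exactly the (implicit) step the paper takes, so there is no gap.
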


\begin{proof}
Recall that   $\{a,a'\} \in E(\cG_B)$ iff $\exists b \subseteq \cards$ 
such that $|b| = \Nb$ and $a,a' \subseteq \bar{b} = \cards-b$. 

Thus,   $\Nb \leq |\cards - (a \cup a')|$.
Now, $|\cards - (a \cup a')| = (\Na + \Nb + \Nc+\Nr) - | a \cup a' |$.
Also, $| a \cup a' | = 2\Na - |a \cap a'|$.
It follows that $\Nb \leq  \Nb+\Nc+\Nr-\Na + |a \cap a'|$.
Finally, $\Na-\Nc -\Nr \leq |a \cap a'|$. 

The argument for $\cG_C$ is similar.
\end{proof}

 \begin{definition}[Distance $d$ Johnson graph]
 \label{def:jg}
For a set of $n$ elements,  the graph $J^d(n,m)$, $0\leq d\leq m$, has as vertices all  
 $m$-subsets. 
 Two vertices $a,a'$  are adjacent whenever $m-d \leq |a \cap a'| $.
 When $d=1$, we have a \emph{Johnson graph,} denoted $J(n,m)$.
\end{definition}

We have our basic theorem, for $\Na,\Nb\geq 1$, $\Nc,\Nr \geq 0$, and $\Nn= \Na+\Nb+\Nc+\Nr$.
The basic, most studied case, is when $\Nc=1,\Nr=0$, or $\Nc=0,\Nr=1$.
\begin{theorem}
\label{th:basicJohns}
The graph $\cG_B$ for signature $(\Na,\Nb,\Nc)$ is equal to the graph  $J^{\Nc+\Nr}(\Nn,\Na)$. 
In particular, $\cG_B$ is  a Johnson graph,
   $J(\Nn,\Na)$, exactly when   $\Nc+\Nr = 1$. 
   Similarly,  $\cG_C$ is equal to   $J^{\Nb+\Nr}(\Nn,\Na)$.
 \end{theorem}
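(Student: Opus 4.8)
The plan is to obtain all three assertions as an immediate consequence of the edge characterization already established in Lemma~\ref{lem:basicAdjacency1} together with the threshold definition of the distance-$d$ Johnson graph in Definition~\ref{def:jg}. The combinatorial content has in fact been dispatched in the proof of Lemma~\ref{lem:basicAdjacency1}; what remains is purely a matter of matching a vertex set and an intersection-size threshold, so I expect no substantive obstacle.

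First I would verify that the vertex sets coincide. By construction $V(\cG_B)=\powersetS{\Na}{\cards}$, the family of all $\Na$-subsets of the $\Nn$-element deck, which is exactly the vertex set of $J^{d}(\Nn,\Na)$ for any admissible $d$. Next I would match the adjacencies: Lemma~\ref{lem:basicAdjacency1} gives $\{a,a'\}\in E(\cG_B)$ iff $\Na-(\Nc+\Nr)\le |a\cap a'|$, while Definition~\ref{def:jg} gives $\{a,a'\}\in E(J^{d}(\Nn,\Na))$ iff $\Na-d\le |a\cap a'|$. Choosing $d=\Nc+\Nr$ renders the two conditions identical, whence $\cG_B=J^{\Nc+\Nr}(\Nn,\Na)$. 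The ``in particular'' clause then follows at once, since Definition~\ref{def:jg} designates $J^{d}$ as the Johnson graph $J(\Nn,\Na)$ precisely when $d=1$, i.e. precisely when $\Nc+\Nr=1$. The identity $\cG_C=J^{\Nb+\Nr}(\Nn,\Na)$ is proved identically, invoking the second half of Lemma~\ref{lem:basicAdjacency1} (threshold $\Na-(\Nb+\Nr)$) and taking $d=\Nb+\Nr$; here I would note explicitly that the vertices of $\cG_C$ are again the $\Na$-subsets, since $\cG_C$ records which hands of $A$ the eavesdropper cannot tell apart.

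The single point deserving care, rather than a genuine obstacle, is the admissible range of the distance parameter: Definition~\ref{def:jg} requires $0\le d\le \Na$, so the identifications read off cleanly in the regime $\Nc+\Nr\le \Na$ (respectively $\Nb+\Nr\le \Na$). Outside this regime the threshold $\Na-d$ is negative, every pair of $\Na$-subsets is adjacent, and both sides degenerate to the complete graph on $\powersetS{\Na}{\cards}$, so the stated equalities continue to hold under the natural extension of Definition~\ref{def:jg}. I would flag this briefly and otherwise let the parameter-matching stand.
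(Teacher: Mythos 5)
Your proposal is correct and matches the paper's own (implicit) argument exactly: the theorem is stated right after Lemma~\ref{lem:basicAdjacency1} and Definition~\ref{def:jg} precisely because it follows by matching the common vertex set $\powersetS{\Na}{\cards}$ and the intersection thresholds $\Na-(\Nc+\Nr)$ and $\Na-(\Nb+\Nr)$ with the parameter $d$, which is what you do. Your closing remark on the degenerate regime $d>\Na$ (where both graphs become complete) is a sound extra observation, consistent with the paper's own note that $J^d(n,m)$ is complete once $d\geq\min\set{m,n-m}$.
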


  Notice that  when $d=0$ the graph $J^d(n,m)$  has no edges. 
  Thus, when $\Nc+\Nr=0$ the graph $\cG_B$ has no edges.

 The vertices of $A$ that $B$ considers possible with input $b$, are the $A$-neighbors of $(B,b)$ in $\cI$.  They are denoted $K_{p} (\bar{b})$, where $\bar{b}=\cards- b$.
They  induce a
 click in $\cG_B$ (overloading notation the click itself is also sometimes denoted  by $K_{p} (\bar{b})$).
 The vertices in $K_{p} (\bar{b})$ are all $a\subseteq \bar{b}$ with $|a|=\Na$.  
Thus, when $B$ has input $b$, $B$ considers possible that $A$ has any input  $a$, $a\in K_{p} (\bar{b})$.
Notice that if $\Nc+\Nr=0$ and $\Nn=\Na+\Nb$, then $B$ with input $b$ considers possible only one input for $A$, namely, $\bar{b}$. In this case, $E(\cG_B)=\emptyset$.

\begin{lemma}
\label{lem:neighbrs}
For each hand $b$ of $B$, the possible inputs of $A$ induce a   click $K_{p} (\bar{b})$ in $\cG_B$,
 $p= \binom{\Nn-\Nb}{\Na} $, consisting of all $a\in\powersetS{\Na}{\cards}$, such that 
 $a\subset \bar{b}$.
 Similarly, for $\cG_C$,  the vertices $K_{p} (\bar{c})$ consisting of all $a\in\powersetS{\Na}{\cards}$ such that $a\subset \bar{c}$,  induce a click in $\cG_C$.
\end{lemma}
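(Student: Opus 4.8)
The plan is to argue in two steps: first identify precisely which $\Na$-hands of $A$ are ``possible'' for $B$ once $B$ holds $b$, and then verify that this collection of vertices is pairwise adjacent in $\cG_B$ and count it. The lemma is essentially a bookkeeping consequence of the definitions, so the work is in naming the relevant sets carefully.

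For the first step, recall that a deal is a triple of pairwise disjoint hands $(a,b,c)$ of sizes $\Na,\Nb,\Nc$. Fixing $B$'s hand $b$, an $\Na$-hand $a$ is a possible input for $A$ exactly when $a$ can be completed to a legal deal, i.e. when $a\cap b=\emptyset$ and there is room for a $C$-hand, an $\Nc$-subset of $\cards\setminus(a\cup b)$. The first condition says $a\subseteq\bar b$; the second is automatic, because $|\cards\setminus(a\cup b)|=\Nn-\Na-\Nb=\Nc+\Nr\geq\Nc$. Hence the possible inputs of $A$ are exactly the $\Na$-subsets of $\bar b$, which is the set I will call $K_p(\bar b)$.

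For the second step, I would take two distinct $a,a'\in K_p(\bar b)$ and show $\{a,a'\}\in E(\cG_B)$. This is immediate from the definition of $\cG_B$: an edge joins $a,a'$ whenever some $\Nb$-hand $b'$ satisfies $a,a'\subseteq\bar{b'}$, and here the common hand $b'=b$ is a witness. Equivalently, one may invoke Lemma~\ref{lem:basicAdjacency1}: $a,a'\subseteq\bar b$ gives $|a\cup a'|\leq\Nn-\Nb$, whence $|a\cap a'|=2\Na-|a\cup a'|\geq\Na-\Nc-\Nr$, which is exactly the adjacency condition. Thus every pair in $K_p(\bar b)$ is adjacent, so the set induces a click. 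Its size is the number of $\Na$-subsets of $\bar b$; since $|\bar b|=\Nn-\Nb$, this is $\binom{\Nn-\Nb}{\Na}=p$. The claim for $\cG_C$ follows by the identical argument with $c$ in place of $b$, using the $\cG_C$ adjacency condition and $|\bar c|=\Nn-\Nc$.

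There is no real obstacle here beyond unwinding the definitions of deal and of $\cG_B$. The only point that deserves a line of care is the assertion that every $\Na$-subset of $\bar b$ is genuinely realizable as an input of $A$ (and not merely disjoint from $b$), which is precisely what the inequality $\Nc+\Nr\geq\Nc$ guarantees; this is also why the click size depends only on $\Nb$, through $|\bar b|$, and not on how the remaining cards are split between $C$ and the undealt pile.
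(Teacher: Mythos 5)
Your proposal is correct and follows the same route the paper takes: the paper states Lemma~\ref{lem:neighbrs} without proof, treating it as immediate from the definition of $\cG_B$ (any two $\Na$-subsets of $\bar{b}$ are adjacent with the fixed hand $b$ itself as witness, and the count is $\binom{\Nn-\Nb}{\Na}$ since $|\bar{b}|=\Nn-\Nb$). Your only addition is the explicit check that every $\Na$-subset of $\bar{b}$ is realizable in a legal deal because $\Nn-\Na-\Nb=\Nc+\Nr\geq\Nc$, which is a worthwhile line of care that the paper leaves implicit.
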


We have illustrated the following in the figure of the Introduction.

 \begin{remark}[Subgraphs]
 \label{rem:Clicks}
If $\Nb\leq\Nc$ then   $J^{\Nb+\Nr}(\Nn,\Na)$ is a subgraph of $J^{\Nc+\Nr}(\Nn,\Na)$
on the same set of vertices. Hence,
for each $b\in\powersetS{\Nb}{\cards}$, $c\in\powersetS{\Nc}{\cards}$,
both $K_{p} (\bar{b})$ and $K_{p} (\bar{c})$ induce clicks in $J^{\Nc+\Nr}(\Nn,\Na)$.
Furthermore, if  $b\subseteq c$,  then $K_{p} (\bar{c})\subseteq K_{p} (\bar{b})$.
 \end{remark}
 
\subsection{Johnson graphs}
\label{sec:strGA}

Johnson graphs have been thoroughly studied, see Appendix~\ref{sec:Johnson}. 
We recall some basic notions here,
which are especially relevant to this paper. 


The vertices of a \emph{Johnson graph} $J(n,m)$ consist of 
  the $m$-element subsets of an $n$-element set; two vertices are adjacent when the intersection of the two vertices consists of $(m-1)$-elements.
We need the  distance $d$ version, $J^d(n,m)$ of Definition~\ref{def:jg}. When $d=1$,  $J^d(n,m)= J(n,m)$.

Let $\delta(a,a')$ denote the distance between vertices $a,a'$ in $J(n,m)$. Then, $\delta(a,a')=k$
iff $|a\cap a'|=m-k$. Or, in terms of symmetric difference, $\delta(a,a')=k$
iff $|a\triangle a'|=2k$.
One can show by induction that $J(n,m)$ has 
diameter $\min\set{m, n-m}$. Thus, for all  $d\geq \min\set{m, n-m}$,
$J^d(n,m)$ is the complete graph on ${{n}\choose{m}}$ vertices.

It is easy to see and well-known that   $J(n,m)$ is isomorphic to $J(n,n-m)$.
The  same holds for the distance $d$ version.

\begin{lemma}
\label{lem:isomorphGenJ}
The following are isomorphic graphs $J^d(n,m)\cong  J^d(n,n-m)$.
\end{lemma}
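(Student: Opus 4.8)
The plan is to exhibit an explicit bijection between the vertex sets of $J^d(n,m)$ and $J^d(n,n-m)$ that preserves adjacency in both directions, namely the complementation map $\phi(a) = \cards \setminus a$ (where $\cards$ is a fixed $n$-element ground set). This is well-defined as a map from $m$-subsets to $(n-m)$-subsets and is clearly a bijection, since complementation is an involution up to the obvious identification. The entire content of the lemma is therefore that $\phi$ respects the adjacency relation, and the cleanest route is to reduce adjacency to a statement about symmetric differences, which behave transparently under complementation.

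First I would recall from Definition~\ref{def:jg} that $a,a'$ are adjacent in $J^d(n,m)$ exactly when $m-d \leq |a\cap a'|$. The key step is to re-express this condition in a form invariant under complementation. Since $|a\cap a'| = m - \tfrac{1}{2}|a\triangle a'|$ for two $m$-sets, adjacency is equivalent to $|a\triangle a'| \leq 2d$. Now the crucial observation is that $\phi(a)\triangle\phi(a') = (\cards\setminus a)\triangle(\cards\setminus a') = a\triangle a'$: complementing both arguments leaves the symmetric difference unchanged. Hence $|\phi(a)\triangle\phi(a')| = |a\triangle a'|$, so the symmetric-difference form of the adjacency criterion transfers verbatim. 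Applying the same reformulation in the target graph $J^d(n,n-m)$ — where two $(n-m)$-sets $u,u'$ are adjacent iff $|u\triangle u'|\leq 2d$ — shows that $\{a,a'\}\in E(J^d(n,m))$ if and only if $\{\phi(a),\phi(a')\}\in E(J^d(n,n-m))$.

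I expect no genuine obstacle here; the only thing requiring a line of care is the arithmetic identity $|a\cap a'| = m - \tfrac12|a\triangle a'|$, valid precisely because both $a$ and $a'$ have size exactly $m$, together with the matching identity $|u\cap u'| = (n-m) - \tfrac12|u\triangle u'|$ for the $(n-m)$-sets on the other side. One should note that the factor of $\tfrac12$ is harmless because $|a\triangle a'|$ is always even for equal-cardinality sets, so the symmetric-difference threshold $2d$ is an integer constraint in both graphs. Combining the bijection $\phi$ with the established adjacency equivalence yields a graph isomorphism, completing the proof; the $d=1$ instance recovers the classical fact $J(n,m)\cong J(n,n-m)$ mentioned just above.
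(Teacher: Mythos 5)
Your proof is correct and uses essentially the same isomorphism as the paper, namely the complementation map $a \mapsto \cards\setminus a$; the paper verifies adjacency preservation by computing $|\bar{a}\cap\bar{b}|$ directly from $|a\cap b|$, while you route through the invariance of $|a\triangle a'|$ under complementation, which is merely a different (and arguably cleaner) bookkeeping of the same check. No gap.
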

\begin{proof}
Consider vertices  $a,b$ of $J^d(n,m)$, and their complements $\bar{a},\bar{b}$.
Thus, $|a|=|b|=m$, and $|\bar{a}|=|\bar{b}|=n-m$. The isomorphism $f$ is $f(a)=\bar{a}$
and  $f(b)=\bar{b}$. By definition,  $m-d \leq |a \cap b| \leq m -1$ iff $(a,b)\in E(J^d(n,m))$.
Let $k= |a \cap b|$. Then, $|\bar{a} \cap \bar{b}| = n-m-k$,  
hence,  $n-m-d \leq n-m-k \leq n-m -1$, so   $(\bar{a},\bar{b})\in E(J^d(n,n-m))$, and the lemma follows.
\end{proof}

\begin{remark}[Maximal clicks]
\label{rm:maximalityClicks}
There are two  families of maximal cliques in  $J(n,m)$.
 For the first, take all $n-m+1$ of the $m$-subsets that contain a fixed $(m-1)$-subset; 
 for the second, take the $m$-subsets of a fixed set of size $m+1$. 
 When $n=2m$ the cliques in these two families have the same size. 
 Maximality of the cliques is implied by Erd\"os--Ko--Rado Theorem~\cite[Chapter 6]{godsil_meagher_2015}.
 In the case of $J^d(n,m)$, we have already encountered one family in Lemma~\ref{lem:neighbrs}.
 For each $(m+d)$-subset
 $\bar{b}$, there is a 
 click in  $J^d(n,m)$, denoted  $K_{p} (\bar{b})$.
 The vertices of $K_{p} (\bar{b})$ are all $m$-subsets of $\bar{b}$.
 We will encounter the other family as well, $K'_p(b)$.
A click $K'_p(b)$ is obtained by taking
  the $m$-subsets that contain a fixed $(m-d)$-subset $b$.
\end{remark}

We  recall a simple but useful \emph{shifting}
technique in Johnson graphs, and even more generally in intersecting set 
families~\cite{godsil_meagher_2015}, we use the following version.
For a hand $a$, and cards $i,j$, with  $i\not\in a$, $j\in a$, 
$$
a_{ij}=(a\setminus j)\cup \set{i},
$$
denoted by an arc 
$
a \stackrel{ij}{\longrightarrow} a_{ij}.
$
Notice that, $\set{a,a_{ij}} \in E(J(n,m))$, and if $a'$ is reachable from $a$ by $d$ arcs, then 
$\set{a,a'} \in E(J^d(n,m))$. 

For a hand $s$, we say that $a'$ is $s$-\emph{reachable} from $a$ if there
is a directed path from $a$ to $a'$ defined by a (possibly empty) sequence of arcs $\stackrel{ij}{\longrightarrow}$, all of them with $i\in s$. (For the following cf.~\cite[Lemma 1]{RamrasD2011}).

\begin{lemma}
\label{binInterG0}
Let  $a\in V(K_{p} (\bar{b}))$.
Let  $s=\bar{b}\setminus a$. Thus, $|s|=d$.
Then,  $V(K_{p} (\bar{b}))$ is the set of $s$-reachable vertices from $a$.
\end{lemma}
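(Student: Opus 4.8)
The plan is to prove the two inclusions separately, after recording the relevant cardinalities. Since $K_{p}(\bar{b})$ is the click of \emph{all} $m$-subsets of $\bar{b}$ and $|\bar{b}|=m+d$ (here $m=\Na$ and $d=\Nc+\Nr$), the set $s=\bar{b}\setminus a$ satisfies $|s|=d$, $a\cap s=\emptyset$, and $\bar{b}=a\cup s$. I would then show that a hand $a'$ is $s$-reachable from $a$ if and only if $a'\in V(K_{p}(\bar{b}))$, i.e. $a'$ is an $m$-subset of $\bar{b}$.

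For the ``only if'' direction I would argue by induction on the length of the arc sequence. Each arc $\stackrel{ij}{\longrightarrow}$ replaces one card by another, hence preserves the hand size at $m$; moreover, if the current hand lies inside $\bar{b}$ and the added card satisfies $i\in s\subseteq\bar{b}$, then the new hand still lies inside $\bar{b}$. Since the initial hand $a$ is contained in $\bar{b}$, every $s$-reachable hand is an $m$-subset of $\bar{b}$, as required.

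For the ``if'' direction I would exhibit an explicit arc sequence. Let $a'$ be any $m$-subset of $\bar{b}$. Because $a'\subseteq\bar{b}=a\cup s$ and $a'\setminus a$ is disjoint from $a$, we get $a'\setminus a\subseteq s$; and since $|a|=|a'|=m$ we have $k:=|a'\setminus a|=|a\setminus a'|$. Writing $a\setminus a'=\set{j_1,\dots,j_k}$ and $a'\setminus a=\set{i_1,\dots,i_k}$ (so each $i_\ell\in s$), I would apply the arcs $\stackrel{i_\ell j_\ell}{\longrightarrow}$ in turn, producing $a=a_0,a_1,\dots,a_k=a'$. All added cards lie in $s$, so $a'$ is $s$-reachable provided each arc is legal.

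The argument is mostly bookkeeping; the one place that needs care — and the closest thing to an obstacle — is verifying that the swapping sequence stays legal throughout, namely that at step $\ell$ the card $j_\ell\in a\setminus a'$ being removed is still present (it is distinct from the previously removed $j_1,\dots,j_{\ell-1}$) and the card $i_\ell\in a'\setminus a$ being added is not yet present (it is distinct from the previously added $i_1,\dots,i_{\ell-1}$ and lies in $s$, which is disjoint from $a$). This follows from the distinctness of the $i_\ell$ and of the $j_\ell$ together with $a\cap s=\emptyset$; alternatively one can phrase the ``if'' direction as an induction on $k=\delta(a,a')$, peeling off a single legal swap at each step, which makes the legality check immediate.
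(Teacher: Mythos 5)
Your proof is correct and follows essentially the same route as the paper: the core of both arguments is the explicit swap sequence $a=a_0 \stackrel{i_1 j_1}{\longrightarrow} a_1 \cdots \stackrel{i_k j_k}{\longrightarrow} a_k=a'$ obtained by pairing off the cards of $a\setminus a'$ with those of $a'\setminus a\subseteq s$. You are somewhat more thorough than the paper, which only exhibits this sequence (i.e.\ proves that every vertex of $K_p(\bar{b})$ is $s$-reachable) and leaves both the converse inclusion and the legality of each swap implicit; your extra checks are correct and fill in exactly those routine details.
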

\begin{proof}
First, notice that  $a$ is $s$-reachable from itself.
Now, 
let $a'$ be any other vertex of $K_{p} (\bar{b})$.
If $2d'= |a\triangle a'|$, $d'\leq d$, order the cards in $a\setminus a'$ as $x_1,\ldots, x_{d'}$
and those in  $a'\setminus a$ as $x'_1,\ldots, x'_{d'}$.
Then, $a'$ is reachable from $a$ by the path
$$
a=a_0 \stackrel{x'_1 x_1}{\longrightarrow} a_1 \stackrel{x'_2 x_2}{\longrightarrow} a_2 \cdots\stackrel{x'_{d'} x_{d'}}{\longrightarrow}
a_{d'}=a'.
$$
\end{proof}

We will need the following claims.

\begin{lemma}
\label{binInterG}
Let $K_{p} (\bar{b})$  be a click of $J^{d}(n,m)$.
For any  set of $k$ vertices, $1\leq k <  p $, $\set{a_1,\ldots,a_k} \subset K_{p} (\bar{b})$,   
there exists a set $s \subset \bar{b}$, $ |s|= d$,
such that
  for any $a_i$,  $a_i \cap s \neq \emptyset$. 
\end{lemma}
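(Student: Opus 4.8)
The plan is to turn this into a short counting argument by passing to complements inside $\bar{b}$. Recall from Remark~\ref{rm:maximalityClicks} (and Lemma~\ref{lem:neighbrs}) that the click $K_{p}(\bar{b})$ of $J^d(n,m)$ consists of \emph{all} $m$-subsets of the $(m+d)$-element set $\bar{b}$, so that $p=\binom{m+d}{m}$. Every given vertex $a_i$ is therefore an $m$-subset of $\bar{b}$, and the natural object to track is its complement within $\bar{b}$.

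First I would, for each $i$, set $t_i=\bar{b}\setminus a_i$, a $d$-subset of $\bar{b}$. The key observation is that for any $d$-subset $s\subseteq\bar{b}$ we have $a_i\cap s=\emptyset$ if and only if $s=t_i$: since $|a_i|+|s|=m+d=|\bar{b}|$, two such subsets of $\bar{b}$ are disjoint exactly when they partition $\bar{b}$, which forces $s=\bar{b}\setminus a_i=t_i$. Consequently $a_i\cap s\neq\emptyset$ holds precisely when $s\neq t_i$, and what we must produce is a single $d$-subset $s\subseteq\bar{b}$ that differs from every $t_i$.

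The remaining step is pigeonhole on the $d$-subsets of $\bar{b}$. Their total number is $\binom{m+d}{d}=\binom{m+d}{m}=p$. The sets $t_1,\ldots,t_k$ account for at most $k<p$ of them, so at least one $d$-subset $s\subseteq\bar{b}$ is distinct from every $t_i$; by the equivalence above this $s$ satisfies $a_i\cap s\neq\emptyset$ for all $i$, which is exactly the claim.

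There is no real obstacle here: the only thing to get right is the complementation bijection between the $m$-subsets of $\bar{b}$ and their $d$-element complements, which is what makes the hypothesis $k<p=\binom{m+d}{d}$ line up precisely with the count needed. It is worth flagging where strictness of $k<p$ enters: if $k=p$ the $t_i$ could exhaust all $d$-subsets of $\bar{b}$, leaving no admissible $s$, which reflects that the \emph{entire} click cannot be met by a single $d$-set.
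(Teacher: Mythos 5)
Your proof is correct, and it reaches the same witness as the paper but justifies it by a genuinely different, more elementary route. Under your complementation bijection, choosing a $d$-subset $s\subseteq\bar{b}$ distinct from all the $t_i=\bar{b}\setminus a_i$ is exactly the paper's move of picking a vertex $a\in K_{p}(\bar{b})$ outside $\set{a_1,\ldots,a_k}$ (possible since $k<p$) and setting $s=\bar{b}\setminus a$; where you diverge is in \emph{why} this $s$ hits every $a_i$. The paper invokes its shifting machinery (Lemma~\ref{binInterG0}): every vertex of $K_{p}(\bar{b})$ other than $a$ is $s$-reachable from $a$, and each arc of such a path inserts an element of $s$, so every $a_i\neq a$ meets $s$. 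You instead observe directly that inside $\bar{b}$, with $|a_i|+|s|=m+d=|\bar{b}|$, disjointness of $a_i$ and $s$ forces them to partition $\bar{b}$, i.e.\ $s=t_i$; hence any $s$ avoiding all the $t_i$ works. Your argument is self-contained and arguably cleaner for this lemma alone, while the paper's phrasing reuses the reachability lemma it has already proved and needs again elsewhere (e.g.\ in Lemma~\ref{cor:d1min} and the minimally-informative direction of Theorem~\ref{def:prColoring}), so nothing extra has to be checked there. Your closing remark about $k=p$ being sharp is also right and matches the paper's implicit use of strictness.
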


\begin{proof}
Pick $a\in K_{p} (\bar{b})$  not in $\set{a_i}$.
Let  $s=\bar{b}\setminus a$, $|s|=d$.
Since  $K_{p} (\bar{b})$ is the set of $s$-reachable vertices from $a$ (Lemma~\ref{binInterG0}),
 all  other vertices  in 
 $K_{p} (\bar{b})$ are $s$-reachable from $a$, $s=\bar{b}\setminus a$.
 And hence, for the subset $\set{a_i}$ of those vertices, we have that 
   for any $a_i$,  $a_i \cap s \neq \emptyset$. 
%
\end{proof}

In particular, when $d=1$, the following holds.

\begin{lemma}
\label{cor:d1min}
Consider $J(n,m)$ and any $K_{m+1} (\bar{b})$.
For any  set of $k$ vertices, $1\leq k \leq  m+1 $, $\set{a_1,\ldots,a_k} \subseteq K_{m+1} (\bar{b})$,
it holds  that  $|\cap a_i |=m+1 -k$. 
\end{lemma}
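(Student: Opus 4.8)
The plan is to exploit the explicit description of the maximal click $K_{m+1}(\bar b)$ recalled in Remark~\ref{rm:maximalityClicks}: since $d=1$ here, the set $\bar b$ has size $m+1$, and the vertices of $K_{m+1}(\bar b)$ are precisely the $m$-subsets of $\bar b$. Each such $m$-subset is obtained from the $(m+1)$-set $\bar b$ by deleting exactly one element, so the map sending a vertex $a\subseteq\bar b$ to the unique element of $\bar b\setminus a$ is a bijection between $V(K_{m+1}(\bar b))$ and $\bar b$; this is what accounts for the click having exactly $m+1$ vertices.

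First I would fix the distinct vertices and write $a_i=\bar b\setminus\set{x_i}$ for the corresponding deleted elements $x_i\in\bar b$. Since the $a_i$ are pairwise distinct and the deletion map is injective, the elements $x_1,\dots,x_k$ are pairwise distinct, hence $|\set{x_1,\dots,x_k}|=k$. Then I would compute the intersection directly:
\[
\bigcap_{i=1}^{k} a_i \;=\; \bigcap_{i=1}^{k}\bigl(\bar b\setminus\set{x_i}\bigr) \;=\; \bar b\setminus\set{x_1,\dots,x_k},
\]
and, since $|\bar b|=m+1$ while exactly $k$ distinct elements are removed, this forces $|\bigcap_i a_i|=(m+1)-k$, as claimed. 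The hypothesis $k\le m+1$ guarantees this quantity is nonnegative and that $k$ distinct vertices are available in a click of size $m+1$.

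There is essentially no serious obstacle here, as the statement is a direct specialization of the $d=1$ case to the maximal clicks of the first family. The only point requiring care is the bookkeeping that distinct vertices correspond to distinct deleted elements, so that the removed set has size \emph{exactly} $k$ rather than merely at most $k$; this is precisely where injectivity of the deletion map is used. One could alternatively deduce the result from the $s$-reachability characterization of $K_p(\bar b)$ in Lemma~\ref{binInterG0}, but the complementation bijection above gives the most transparent route.
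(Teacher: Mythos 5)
Your proof is correct, but it takes a different route from the paper's. The paper proves this lemma with the shifting machinery it has just introduced: it orders the vertices $a_1,\ldots,a_k$, connects consecutive ones by arcs $a_i \stackrel{x'_i x_i}{\longrightarrow} a_{i+1}$, and argues by induction on $i$ that $|a_1\cap\cdots\cap a_i|=m+1-i$, each new vertex cutting the running intersection down by exactly one element. Your argument instead uses the complementation bijection inside the click: every vertex of $K_{m+1}(\bar b)$ is $\bar b$ minus a single element, distinct vertices delete distinct elements, and so $\bigcap_i a_i=\bar b\setminus\set{x_1,\ldots,x_k}$ has size exactly $(m+1)-k$ by De Morgan. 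Your version is more direct and self-contained --- it makes explicit the injectivity bookkeeping that the paper's induction uses only implicitly (namely that the element dropped at step $i+1$ still lies in the running intersection, precisely because the deleted elements are pairwise distinct). What the paper's approach buys in exchange is uniformity: the same shifting/arc formalism is reused in Lemma~\ref{binInterG0}, Lemma~\ref{binInterG}, and the additive-number-theoretic safety arguments of Section~\ref{sec:modularAlgo}, so proving this special case by shifting keeps a single technique running through the paper, whereas your bijection argument is specific to the $d=1$ clicks of the first family.
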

\begin{proof}
Consider the $a_i$ vertices in order $a_1,\ldots, a_{k}$, and the shiftings
$$
a_1 \stackrel{x'_1 x_1}{\longrightarrow} a_2 \stackrel{x'_2 x_2}{\longrightarrow} a_3 \cdots a_{k-1}\stackrel{x'_{k-1} x_{x-1}}{\longrightarrow}
a_{k},
$$
where 
$a_{i+1}\setminus a_i = x'_i$ 
and   $a_i\setminus a_{i+1}= x_i$.
Thus, by induction on $i$, for each $i\geq 1$, $| a_1\cap a_2\cap  \ldots \cap a_i| = m+1-i$. 
\end{proof}


\subsection{Protocol as vertex coloring of a Johnson graph}
\label{sec:code}

Consider a protocol $P_A$ for signature $(\Na,\Nb,\Nc)$, with $\Nn=\Na+\Nb+\Nc+\Nr$.
In light of Theorem~\ref{th:basicJohns}, 
we take the view of 
 $P_A$  as a vertex coloring, 
$P_A: \powersetS{\Na}{\cards} \rightarrow \MA$.
For vertex $(A,a)\in\cI$,   $P_A(a)$ is the message $M\in \MA$, sent by $A$ when she has input $a$.
We assume that $P_A$ is surjective.
The set of $A$-vertices colored $M$
is  $ P_A^{-1}(M)$.\footnote{Thus,  $ P_A^{-1}(M)$ is equivalent to an ``announcement'' by $A$ in the terminology
of~\cite{Albert2005SafeCF}, or the ``alternative hands" for $A$, in the notation of~\cite[Proposition 24]{DitmarschRC03}.}

Recall that a vertex coloring of a graph is \emph{proper} if each pair of adjacent vertices have different colors.
The following theorems  reformulate  the informative and safety
notions of Definitions~\ref{def:inform} and~\ref{def:safProt}.  

\begin{theorem}[Informative characterization]
\label{def:prColoring}
Let $P_A: \powersetS{\Na}{\cards} \rightarrow \MA$
be a protocol.
\begin{itemize}
\item
\label{color:proper}
$P_A$ is informative if and only if $P_A$ is a {proper vertex coloring} of $J^{\Nc+\Nr}(\Nn,\Na)$.
\item
\label{color:properMin}
When $\Nc+\Nr\geq 1$, $P_A$ is minimally informative if and only if 
for each $b\in\powersetS{\Nb}{\cards}$
there is  some edge  $\set{a,a'}$ in the click $K_{p} (\bar{b})$ of  $J^{\Nc+\Nr}(\Nn,\Na)$,
such that $ P_A(a)\neq P_A(a')$.

\end{itemize}
\end{theorem}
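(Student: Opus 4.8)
The plan is to translate both statements into combinatorial conditions on the single graph $\cG_B = J^{\Nc+\Nr}(\Nn,\Na)$ supplied by Theorem~\ref{th:basicJohns}, using the fact (Lemma~\ref{lem:neighbrs}) that the $A$-neighbourhoods $K_{p}(\bar b)$ are precisely its maximal cliques, where $|\bar b| = \Nn - \Nb = \Na + \Nc + \Nr$. The guiding observation is that an input edge $\{(A,a),(B,b)\}\in\cI$ is exactly the assertion $a\in K_{p}(\bar b)$, and a decision function $\delta_B(b,M)$ must return a single answer that is simultaneously correct for \emph{every} $a\in K_{p}(\bar b)$ carrying the message $M$. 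So each statement reduces to a property of how $P_A$ colors the vertices of a fixed clique.

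For the informative case, write $d = \Nc + \Nr$. If $P_A$ is informative, take any edge $\{a,a'\}$ of $\cG_B$; by definition there is $b$ with $a,a'\subseteq\bar b$, so both lie in $K_{p}(\bar b)$. Were $P_A(a)=P_A(a')=M$, the single value $\delta_B(b,M)$ would have to equal both $a$ and $a'$, forcing $a=a'$; hence adjacent vertices get distinct colors and the coloring is proper. Conversely, if $P_A$ is a proper coloring then on every clique $K_{p}(\bar b)$ it is injective, so for each $b$ and each occurring message $M$ there is exactly one $a\in K_{p}(\bar b)$ with $P_A(a)=M$; defining $\delta_B(b,M)$ to be that vertex recovers $a$ on every input edge, so $P_A$ is informative.

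The minimally informative case is where the work lies, and it hinges on the threshold in Lemma~\ref{binInterG}. A color class restricted to $K_{p}(\bar b)$ is a set of vertices $S_M = \{a\in K_{p}(\bar b): P_A(a)=M\}$, and a decision $\delta_B(b,M)=s$ with $|s|=d$ is legitimate precisely when $s$ meets every $a\in S_M$. If for some $b$ the clique $K_{p}(\bar b)$ were monochromatic, then a single $s$ with $|s|=d$ would have to hit all $\Na$-subsets of $\bar b$; but $|\bar b| = \Na + d$, so $|\bar b\setminus s|\geq \Na$ and $\bar b\setminus s$ contains an $\Na$-subset $a^{*}\in K_{p}(\bar b)$ disjoint from $s$, contradicting $a^{*}\cap s\neq\emptyset$. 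Hence minimal informativeness forces a bichromatic edge in each clique. For the converse, a bichromatic edge in $K_{p}(\bar b)$ means no color class is all of $K_{p}(\bar b)$, so every occurring $S_M$ satisfies $|S_M| < p$; Lemma~\ref{binInterG} then supplies an $s\subset\bar b$ with $|s|=d$ meeting every vertex of $S_M$, and setting $\delta_B(b,M)=s$ makes the protocol minimally informative.

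The main obstacle is getting the boundary of Lemma~\ref{binInterG} exactly right: the lemma guarantees a single $d$-element hitting set only for strict subsets of a clique ($k<p$), and the forward direction of the minimal case is precisely the complementary statement that the full clique ($k=p$) admits no such hitting set. Checking this relies on the identity $|\bar b| = \Na + \Nc + \Nr$ together with $p = \binom{\Na+\Nc+\Nr}{\Na}\geq 2$ (valid since $\Nc+\Nr\geq 1$), which guarantees each clique has at least two vertices so that ``monochromatic'' and ``no bichromatic edge'' coincide.
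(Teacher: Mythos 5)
Your proposal is correct and takes essentially the same approach as the paper: the informative equivalence is the same injectivity-on-cliques argument, and for the minimally informative case you use Lemma~\ref{binInterG} for the converse and, for necessity, the same counting fact that $|\bar b|=\Na+\Nc+\Nr$ leaves an $\Na$-subset of $\bar b$ disjoint from any $(\Nc+\Nr)$-set $s$. One harmless slip: the cliques $K_p(\bar b)$ are maximal in $J^{\Nc+\Nr}(\Nn,\Na)$ but are not \emph{precisely} its maximal cliques (see Remark~\ref{rm:maximalityClicks}); your argument never actually uses that claim, so nothing breaks.
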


\begin{proof}
The first condition (informative)
is clearly necessary for the protocol to be informative; if
there is a vertex $(B,b)$ such that two neighbours $(A,a),(A,a')$ have the same color, $M$,
then $B$ cannot distinguish them, produces the same output, $\delta_B(b,M)$. 
Conversely, if all vertices in $K_p(\bar{b})$ have
different colors, then $B$ with hand $b$ will learn  the hand of $A$.
More formally, there is a function $\pi_b$ of the colors of the $A$-neighbour of $(B,b)$, for each 
 $(B,b)\in\cI$,  known a priori to $B$, such that $\pi_b(M)=a$ when $P_A(a)=M$. 
The decision function for $B$ is $\delta_B(b,M)=\pi_b(M)$.

The second condition (minimally informative) is defined only when $\Nc+\Nr\geq 1$,
and hence $K_{p} (\bar{b})$ has at least two vertices.
The condition is clearly necessary, otherwise, when $B$ has input $b$, he 
 will output the same value on all of $A$ possible hands (and there are at least two), independently of what the hand of $A$ is. 
 If $B$'s output is a set $s$, $|s|=\Nn-\Na-\Nb$, then it could be that the input of $A$
 was actually $a\subset \cards \setminus s$, $|a|=\Na$. 
Conversely, let $V_M\subset K_{p} (\bar{b})$ be the subset of vertices $a_i$ such that $P_A(a_i)=M$.
Notice that $0< |V_M|< p$, since there is 
 an edge $\set{a,a'}\in E(K_{p} (\bar{b}))$ with
$\chi(a)\neq\chi(a')$. 
By Lemma~\ref{binInterG} 
there exists a set $s\subset \bar{b}$, $ |s|= \Nc+\Nr=\Nn-\Na-\Nb$, 
such that
  for any $a_i$,  $a_i \cap s \neq \emptyset$. 
Thus, we may define $\delta_B(b,M)=s$.
 \end{proof}

\begin{remark}[Informative]
\label{rm:informaBasic}
Some observations of the informative reformulation.
\begin{itemize}
\item Each edge of $J^{\Nc+\Nr}(\Nn,\Na)$ is in some click
$K_{p} (\bar{b})$. Thus, $ P_A$ being a \emph{proper} vertex coloring is equivalent
to the property that for
  all edges  $\set{a,a'}\in E(K_{p} (\bar{b}))$, it holds that $P_A(a)\neq P_A(a')$, for any such click.
  In contrast, the minimally informative property requires only that not all edges of each click
  have both endpoints colored equally.

\item By Lemma~\ref{lem:isomorphGenJ},  
  $J^{\Nc+\Nr}(\Nn,\Na) \cong  J^{\Nc+\Nr}(\Nn,\Nn-\Na)$, thus  
 there is an informative protocol   for one if and only if there is an informative
 protocol for the other. 
 This equivalence does not generally hold for minimally informative protocols,
 e.g.  the protocol $\chi_2$ of  Section~\ref{sec:minInfTransf}.
 
 The reason is that a click $K_p(\bar{b})$  in $J^{\Nc+\Nr}(\Nn,\Nn-\Na)$
 translates into a click $K'_p(b)$ in  $J^{\Nc+\Nr}(\Nn,\Na)$ (see Remark~\ref{rm:maximalityClicks}).
When $\Nc+\Nr=1$ safety is  preserved, see Theorem~\ref{th:dual}.
  \item
If $d=\Nc+\Nr$, $d'=\Nc'+x'$, and
  $d\leq d'$, then  $J^{d}(\Nn,\Na)$ is a subgraph of $J^{d'}(\Nn,\Na)$ (Remark~\ref{rem:Clicks}). Thus,  
if $P_A$ is a proper vertex coloring of  $J^{d'}(\Nn,\Na)$ then it is also a proper vertex coloring of 
$J^{d}(\Nn,\Na)$ (similarly, for $\Nn'>\Nn$). 
\end{itemize}
\end{remark}

We have the following  special case.
 By Lemma~\ref{cor:d1min}, 
for the case when $\Nc+\Nr=1$ (recall Theorem~\ref{th:basicJohns}), we have
that $B$ learns at least one card of $A$.

\begin{lemma}
\label{lemm:bit}
Let $\Nc+\Nr=1$.
For a minimally informative  protocol  $P_A$,
there exists a decision function for $B$,  $\delta_B$, such that when the hand of $A$ is $a$ and $P_A(a)=M$, then
$\delta_B(b,M)=x$, for some $x\in a$.
\end{lemma}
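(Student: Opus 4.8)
The plan is to recognize this lemma as the $\Nc+\Nr=1$ specialization of the minimally informative characterization already proved in Theorem~\ref{def:prColoring}, sharpened by the exact intersection count of Lemma~\ref{cor:d1min}. First I would invoke Theorem~\ref{th:basicJohns} to recall that when $\Nc+\Nr=1$ the graph $\cG_B$ is the ordinary Johnson graph $J(\Nn,\Na)$, and that each clique $K_{p}(\bar{b})$ has size $p=\binom{\Nn-\Nb}{\Na}=\binom{\Na+1}{\Na}=\Na+1$, so we are dealing with cliques $K_{\Na+1}(\bar{b})$ as in Lemma~\ref{cor:d1min}.

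Next I would fix a vertex $(B,b)$ and a message $M$ that is possible for it, and set $V_M=P_A^{-1}(M)\cap K_{\Na+1}(\bar{b})$, i.e.\ the hands $a\subseteq\bar{b}$ of $A$ that are announced as $M$. Since $M$ is possible, $V_M\neq\emptyset$; and since $P_A$ is minimally informative, the clique $K_{\Na+1}(\bar{b})$ carries at least two colors, so no single color class fills the whole clique, giving $|V_M|=k$ with $1\le k\le \Na$. Applying Lemma~\ref{cor:d1min} to the $k$ vertices of $V_M$ yields $|\bigcap_{a\in V_M}a|=\Na+1-k\ge 1$, so there is a card $x$ lying in \emph{every} hand of $A$ that $B$ could be facing once he hears $M$.

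I would then define the decision function by $\delta_B(b,M)=x$ for any such common card $x\in\bigcap_{a\in V_M}a$; this is legitimate because $B$ knows $\cards$, the signature, and $P_A$ a priori, hence can compute $K_{\Na+1}(\bar{b})$, the class $V_M$, and the nonempty intersection. For correctness, note that whenever the true hand of $A$ is $a$ with $P_A(a)=M$, disjointness of the deal forces $a\subseteq\bar{b}$, so $a\in V_M$ and therefore $x\in a$, as required. Equivalently, this is exactly the set $s$ of size $\Nc+\Nr=\Nn-\Na-\Nb$ produced in the converse part of Theorem~\ref{def:prColoring} via Lemma~\ref{binInterG}, which here is the singleton $s=\{x\}$, so the condition $a\cap s\neq\emptyset$ collapses to $x\in a$.

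I do not expect a serious obstacle, as the statement is essentially a corollary; the only point needing care is the uniformity of $\delta_B$ over all possible messages for a given $b$, namely that every color class appearing in $K_{\Na+1}(\bar{b})$ is a \emph{proper} subset so that $k\le\Na$ and the intersection count stays positive. This proper‑subset property is precisely what the minimally informative hypothesis delivers (it rules out a monochromatic clique), so the argument goes through for each possible $M$ simultaneously.
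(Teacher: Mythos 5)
Your proposal is correct and follows exactly the route the paper intends: the paper derives this lemma directly from Lemma~\ref{cor:d1min} applied to the cliques $K_{\Na+1}(\bar{b})$ of $J(\Nn,\Na)$ (via Theorem~\ref{th:basicJohns}), which is precisely your argument, just spelled out in more detail. Your observation that minimal informativeness forces each color class within the clique to be a proper subset (so $k\le\Na$ and the common intersection is nonempty) is the key point, and it matches the paper's reasoning.
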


Recall from Section~\ref{sec:strGA1}  the  graph $\cG_C$.
The vertices of $\cG_C$ consist of all the $A$-vertices of $\cI$.
There is an edge joining two vertices $(A,a),(A,a')$ if and only if there are edges in $\cI$ connecting them
with the same vertex $(C,c)$.  
Then,  $V(\cG_C)=V(\cG_B)=\powersetS{\Na}{\cards}$, and
for two distinct hands $a,a'$ of size $\Na$,  $\{a,a'\} \in E(\cG_C)$ iff 
$\exists c \in \powersetS{\Nc}{\cards}$ such that 
$a,a' \subseteq \bar{c} = \cards-c$. Namely, we have the graph $J^{\Nb+\Nr}(\Nn,\Na)$,
where $K_p(\bar{c})$ induces a click, for every  $c\in\powersetS{\Nc}{\cards}$.
In the following the set of colors of vertices of a click is denoted,
 $P_A(K_p(\bar{c}))=\set{M ~|~ P(a)=M, a\in K_p(\bar{c})}$.

\begin{theorem}[Safety characterization]
\label{th:safetyMain}
Let $P_A: \powersetS{\Na}{\cards} \rightarrow \MA$.
The following  conditions are equivalent.
\begin{enumerate}
\item \label{th:safetyMain1}
$P_A$ is {safe}.
\item  \label{th:safetyMain2}
Consider any  $c\in\powersetS{\Nc}{\cards}$, and any
  $y\in\bar{c}$. For each $M\in P_A(K_p(\bar{c}))$, 
there exist $a,a'\in  K_p(\bar{c})$ with $P_A(a)=P_A(a')=M$ such that $y\in a\triangle a'$.
\end{enumerate}
\end{theorem}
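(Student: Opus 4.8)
The plan is to prove the equivalence by showing that condition~\ref{th:safetyMain2} is nothing more than Definition~\ref{def:safProt} rewritten in the language of the graph $\cG_C = J^{\Nb+\Nr}(\Nn,\Na)$ and its clicks. Concretely, I would set up two dictionary identifications relating the objects appearing in the safety definition to those appearing in condition~\ref{th:safetyMain2}, and then observe that, once these identifications are in place, the two statements coincide word for word (with $x$ renamed to $y$). Since no quantifier is reordered, a single pass establishes both implications simultaneously.

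The first identification is that, for a fixed $c\in\powersetS{\Nc}{\cards}$, an edge $\set{(A,a),(C,c)}$ belongs to $\cI$ if and only if $a\in K_{p}(\bar{c})$, i.e.\ $a\subseteq\bar{c}$ with $|a|=\Na$. By definition of the input complex, $\set{(A,a),(C,c)}\in\cI$ exactly when it is a face of some facet $\set{(A,a),(B,b),(C,c)}$, that is, when there is a hand $b$, $|b|=\Nb$, making $(a,b,c)$ a deal. Such a $b$ exists precisely when $a\subseteq\bar{c}$ and $|\bar{c}\setminus a|\geq\Nb$; since $|\bar{c}\setminus a| = \Nn-\Nc-\Na = \Nb+\Nr\geq\Nb$, the room for $B$ is always available, so the condition reduces to $a\in K_{p}(\bar{c})$ (Lemma~\ref{lem:neighbrs}, Theorem~\ref{th:basicJohns}). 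The second identification follows immediately: a message $M$ is \emph{possible} for $(C,c)$ --- meaning there is a deal $\set{(A,a),(B,b),(C,c)}\in\cI$ with $P_A(a)=M$ --- if and only if there is an $a\in K_{p}(\bar{c})$ with $P_A(a)=M$, i.e.\ $M\in P_A(K_{p}(\bar{c}))$.

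With these two identifications, I would substitute directly. Definition~\ref{def:safProt} asks that for every $(C,c)$, every $x\in\bar{c}$, and every possible $M$, there be edges $\set{(A,a),(C,c)}$ and $\set{(A,a'),(C,c)}$ in $\cI$ with $P_A(a)=P_A(a')=M$ and $x\in a\triangle a'$; replacing ``edge $\set{(A,a),(C,c)}\in\cI$'' by ``vertex $a\in K_{p}(\bar{c})$'' and ``possible $M$'' by ``$M\in P_A(K_{p}(\bar{c}))$'' turns this verbatim into condition~\ref{th:safetyMain2}. The only step requiring genuine care --- and the one I would flag as the crux --- is the extendability claim in the first identification: that every $a\in K_{p}(\bar{c})$ really arises in a full deal with $c$, so that it witnesses a \emph{possible} message rather than a merely graph-theoretic one. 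This is exactly where the slack hypothesis $\Nr\geq 0$ (with $\Nn=\Na+\Nb+\Nc+\Nr$) is used, guaranteeing $\Nb+\Nr\geq\Nb$; everything else is a mechanical unwinding of definitions.
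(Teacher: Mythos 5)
Your proof is correct and takes essentially the same route as the paper, which likewise treats the equivalence as a direct unwinding of Definition~\ref{def:safProt}, using the one-to-one correspondence between hands $c$ and $C$-vertices of $\cI$ together with the observation that $a\cap c=\emptyset$ is equivalent to the existence of a deal in $\cI$ containing both $(A,a)$ and $(C,c)$. Your explicit verification of the extendability step --- that $|\bar{c}\setminus a|=\Nn-\Nc-\Na=\Nb+\Nr\geq\Nb$ always leaves room for $B$'s hand, so every $a\in K_{p}(\bar{c})$ witnesses a genuinely possible message --- spells out carefully the detail the paper leaves implicit in its one-line argument.
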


\begin{proof}
The   equivalence is straightforward, recalling the one-to-one correspondence between hands $c$, $|c|=\Nc$
and $C$-vertices of $\cI$, and observing that $c\cap a$ for an $A$-hand $a$ is equivalent
to the existence of a deal $I\in\cI$ including $(A,a),(C,c)$. Indeed, for any  $a$  in $P_A^{-1}(M)$, there exists
one $c$, $|c|=\Nc$ with $c\cap  a=\emptyset$.

%
%
\end{proof}

\begin{remark}[Safety]
\label{rm:subgraphsM}
\begin{itemize}
\item
Informative requires $P_A$ to be a proper vertex coloring of $J^{\Nc+\Nr}(\Nn,\Na)$,
while safety requires that $P_A$ is not a proper vertex coloring of $J^{\Nb+\Nr}(\Nn,\Na)$.
\item
Thus, by Remark~\ref{rem:Clicks}, a protocol can be informative and safe only
if $\Nb>\Nc$. In this case, while $K_p(\bar{c})$ induces a click in $J^{\Nb+\Nr}(\Nn,\Na)$,
it does not induce a click in  $J^{\Nc+\Nr}(\Nn,\Na)$, by Remark~\ref{rm:maximalityClicks}.
(cf.~\cite[Lemma 2]{Albert2005SafeCF}).
\item
Joining color classes $P_A^{-1}[M]\cup P_A^{-1}[M']$ of a protocol preserves
safety, but not necessarily informative properties
(see Section~\ref{sec:minInfRussCards}).
\end{itemize} 
\end{remark}

Notice that it could be that there is a hand $c$ for $C$, 
for which some message $M$ is never sent by $P_A$.
But as was observed in~\cite[Proposition 6]{Cordon-FrancoDF12},  
 with protocols that send the sum of the cards modulo $\Nn$ 
this is not the case, see  Section~\ref{sec:sBitTransm} and~\ref{sec:modularAlgo}.


 The following argument is similar to~\cite[Proposition 29]{DitmarschRC03}.

\begin{lemma}
\label{lm:doubleCsafety}
Let  $\Na\geq 2$, $\Nc\geq 1$,  $P_A$ be a safe protocol.
 Consider any $M$.
For any vertex $a\in P_A^{-1}(M)$,     any  $z\in a$, 
and any card $y$,
there must be another vertex $a'\in P_A^{-1}(M)$ that also includes  card $z$,
and $y\in a\triangle a'$.
\end{lemma}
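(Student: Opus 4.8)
The plan is to argue by contradiction, using only the single-card safety of $P_A$ in the form of Theorem~\ref{th:safetyMain}. Fix $a\in P_A^{-1}(M)$, a card $z\in a$, and a card $y\neq z$. The conclusion is equivalent to the statement that among the hands of $P_A^{-1}(M)$ containing $z$, at least one has the opposite $y$-membership to $a$; so suppose instead that \emph{every} $a'\in P_A^{-1}(M)$ with $z\in a'$ agrees with $a$ on whether it contains $y$. I would then exhibit a hand $c$ for $C$ with $z\in\bar{c}$ for which hearing $M$ lets $C$ decide the presence of a single card in $A$'s hand, contradicting Definition~\ref{def:safProt}.

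First I would treat the case $y\in a$, where the assumption says that every $M$-colored hand containing $z$ also contains $y$. Choose $c$ of size $\Nc$ with $y\in c$, $z\notin c$, and disjoint from some $M$-colored hand $a^{*}$. Such an $a^{*}$ with $y\notin a^{*}$ exists, since not all $M$-colored hands can contain $y$: otherwise $C$ holding any hand avoiding $y$ would already learn that $A$ holds $y$, violating safety for the card $y$. The required $c$ then exists by a counting argument — one only needs $\Nc-1$ further cards outside $a^{*}\cup\{y,z\}$, which is feasible because $\Nb+\Nr\geq 1$ and $\Na\geq 2$. Now $M$ is a possible message for $c$ (witnessed by $a^{*}$) and $z\in\bar{c}$, so safety applied to the card $z$ produces an $M$-colored hand $a_1\subseteq\bar{c}$ with $z\in a_1$; since $y\in c$ we get $y\notin a_1$, so $a_1$ contains $z$ but not $y$, contradicting the assumption. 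Taking $a'=a_1$ settles this case.

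The delicate case, which I expect to be the \textbf{main obstacle}, is $y\notin a$: here the assumption reads that \emph{no} $M$-colored hand contains both $z$ and $y$, and I must instead produce one that does. The mirror of the construction above does not work directly, because single-card safety only forces, for a given $c$, a compatible $M$-colored hand containing $z$ and (separately) one containing $y$, not a single hand containing both — a negative correlation between two cards is a two-card phenomenon that single-card safety need not preclude. The natural attempt is to choose $c$ so that every $M$-colored hand compatible with $c$ is pushed to contain $y$, whence by the contrapositive of the assumption to avoid $z$, letting $C$ conclude that $A$ lacks $z$ and again contradicting safety; but arranging such a $c$ amounts to blocking, with only $\Nc$ cards, all compatible $M$-colored hands that avoid $y$, and it is precisely here that the hypotheses $\Na\geq 2$, $\Nc\geq 1$ together with the click structure of $\cG_C=J^{\Nb+\Nr}(\Nn,\Na)$ (Lemma~\ref{lem:neighbrs}, Remark~\ref{rm:maximalityClicks}) must be used with care. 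I expect the crux of the argument — and any additional structural assumption the statement relies on — to concentrate in this step.
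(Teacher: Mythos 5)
Your first case ($y\in a$) is correct, and it is essentially the paper's own argument. The paper supposes, for contradiction, that every $M$-colored hand avoiding $y$ also avoids $z$, picks such a hand, and places a $C$-hand $c$ with $y\in c$ inside its complement; then $C$ hearing $M$ would know that $A$ lacks $z$, contradicting Definition~\ref{def:safProt}. Your version runs the same contradiction in the opposite direction (safety for the card $z$ produces a compatible $M$-colored hand containing $z$, which must then avoid $y\in c$), and it is in fact slightly more careful than the paper's, which never explicitly keeps $z$ out of $c$ even though the safety violation needs $z\in\bar{c}$. (Also, as you implicitly assume, $y\neq z$ is forced: for $y=z$ the conclusion is vacuously unattainable.)

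The second case ($y\notin a$), which you flagged as the main obstacle and left open, cannot be closed: the statement is false there, and your diagnosis --- that a negative correlation between two cards is a two-card phenomenon which single-card safety need not preclude --- is exactly what a counterexample realizes. Take the signature $(3,3,1)$, $\Nr=0$, $\Nn=7$, and the protocol $\chi_{modn}$ (announce the sum of the hand modulo $7$), which is safe by Theorem~\ref{th:safetyRussCards} and is the classical solution. Its color class $\chi_{modn}^{-1}(0)$ consists of the five hands $\set{0,1,6}$, $\set{0,2,5}$, $\set{0,3,4}$, $\set{1,2,4}$, $\set{3,5,6}$. Now take $a=\set{0,3,4}$, $z=4\in a$, $y=5\notin a$: the lemma demands another $0$-colored hand containing both $4$ and $5$, but such a hand would need third card $\equiv-(4+5)\equiv 5\pmod 7$, so none exists. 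Hence Lemma~\ref{lm:doubleCsafety} holds only in the case $y\in a$ that you (and the paper) actually prove; the paper's claim that ``the other case is similar'' is an error, since the two cases are genuinely asymmetric --- one can always delete $y$ from an $M$-colored hand while keeping $z$, but not always insert it. Note also that the flaw propagates: the proof of Theorem~\ref{th:dual} invokes the lemma precisely in this false case, and the same example defeats that theorem as well, because the complementary protocol for $(4,2,1)$ is unsafe --- $C$ holding card $4$ and hearing announcement $0$ sees only the possible hands $\set{1,2,5,6}$ and $\set{0,3,5,6}$, and so learns that $A$ holds both $5$ and $6$. So your proof covers exactly the true part of the statement, and the obstacle you isolated is not a gap in your argument but in the lemma itself.
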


\begin{proof}
Suppose that $y\in a$ (the other case is similar).
Since $P_A$ is safe, there must be another vertex in $P_A^{-1}(M)$ that does not include $y$.
Consider all vertices in $P_A^{-1}(M)$ that do not include $y$, denoted  $V_{\bar y}$ .
If one of them also includes $z$ we are done.

Thus, suppose that none of them contains $z$. Let $a'\in V_{\bar y}$ be such that $y\not\in a'$.
Thus, $z$ is also not in $a'$ (else we are done).  

Consider a $\Nc$-hand $c$
 that contains $y$ in the complement of $a'$. Thus,  $C$ with hand $c$ may hear $M$, but if so she knows
that $A$ does not have card $z$, a contradiction to the safety of $P_A$.

%
%
%
%
\end{proof}

%
 Notice that $\Nc\geq 1$ is necessary,
  otherwise,  Lemma~\ref{lm:doubleCsafety} may not hold; an
  example is  protocol $\chi_1$ of Theorem~\ref{th:russCardPrWeak}. And clearly, $\Na\geq 2$ is also necessary for the lemma to hold.
  
\begin{remark}[The assumption $\Na\geq 2$]
\label{lem:simpleSafe1}
A simple consequence of Theorem~\ref{th:safetyMain} is that we should concentrate on the case that $\Na\geq 2$.
If $\Na=1$ then a safe protocol $P_A$ must always send the same message $M$.
Otherwise, if $P_A(y)\neq P_A(y')$ for $y,y'\in\cards$, then
when $C$ has a hand $c$, such that $y,y'\in \bar{c}$, then when $C$  hears $P_A(y)$
she knows that $A$ does not have card $y'$. Thus a safe protocol $P_A$ cannot be minimally informative,
and thus cannot be informative either.
\end{remark}


\subsection{Chromatic numbers}
\label{sec:chromNumb}

For an informative, not necessarily  safe protocol, 
the minimum number of bits to communicate her full hand
is $\log_2 \chi$, where $\chi$ is the chromatic number of $J^{\Nc+\Nr}(\Nn,\Na)$. 
In the
case of $\Nc+\Nr = 1$, namely a Johnson graph, computing the chromatic number  is an 
important open question e.g.~\cite[Chapter 16]{godsil_meagher_2015}.
It is however known that 
$\Nn/2 \leq \chi(J(\Nn,\Na))\leq \Nn$ and hence, when $\Nc+\Nr=1$, 
the number of bits necessary and sufficient for an informative protocol is $\Theta(\log\Nn)$.
We show in
Section~\ref{sec:modInfo-gen-c} that in general, 
the number of bit is $\Theta( (\Nc+\Nr)\log \Nn)$.

The \emph{safe chromatic number} of $J^{d}(n,m)$, $d=\Nc+\Nr$, denoted $\chi^{sf}$, is the 
cardinality of the smallest color set $\MA$ for which the graph has a safe proper coloring,
or $\infty$ if no such coloring exists.  We will see cases where it is $\infty$ in
Theorem~\ref{th:mainBoundCases}.
Recall that the safety property depends on $\Nc$, which is why we
have to specify that $d=\Nc+\Nr$. For the same $\Nc$, we have that
$ \chi\leq \chi^{sf}$. As we shall see in Section~\ref{sec:classRusCrds}, there are cases where 
$ \chi< \chi^{sf}$, namely, $\chi(J(7,3))=6$ and $\chi(J(7,3))=7$.

Similarly, $\chi_{min}$ is the  cardinality of the smallest color set $\MA$ for which the graph has a minimal informative
coloring, and if we require additionally safety, then it is denoted $\chi_{min}^{sf}$.
Thus, $\chi^{min}\leq \chi_{min}^{sf}\leq \chi^{sf}$.
We will see that $ \chi_{min}^{sf}$ can be much smaller than $\chi^{sf}$.
In an extreme case, for $\Nn$ even, we have that
$\chi_{min}^{sf} (J(\Nn,\Nn/2))=2$  (Corollary~\ref{cor:minInfo1}),
while $\chi^{sf} (J(\Nn,\Nn/2))\geq \chi (J(\Nn,\Nn/2)) >\Nn/2$  
(since $\chi(J(n,m)) \geq\max{ \set{n-m+1,m+1}}$, see Appendix~\ref{sec:Johnson}).

\section{Russian cards problems}
\label{sec:genRussCards}

In Section~\ref{sec:basicsRusCard} we
present the generalized Russian cards problem and discuss its relation with
our information transmission problem. 
Some general bounds that will be useful later on are in Section~\ref{sec:boundsRusCard}.

\subsection{The problem statement}
\label{sec:basicsRusCard}

The Russian cards problem has signature $(3,3,1)$, and the 
generalized Russian cards problem  has
signature $(\Na,\Nb,\Nc)$.
The players $A$, $B$ and $C$ each draw $\Na$, $\Nb$ and $\Nc$ cards, respectively, from the deck $\cards$ of 
$\Nn=\Na + \Nb + \Nc+\Nr$ cards.
In this context,  \emph{two-step protocols} have been thoroughly studied, usually when $\Nr=0$. 
First $A$ and then $B$ makes an announcement, both
 heard by $C$. 
If a protocol $P_A$ is informative and safe, and $\Nr=0$,  one may  assume that $P_B$,
the protocol of $B$, is simply to announce $C$'s set of cards.
First, since the protocol $P_A$ is informative, $B$ knows the cards of $A$ after $A$'s announcement, 
and hence he can deduce the cards of $C$. After the announcement  $P_B$, $A$ can deduce the cards
of $B$. The announcement made
by $B$ is the set of cards of $C$, and hence does not give any new information to $C$.

We consider also the case where $\Nr >0$.  Then on input $b$, once $B$ learns the hand $a$ of $A$, 
he announces $\cards\setminus (a\cup b)$, a superset of $C$'s hand. 
We work under this security assumption for $\Nr>0$.
Namely, that we allow  $C$ to learn  only cards that are not held by either A or B.
Among the cards held jointly by $A$ and $B$, she does not learn who holds which card.
With this clarification, we continue  to focus only in $P_A$, the protocol of $A$.
 We have the following consequence of 
Theorem~\ref{def:prColoring}.

\begin{theorem}
\label{th:mainRussC}
There is a $2$-step solution for the Russian problem $(\Na,\Nb,\Nc)$,  $\Nn=\Na + \Nb + \Nc+\Nr$ with $A$ making
the first announcement,
 if and only if
 there is a  safe proper coloring of $J^{\Nc+\Nr}(\Nn,\Na)$.
\end{theorem}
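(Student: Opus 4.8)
The plan is to prove Theorem~\ref{th:mainRussC} as a biconditional, establishing each direction separately and leaning heavily on the characterizations already in hand, namely Theorem~\ref{def:prColoring} (informative $\iff$ proper coloring of $J^{\Nc+\Nr}(\Nn,\Na)$) and Theorem~\ref{th:safetyMain} (the safety characterization). The essential observation is that a $2$-step solution, under the security convention fixed in Section~\ref{sec:basicsRusCard}, decomposes into two independent requirements: the first announcement $P_A$ must let $B$ recover $A$'s hand (informativeness) while revealing nothing to $C$ about individual card ownership (safety), and the second announcement $P_B$ must be harmless to $C$. So the real content is that the first-round protocol $P_A$ is exactly a safe, informative $P_A$, i.e.\ a safe proper coloring of $J^{\Nc+\Nr}(\Nn,\Na)$, and the second round $P_B$ can always be chosen canonically.

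First I would prove the forward direction ($2$-step solution $\Rightarrow$ safe proper coloring). Given a $2$-step solution with $A$ announcing first, $B$ must determine $A$'s full hand from his own hand $b$ together with $P_A(a)$, since only then can $B$ make the required second announcement of $\cards\setminus(a\cup b)$ (a superset of $C$'s hand). This is precisely the informative property of Definition~\ref{def:inform}, so by the first item of Theorem~\ref{def:prColoring} the coloring $P_A$ is a proper vertex coloring of $J^{\Nc+\Nr}(\Nn,\Na)$. The safety of the overall protocol, restricted to what $C$ learns about card ownership among $A$ and $B$, is exactly the safety of $P_A$ in the sense of Definition~\ref{def:safProt}; here I would note that the second announcement $P_B = \cards\setminus(a\cup b)$ is, by construction, a set of cards $C$ already knows she does not hold (equivalently, a superset of her own hand), so it contributes no new ownership information and the whole protocol is safe iff $P_A$ alone is safe.

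For the converse ($\Leftarrow$), suppose $P_A$ is a safe proper coloring of $J^{\Nc+\Nr}(\Nn,\Na)$. By Theorem~\ref{def:prColoring} it is informative, so on any deal $(a,b,c)$ player $B$ can compute $a$ from $(b, P_A(a))$ via the function $\pi_b$ constructed in that theorem's proof. I would then define $B$'s protocol by $P_B = \cards\setminus(a\cup b)$, which $B$ can compute since he now knows both $a$ and $b$. This set is a superset of $C$'s hand $c$, and $A$, knowing $a$ and hearing $P_B$, recovers $b = \cards\setminus(a\cup P_B)$ when $\Nr=0$ (and the intended hand information when $\Nr>0$ under the stated convention), so the solution is informative for both $A$ and $B$. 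Safety for $C$ across both rounds follows because $P_A$ is safe by hypothesis and $P_B$ reveals only cards outside $a\cup b$, consistent with the security assumption of Section~\ref{sec:basicsRusCard}.

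The main obstacle, and the step requiring the most care, is pinning down that the second announcement $P_B$ genuinely leaks nothing to $C$ in the $\Nr>0$ regime, where $P_B=\cards\setminus(a\cup b)$ is strictly larger than $c$. Here one must invoke the relaxed security convention adopted in Section~\ref{sec:basicsRusCard}, under which $C$ is permitted to learn the identity of cards held by nobody but not who among $A,B$ holds a card in $a\cup b$; the argument is that $P_B$ as a set is determined once $C$ fixes her own hand and hears which ownership-ambiguous cards remain, so no individual $A$-or-$B$ assignment is disclosed. Once this convention is granted, the equivalence is a clean bookkeeping of the two characterization theorems rather than a new combinatorial fact, so I expect the proof to be short, with essentially all the technical weight already discharged by Theorems~\ref{def:prColoring} and~\ref{th:safetyMain}.
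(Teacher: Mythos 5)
Your proposal is correct and follows essentially the same route as the paper: the paper states Theorem~\ref{th:mainRussC} as a direct consequence of the informative characterization (Theorem~\ref{def:prColoring}) together with the preceding discussion in Section~\ref{sec:basicsRusCard}, where the second announcement is fixed canonically as $C$'s hand (for $\Nr=0$) or $\cards\setminus(a\cup b)$ (for $\Nr>0$, under the stated security convention), so that all content of a $2$-step solution reduces to $P_A$ being informative and safe. Your write-up merely makes both directions of this reduction explicit, including the observation that overall safety implies safety of $P_A$ alone, which is exactly the bookkeeping the paper leaves implicit.
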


\subsection{General bounds}
\label{sec:boundsRusCard}

In light of Theorem~\ref{th:mainRussC}, we keep on
presenting our results in terms of safe proper colorings of $J^{\Nc+\Nr}(\Nn,\Na)$,
but one should keep in mind that they are all bounds on when there is a
 $2$-step solution for the Russian problem $(\Na,\Nb,\Nc)$,  $\Nn=\Na + \Nb + \Nc+\Nr$ with $A$ making
the first announcement.


Recall that when  $\Nc+\Nr=1$ there are two cases: $\Nc=1,\Nr=0$, and $\Nc=0,\Nr=1$.
Thus, the fallowing has two instantiations.
 The cases $(\Na,\Nb,1)$ and  $(\Nb+1,\Na-1,1)$,
and the cases $(\Na,\Nb,0)$ and $(\Nb+1,\Na-1,0)$.
Remarkably, the following result does not hold for minimally informative protocols
(see Corollary~\ref{cor:minInfo1}).

For a protocol 
$P_A: \powersetS{\Na}{\cards} \rightarrow \MA$, 
the protocol
$\bar{P}_A: \powersetS{\Nn-\Na}{\cards} \rightarrow \MA$
 is defined by
$$
\bar{P}_A(a)= P_A (\bar{a}),
$$
where as usual, $\bar{a}=\cards\setminus a$.

The following shows that there is a safe proper coloring of $J(\Nn,\Na)$ iff there is a 
safe proper coloring of $J(\Nn,\Nn-\Na)$.

\begin{theorem}[duality]
\label{th:dual}
Assume $\Nc+\Nr=1$,
so  $\Nn=\Na + \Nb + 1$.
A protocol $P_A$ is  informative and safe  for $(\Na,\Nb,\Nc)$ 
if and only if the protocol $\bar{P}_A$ is  informative and safe   for  
$(\Nb+1,\Na-1,\Nc)$.
\end{theorem}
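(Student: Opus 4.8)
The plan is to decouple the two properties and prove each equivalence separately. First I would record the bookkeeping: the dual signature $(\Nb+1,\Na-1,\Nc)$ is well defined only when $\Na\geq 2$ (we need $\Nb'=\Na-1\geq 1$), it keeps the same deck size $\Nn=\Na+\Nb+1$, and it satisfies $\Nc'+\Nr'=\Nc+\Nr=1$. Moreover its set of $A$-hands is $\powersetS{\Nn-\Na}{\cards}=\powersetS{\Nb+1}{\cards}$, so $\bar{P}_A$ is indeed a protocol for the dual, with $\Na'=\Nb+1\geq 2$ as well.

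For the informative part I would argue purely by transport of coloring, independently of safety. Since $\Nc+\Nr=1$, Theorem~\ref{def:prColoring} says $P_A$ is informative iff it properly colors $\cG_B=J(\Nn,\Na)$, and likewise $\bar{P}_A$ is informative iff it properly colors $J(\Nn,\Nn-\Na)=J(\Nn,\Nb+1)$. By Lemma~\ref{lem:isomorphGenJ} with $d=1$, complementation $f(a)=\bar a$ is a graph isomorphism $J(\Nn,\Na)\cong J(\Nn,\Nn-\Na)$, and by definition $\bar{P}_A(f(a))=\bar{P}_A(\bar a)=P_A(a)$, so $\bar{P}_A$ is exactly the transport of $P_A$ along $f$. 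Hence one coloring is proper iff the other is.

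For safety the key observation is that complementation turns the safety condition of the dual into a \emph{containing} version of the safety condition of the original, and that Lemma~\ref{lm:doubleCsafety} is precisely the bridge between the \emph{avoiding} and \emph{containing} versions. Concretely, by Theorem~\ref{th:safetyMain}, $P_A$ is safe iff for every $C$-hand $c$ (here $|c|=\Nc\leq 1$), every $x\in\bar c$, and every message $M$ realized on an $\Na$-hand disjoint from $c$, there are two $M$-colored $\Na$-hands disjoint from $c$ that differ on $x$. Writing the same characterization for $\bar{P}_A$ and substituting each dual $A$-hand $w$ (a $(\Nb+1)$-set disjoint from $c$) by its complement $\bar w$ (an $\Na$-set containing $c$), using $\bar{P}_A(w)=P_A(\bar w)$ and $w\triangle w'=\bar w\triangle\bar{w'}$, safety of $\bar{P}_A$ becomes the statement: for every $c$, every $x\in\bar c$, and every $M$ realized on an $\Na$-hand containing $c$, there are two $M$-colored $\Na$-hands containing $c$ that differ on $x$.

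When $\Nc=0$ we have $c=\emptyset$, so ``disjoint from $c$'' and ``containing $c$'' coincide and the two safety conditions are literally identical; hence $P_A$ is safe iff $\bar{P}_A$ is. When $\Nc=1$, say $c=\{z\}$, the conditions genuinely differ (avoiding $z$ versus containing $z$), and here Lemma~\ref{lm:doubleCsafety} does the work: applied to the safe protocol $P_A$ (legitimate since $\Na\geq 2$ and $\Nc=1$), for any $M$-colored $a$ with $z\in a$ and any card $x$ it produces another $M$-colored $a'$ with $z\in a'$ and $x\in a\triangle a'$, which is exactly the containing condition, i.e.\ safety of $\bar{P}_A$. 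For the converse I would exploit the symmetry of the construction: the dual of $(\Nb+1,\Na-1,1)$ is $(\Na,\Nb,1)$ and $\overline{\bar{P}_A}=P_A$, so applying the same Lemma~\ref{lm:doubleCsafety} to the safe protocol $\bar{P}_A$ (using $\Nb+1\geq 2$) yields safety of $P_A$. Conjoining the informative and safety equivalences gives the theorem. The one delicate point, and the main obstacle, is getting the complementation bookkeeping exactly right so that dual safety lands precisely on the ``containing $z$'' clause that Lemma~\ref{lm:doubleCsafety} supplies; once that alignment is seen, both directions follow from a single lemma applied once to $P_A$ and once to $\bar{P}_A$.
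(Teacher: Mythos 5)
Your proof is correct and follows essentially the same route as the paper's: the informative equivalence via the complementation isomorphism of Lemma~\ref{lem:isomorphGenJ}, and the safety equivalence by splitting into the cases $\Nc=1$ and $\Nc=0$, using Lemma~\ref{lm:doubleCsafety} (applied once to $P_A$ and once to $\bar{P}_A$, exactly as the paper does) as the bridge between the avoiding and containing formulations, with the $\Nc=0$ case handled directly without the lemma. Your explicit reformulation of dual safety as a ``containing'' condition and the dual-of-dual symmetry for the converse make the bookkeeping cleaner, but the decomposition and the key lemmas are the same.
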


\begin{proof}
There are two cases: $\Nc=1,\Nr=0$, and $\Nc=0,\Nr=1$.
First we show the equivalence for the informative property, in both cases.

Notice that $\Nn-\Na=\Nb+1$. 
By Lemma~\ref{lem:isomorphGenJ}, we have that  $J(\Nn,\Na)\cong  J(\Nn,\Nn-\Na)$,
under the isomorphism $f(a)=\bar{a}$.
Thus, if $P_A$ is an  informative, i.e.,   proper
vertex coloring of $J(\Nn,\Na)$, then $\bar{P}_A(a)= P_A (f(a))$ is 
a proper vertex coloring of $ J(\Nn,\Nn-\Na)$.

Now, consider the case $\Nc=1,\Nr=0$, and assume that $P_A$ is  safe for $(\Na,\Nb,1)$. 
That is,
for every  card $c\in\powersetS{\Nc}{\cards}$,  $\Nc=1$, 
$y\in\bar{c}$, and $M\in P_A(K_p(\bar{c}))$,
there exists $a,a' \in K_p(\bar{c})$, $P_A(a)=P_A(a')=M$ such that $y\in a\triangle a'$.

To prove that $\bar{P}_A$ is safe, we need to consider a card  $c\in\powersetS{\Nc}{\cards}$, and the vertices of   $K'_p(\bar{c})$
in $J(\Nn,\Nn-\Na)$, which are $\bar{a} \in\powersetS{\Nn-\Na}{\cards}$, such that $\bar{a}\subseteq \bar{c}$.

Let $y\in\bar{c}$, $\bar{a}\in K'_p(\bar{c})$ with
 $\bar{P}_A(\bar{a})=M$. Suppose $y\in a$ (the case when $y\not\in a$ is similar).

 Thus, $P_A(a)=M$ and $c\in a$.
By Lemma~\ref{lm:doubleCsafety} 
there exists $a' \in\powersetS{\Na}{\cards}$, $y\not\in a'$, $P_A(a')=M$,
such that 
$c\in  a'$.

Now, let   $a' \in\powersetS{\Na}{\cards}$, $y\not\in a'$, $P_A(a')=M$, with $c\in a'$.
Then, $c$ is in both $a$ and $a'$, and hence $c$ is in neither $\bar{a}$ nor $\bar{a'}$.
Namely, $\bar{a},\bar{a'}\in K'_p(\bar{c})$. But $\bar{P}_A(\bar{a})=P_A(\bar{a'})=M$.
And we are done, because $y\in \bar{a}\triangle \bar{a'}$.

For the converse, assume $P_A$ is safe for $(\Nb+1,\Na-1,1)= (\Nn-\Na,\Na-1,1)$,
and consider 
 $c\in\powersetS{\Nc}{\cards}$, and the vertices of   $K_p(\bar{c})$
in $J(\Nn,\Na)$, which are ${a} \in\powersetS{\Na}{\cards}$, such that ${a}\subseteq \bar{c}$.

Let $y\in\bar{c}$, ${a}\in K_p(\bar{c})$ with
 $P_A({a})=M$. Suppose $y\in a$ (the case when $y\not\in a$ is similar).

Consider $\bar{a}$, and hence $\bar{P}_A(\bar{a})=P_A(a)$.
Thus, $c\in\bar{a}$.
By Lemma~\ref{lm:doubleCsafety} 
there exists $\bar{a}' \in\powersetS{\Nn-\Na}{\cards}$, $y\not\in \bar{a}'$, $\bar{P}_A(\bar{a}')=M$,
such that 
$c\in \bar{a}'$.

Then, $c$ is in both $\bar{a}$ and $\bar{a}'$, and hence $c$ is in neither ${a}$ nor ${a'}$.
Namely, ${a},{a'}\in K_p(\bar{c})$. But $P_A({a})=P_A({a'})=M$.
And we are done, because $y\in {a}\triangle {a'}$.

Finally, we prove the safety equivalence, for the second case, where $\Nc=0,\Nr=1$.
Solving the weak Russian cards problem for the case $(\Na,\Nb,0)$ is equivalent to solving it for the case 
$(\Nb+1,\Na-1,0)$.
This case is easier, it does not need Lemma~\ref{lm:doubleCsafety}.
If $P_A$ is  safe for $(\Na,\Nb,0)$, then we take $c$ and $\bar{c}$ as the empty set.
Then, for any $y\in\cards$, and $M$,
there exists $a,a'$ such that $P_A(a)=P_A(a')$, 
such that $y\in a\triangle a'$.
Then, $y\in \bar{a}\triangle \bar{a}'$, which is what is needed for $\bar{P}_A$ to be safe, since 
$\bar{P}_A(\bar{a}) = \bar{P}_A( \bar{a}')$.

And the converse is the same. 
If $\bar{P}_A$ is safe, then for every $M$, and any $y$,
it holds $y\in \bar{a}\triangle \bar{a}'$ for some $\bar{a}, \bar{a}'$ of size $\Nn-\Na$ such that $\bar{P}_A(\bar{a}) = \bar{P}_A( \bar{a}')$.
And thus, $y\in a\triangle a'$, with $P_A(a)=P_A(a')$.
%
\end{proof}

For instance, there is  solution for the  $(4,2,1)$ case, 
because it is  equivalent to a solution to $(3,3,1)$, the classic Russian cards case\footnote{
This is the example of~\cite{Albert2005SafeCF},
``we get a 7-line good announcement for (4, 2, 1). It may further be observed that this is the complement of a 7-line good announcement for (3,3,1) as found above (for no apparent reason related to designs)''.
}.
However,  there is no solution
for the  $(2,4,1)$ case, as we show in the next theorem (and was observed in~\cite{Albert2005SafeCF}).
The reason is that in this case we get the graph $J(7,2)$, which has no safe proper coloring.
Thus, while we
 assume that $A$ makes the first announcement; to analyze the other case, one may 
exchange values of  $\Na$ and $\Nb$.
It may be more convenient
that $A$ makes the first announcement, or that $B$ makes it, in terms of both solvability and communication complexity.
  For the first case, a coloring has to be found for $J(\Nn,\Na)$,
and for the second case, one for $J(\Nn,\Nb)$.



\begin{theorem}
\label{th:mainBoundCases}
If $\Nc+\Nr\geq \min\set{\Na,\Nn-\Na}-1$, $\Nc\geq 1$, then
there is no safe proper coloring of $J^{\Nc+\Nr}(\Nn,\Na)$.
\end{theorem}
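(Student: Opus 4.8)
The plan is to split the statement into two clean subcases according to whether the diameter $\min\{\Na,\Nn-\Na\}$ of the underlying graph is realized by $\Na$ or by $\Nn-\Na=\Nb+\Nc+\Nr$, and in each subcase to invoke a fact already established. Throughout I write $d=\Nc+\Nr$ and recall from Theorem~\ref{def:prColoring} that a safe proper coloring of $J^{d}(\Nn,\Na)$ is exactly a safe informative protocol; I argue by contradiction, assuming such a coloring $P_A$ exists. First I dispose of $\Na=1$: by Remark~\ref{lem:simpleSafe1} a safe protocol with $\Na=1$ must be constant, hence is not informative, so no safe proper coloring exists and we may assume $\Na\geq 2$ from here on.

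In the first subcase, $\Na\leq\Nn-\Na$, the hypothesis $d\geq\min\{\Na,\Nn-\Na\}-1$ reads $d\geq\Na-1$, i.e. $\Na-d\leq 1$. Here I would apply Lemma~\ref{lm:doubleCsafety}, whose hypotheses $\Na\geq2$, $\Nc\geq1$ and ``$P_A$ safe'' are all met: choosing any color $M=P_A(a)$, any card $z\in a$ and any card $y$, the lemma produces a \emph{distinct} vertex $a'\in P_A^{-1}(M)$ with $z\in a'$. Then $z\in a\cap a'$ forces $|a\cap a'|\geq 1\geq \Na-d$, so by Definition~\ref{def:jg} the vertices $a,a'$ are adjacent in $J^{d}(\Nn,\Na)$ even though $P_A(a)=P_A(a')$, contradicting properness. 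The key observation is that $\Na-d\leq 1$ is precisely the regime in which one shared card already forces adjacency, which is exactly the monochromatic configuration Lemma~\ref{lm:doubleCsafety} manufactures.

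In the second subcase, $\Na>\Nn-\Na$, we have $\min\{\Na,\Nn-\Na\}=\Nn-\Na=\Nb+\Nc+\Nr$, so the hypothesis becomes $\Nc+\Nr\geq \Nb+\Nc+\Nr-1$, which forces $\Nb\leq1$ and hence $\Nb=1\leq\Nc$ (using $\Nc\geq1$). Since a protocol can be informative and safe only when $\Nb>\Nc$ (Remark~\ref{rm:subgraphsM}), no safe proper coloring exists. If a self-contained argument is preferred at this point, I would instead note that $\Nb\leq\Nc$ makes each $K_p(\bar c)$ a clique of $J^{d}(\Nn,\Na)$ by Remark~\ref{rem:Clicks}, so all its vertices receive distinct colors; no color repeats inside $K_p(\bar c)$, which is incompatible with safety, as safety requires for each card $y\in\bar c$ and each present color a monochromatic pair of $K_p(\bar c)$ splitting $y$.

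The two subcases are exhaustive (the boundary $\Nn=2\Na$ is placed in the first), so in every case no safe proper coloring of $J^{\Nc+\Nr}(\Nn,\Na)$ can exist. The one delicate point, and the step I expect to require the most care, is lining up the inequality $\Na-d\leq 1$ of the first subcase with the precise output of Lemma~\ref{lm:doubleCsafety}; once one sees that the lemma yields two equally colored hands sharing a card, the contradiction with properness is immediate because, in this regime, a single common card suffices for adjacency.
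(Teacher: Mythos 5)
Your proof is correct. For the case $\Na\leq\Nn-\Na$ it is essentially the paper's argument: the paper also invokes Lemma~\ref{lm:doubleCsafety} to manufacture two equally colored hands sharing a card and contradicts properness, since in that regime one common card forces adjacency in $J^{\Nc+\Nr}(\Nn,\Na)$ (you merely fold the paper's separate ``complete graph'' case $d\geq\min\set{\Na,\Nn-\Na}$ into the same computation, and make the $\Na=1$ dispatch via Remark~\ref{lem:simpleSafe1} explicit). Where you genuinely diverge is the decomposition: you split on which of $\Na$, $\Nn-\Na$ realizes the minimum, whereas the paper splits on $d\geq\min$ versus $d=\min-1$ and then asserts, for the latter, that $\set{a,a'}\in E(J^{d}(\Nn,\Na))$ iff $|a\cap a'|\geq 1$. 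That assertion is only valid when the minimum is $\Na$; if $\Na>\Nn-\Na$ and $d=\Nn-\Na-1$, adjacency requires $|a\cap a'|\geq 2\Na-\Nn+1\geq 2$, so two equally colored hands sharing a single card need not be adjacent and the paper's contradiction does not go through in that regime. Your second subcase repairs exactly this: there the hypothesis forces $\Nb=1\leq\Nc$, and Remark~\ref{rm:subgraphsM} (equivalently, Remark~\ref{rem:Clicks}, which makes every $K_p(\bar c)$ a clique of $J^{\Nc+\Nr}(\Nn,\Na)$, so properness forces all-distinct colors on it, which is incompatible with the safety requirement of Theorem~\ref{th:safetyMain}) rules out any safe proper coloring. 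So your route buys completeness --- it covers a configuration the paper's proof glosses over --- at the modest cost of one extra ingredient ($\Nb>\Nc$ necessity) beyond the paper's single lemma.
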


\begin{proof}
Recall that the diameter of $J(\Nn,\Na)$ is $\min\set{\Na,\Nn-\Na}$. 
 If  $\Nc+\Nr=d\geq  \min\set{\Na,\Nn-\Na}$ then $J^d(\Nn,\Na)$ is a complete graph, and each vertex must have
 a different color, so if $A$ announces $M$ then $C$ learns that her hand is the single hand in $\chi^{-1}(M)$.

Assume therefore that 
$d= \min\set{\Na,\Nn-\Na}-1$, $\Na\geq 2$.
We have that  $\set{a,a'}\in E(J^d(\Nn,\Na))$ iff $|a\cap a'|\geq 1$.
If $\chi$ is a safe proper coloring,  consider a vertex $a\in\chi^{-1}(M)$ that includes some card $x$, for some color $M$.
The safety requirement 
  implies that there must be another vertex $a'\in\chi^{-1}(M)$ that also includes  card $x$,
  by Lemma~\ref{lm:doubleCsafety}, since $\Nc\geq 1$.
  A contradiction to the claim that $\chi$ is a proper coloring, because then  $|a\cap a'|\geq 1$.
\end{proof}

The requirement that $\Nc\geq 1$ is needed.
Suppose $\Nc=0$. For the case $\Nn=4,\Na=2,\Nb=1,\Nr=1$,
the following $P_A$ a safe proper coloring of $J(4,2)$, with three messages $0,1,2$.
\phantom{.}\\
$P_A^{-1}[0] = \{01, 23\}$\\
$P_A^{-1}[1] = \{02, 13\}$\\
$P_A^{-1}[2] = \{03, 12\}$


Recall that a protocol can be informative and safe only
if $\Nb>\Nc$ (Remark~\ref{rm:subgraphsM}).  Thus, combining this fact with 
Theorem~\ref{th:mainBoundCases}, we get the following.

\begin{corollary}
\label{cor:mainBoundInfoSafe}
There is no informative and safe protocol if  $\Nc\geq \Nb$ or if $\Nc+\Nr \geq \min\set{\Na,\Nn-\Na}-1$, $\Nc\geq 1$.
\end{corollary}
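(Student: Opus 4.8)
The plan is to prove Corollary~\ref{cor:mainBoundInfoSafe} as a direct conjunction of two previously established facts, since it simply packages together the necessary conditions for a protocol to be both informative and safe. First I would invoke Remark~\ref{rm:subgraphsM}, which states that a protocol can be informative and safe only if $\Nb>\Nc$. The contrapositive is that whenever $\Nc\geq\Nb$, no protocol can be simultaneously informative and safe. This handles the first disjunct of the hypothesis immediately, with no further work required.

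For the second disjunct, I would apply Theorem~\ref{th:mainBoundCases}, which asserts that if $\Nc+\Nr\geq\min\set{\Na,\Nn-\Na}-1$ and $\Nc\geq 1$, then there is no safe proper coloring of $J^{\Nc+\Nr}(\Nn,\Na)$. By the informative characterization (Theorem~\ref{def:prColoring}), being informative is \emph{equivalent} to being a proper vertex coloring of $J^{\Nc+\Nr}(\Nn,\Na)$. Hence, under this hypothesis, there is no coloring that is both proper (informative) and safe, which is exactly the nonexistence of an informative and safe protocol.

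Combining the two cases, whenever either $\Nc\geq\Nb$ holds, or $\Nc+\Nr\geq\min\set{\Na,\Nn-\Na}-1$ with $\Nc\geq 1$ holds, no informative and safe protocol exists. The corollary follows. I expect there to be essentially no obstacle here, as the statement is a corollary precisely because both needed ingredients are already available; the only thing to verify is that the notion of ``informative and safe protocol'' in the corollary lines up with ``safe proper coloring of $J^{\Nc+\Nr}(\Nn,\Na)$'' in the theorem, which is guaranteed by the informative characterization together with the safety characterization (Theorem~\ref{th:safetyMain}).
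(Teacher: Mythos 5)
Your proposal is correct and matches the paper's own argument exactly: the corollary is stated immediately after the sentence ``Recall that a protocol can be informative and safe only if $\Nb>\Nc$ (Remark~\ref{rm:subgraphsM}). Thus, combining this fact with Theorem~\ref{th:mainBoundCases}, we get the following,'' which is precisely your two-case decomposition. Your extra step of checking that ``informative and safe protocol'' aligns with ``safe proper coloring of $J^{\Nc+\Nr}(\Nn,\Na)$'' via Theorems~\ref{def:prColoring} and~\ref{th:safetyMain} is exactly the identification the paper relies on implicitly.
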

 
%
%


There are several particular cases of interest, some previously observed\footnote{Using two different
proof techniques, it was shown that if $\Na\leq \Nc+1$, there is no informative and safe solution ($\Nr=0$), in~\cite[Corollary 2]{Albert2005SafeCF}) and~\cite[Theorem 6]{swansStinson14}.}.

\section{Six messages solutions to the  Russian cards problem}
\label{sec:classRusCrds}

We study the classic Russian cards problem~\cite{DitmarschRC03}, with signature  $(3,3,1)$, and also we
consider  the \emph{weak} variant where $C$ gets no cards at all, $(3,3,0)$, both with  $\Nn=7$.
By Theorem~\ref{th:basicJohns} in both the classic or the weak variant,
we need to consider colorings of the Johnson graph $J(7,3)$.  
These provide  concrete examples of the previous ideas.

The classic Russian cards problem with signature $(3,3,1)$ and $\Nn=7$ 
has been thoroughly studied,
an exhaustive analysis can be found in~\cite{DitmarschRC03}\footnote{Notice that for a given deal, there are 102 ``direct exchanges'' for the Russian cards problem~\cite[Corollary 41]{DitmarschRC03}.
The direct exchanges are characterized in this paper, and the characterization can
be naturally rephrased in our framework.
A direct exchange corresponds in our notation to  a color class, $\chi^{-1}(M)$, the set of hands of $A$ 
on which the protocol sends message $M$. 
}. 
It is well-known that there is a uniform solution  with seven messages (Theorem~\ref{th:safetyRussCards}),
each announcement of the same size. Only one solution is known that we are aware of with six messages~\cite{swansStinson14}, non-uniform. 
We show there is no uniform safe solution with  $6$ announcements (where all except one announcement
are of the same size).

\subsection{Upper bounds: information transmission with 6 messages}
\label{sec:rusCards:upper}

The chromatic number of Johnson graphs has been well studied e.g.~\cite{EBitan96},
but in general, determining the chromatic number of a Johnson graph is an open 
problem~\cite[Chapter 16]{godsil_meagher_2015}.
 It is   known that $\chi(J(7,3))=6$, and hence there is an informative protocol with 6 messages, and no less\footnote{
 The same lower bound is~\cite[Theorem 4]{swansStinson14},  proved by reduction to a combinatorial design theorem.}, 
  by Theorem~\ref{def:prColoring}.  
  We also present an 
 explicit solution below,  which is informative, but not  safe for the weak version,
 and then a solution  that is informative and safe for the weak version, but not for the classic version. 
 At the end we prove there is no uniform solution with $6$ messages for the classic version.

 \begin{theorem}
 \label{th:informaRussCards}
 There is an informative (non-safe) protocol for the Russian cards problem sending $6$ different messages,
 and this is optimal.
 \end{theorem}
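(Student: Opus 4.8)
The plan is to apply the informative characterization of Theorem~\ref{def:prColoring} and reduce the statement to a fact about the chromatic number of $J(7,3)$. For signature $(3,3,1)$ with $\Nn=7$ we have $\Nc+\Nr=1$, so by Theorem~\ref{th:basicJohns} the relevant graph is $\cG_B=J(7,3)$, whose vertices are the $35$ triples of $\cards=\set{0,\ldots,6}$. By Theorem~\ref{def:prColoring} a protocol $P_A$ is informative if and only if it is a proper vertex coloring of $J(7,3)$, and the number of messages used equals the number of colors. Hence the theorem splits into two claims: a proper $6$-coloring of $J(7,3)$ exists, and no proper coloring uses fewer colors, i.e.\ $\chi(J(7,3))=6$.

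For the upper bound I would exhibit an explicit proper $6$-coloring. Two triples are adjacent in $J(7,3)$ exactly when $|a\cap a'|=2$, so a color class is an independent set precisely when it is a \emph{packing} (partial Steiner system): a collection of triples meeting pairwise in at most one point. Thus it suffices to partition the $35$ triples of the $7$-point set into six packings and verify the pairwise-intersection constraints, which is a finite routine check. A convenient organizing idea is to build the classes around Steiner-type packings; one simply lists the six color classes and confirms that within each class any two triples share at most one card.

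For the lower bound (the optimality claim) I would invoke $\chi(J(7,3))=6$, and this is the main obstacle. The clique bound alone is insufficient: by Lemma~\ref{lem:neighbrs} each $K_{p}(\bar b)$ has size $\binom{\Nn-\Nb}{\Na}=\binom{4}{3}=4$, giving only $\chi\ge 4$. The genuinely hard step is ruling out a $5$-coloring; I would either cite the established value $\chi(J(7,3))=6$ from the Johnson-graph literature, or reproduce the argument that a $5$-coloring would force color classes whose sizes (summing to $35$) and packing constraints are jointly infeasible, equivalently a reduction to a combinatorial-design impossibility. Combining the explicit $6$-coloring of the second paragraph with this lower bound yields an informative protocol with exactly $6$ messages that is optimal, as claimed.
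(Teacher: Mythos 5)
Your proposal is correct and takes essentially the same route as the paper: both reduce the statement via Theorem~\ref{def:prColoring} (with Theorem~\ref{th:basicJohns}) to the chromatic number of $J(7,3)$, use an explicit proper $6$-coloring for the upper bound, and for optimality cite the known value $\chi(J(7,3))=6$, which the paper likewise takes from the literature (its footnote points to the combinatorial-design lower-bound argument of Swanson and Stinson). The only differences are cosmetic: the paper actually lists a $6$-coloring while you defer to a routine finite check, and your clique-bound remark slightly undersells the trivial bound (the clique family of triples through a fixed pair already gives $\chi\geq 5$, not just $4$), though your point that cliques alone cannot reach $6$ still stands.
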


 While $\cG_B=J(7,3)$  is the same for the Russian cards problem and for its weak version, 
 for the protocol to be safe one needs to consider the possible inputs of $C$.
In the Russian cards problem, the $C$-vertices are
$\powersetS{\Nc}{\cards}$, $\Nc=1$,
while in the weak version,  there is a single $C$-vertex,
 $(C,\emptyset)$.
 We will show,  that the graph $J(7,3)$
 has a safe proper coloring in either version.
Thus, there are  solutions to the weak problems $(3,3,0)$ and $(4,2,0)$,
and similarly, for the  classic problems,  $(3,3,1)$  and
 the $(4,2,1)$,
  by the duality Theorem~\ref{th:dual}.\footnote{
 This explains the issue raised in~\cite{Albert2005SafeCF} (Example p.12):
``   Applying this construction, we get a 7-line good announcement for $(4,2,1)$. 
It may further be observed that this is the complement of a 7-line good announcement for $(3,3,1)$ 
as found above (for no apparent reason related to designs).''
}

In the case of the {weak Russian cards problem},
 there is a solution using $6$ different messages. Namely,
for $J(7,3)$, the informative and safe chromatic number w.r.t. $V(\cG_C)=\emptyset$ 
is equal to the chromatic number $\chi^p=\chi=6$.
Complementary protocols (Theorem~\ref{th:dual}) solve the cases $(3,3,1)$
and $(4,2,1)$, 
using $6$ different messages, and this is optimal in terms of the number of messages (colors).

\begin{theorem}
\label{th:russCardPrWeak}
There is a solution for the weak Russian cards problem with  6 messages, and this is optimal.
\end{theorem}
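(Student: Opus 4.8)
The plan is to exhibit an explicit proper 6-coloring of $J(7,3)$ that is also safe with respect to the single empty $C$-vertex, and to combine it with the already-established lower bound. The lower bound half is free: Theorem~\ref{def:prColoring} identifies informative protocols with proper colorings of $\cG_B = J(7,3)$, and since $\chi(J(7,3))=6$ is known, no informative protocol can use fewer than $6$ messages. So the entire content is the upper bound, namely constructing a coloring that is simultaneously informative (proper) and safe.

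For the safety side, I would first specialize the safety characterization of Theorem~\ref{th:safetyMain} to the weak case. Here $\Nc=0$, $\Nr=1$, so there is a single $C$-vertex $(C,\emptyset)$, and $\bar c = \cards$ is the whole deck; the unique click $K_p(\bar c)$ is all of $V(\cG_C) = \powersetS{3}{\cards}$. The safety condition then reads: for every card $y\in\cards=\{0,\dots,6\}$ and every message $M$ actually used, there must exist two hands $a,a'$ with $P_A(a)=P_A(a')=M$ and $y\in a\triangle a'$. Equivalently, no color class may "agree" on the membership of any fixed card — each color class must contain a hand holding $y$ and a hand missing $y$, for every $y$. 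This is a clean, finite, checkable condition once the six color classes are written down.

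The main step, then, is to produce six colour classes partitioning the $\binom{7}{3}=35$ triples of $J(7,3)$ so that (i) no two triples in the same class share two cards (properness in $J(7,3)$ means adjacent iff $|a\cap a'|=2$, so each class must be an independent set, i.e. any two triples meet in at most one point — a partial Steiner system), and (ii) each class realizes both membership and non-membership of every card $y$. A natural starting point is the Fano plane on $7$ points: its $7$ lines are a set of pairwise-$1$-intersecting triples, but that is a clique-cover flavored object, not directly a colour class. Instead I would look for a known $6$-coloring, e.g. the one underlying $\chi(J(7,3))=6$ from the coding-theory / design literature cited in the paper, and simply verify the two conditions by hand on the $35$ triples. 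Writing the classes explicitly (five of size $6$ and one of size $5$, to total $35$) and tabulating, for each $y$, that every class contains a triple with $y$ and one without, completes the argument.

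The hard part will be the explicit combinatorial construction: finding six independent sets (each a collection of triples pairwise meeting in $\le 1$ point) that (a) partition all $35$ triples and (b) each spread every card both in and out. Properness alone — an equitable or near-equitable partition into partial Steiner systems — is already delicate on $35$ vertices, and layering safety on top further constrains the classes, since a class that, say, never omits card $3$ would violate safety. My expectation is that the coloring witnessing $\chi(J(7,3))=6$ can be chosen symmetrically enough (for instance via an orbit of a cyclic $\mathbb{Z}_7$ action, which permutes the seven cards transitively) that safety follows almost automatically: under a transitive symmetry, if some colour class holds card $y$ and misses card $y'$, applying the symmetry shows every card is both held and missed within the (suitably permuted) classes. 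I would therefore aim for a cyclic construction, reducing the verification to checking a single class up to rotation and then confirming the partition and properness globally.
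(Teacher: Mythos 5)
Your skeleton matches the paper's: the lower bound is exactly the paper's (informative $\Leftrightarrow$ proper coloring of $J(7,3)$ by Theorem~\ref{def:prColoring}, plus $\chi(J(7,3))=6$), and your specialization of Theorem~\ref{th:safetyMain} to the weak case --- every color class must, for each card $y$, contain a hand holding $y$ and a hand missing $y$ --- is correct. The gap is in the upper bound, whose entire content is the explicit coloring that you never produce, and the construction route you commit to is provably impossible. A proper $6$-coloring of $J(7,3)$ compatible with the cyclic $\mathbb{Z}_7$ action does not exist: since $7$ is prime, every $\mathbb{Z}_7$-orbit on the $35$ triples has size exactly $7$, so there are only $5$ orbits; and since $7$ does not divide $|S_6|$, any action of $\mathbb{Z}_7$ on the $6$ color classes fixes each class setwise, forcing every class to be a union of orbits --- hence at most $5$ nonempty classes, contradicting $\chi(J(7,3))=6$. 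Worse still, the orbit of $\set{0,1,2}$ contains the adjacent pair $\set{0,1,2},\set{1,2,3}$, so it cannot sit inside any independent class at all. Thus ``checking a single class up to rotation'' is a dead end, not a simplification, and the symmetry transfer you invoke (class holds $y$, apply rotation) moves membership \emph{across} classes, not within one, so it would not yield safety even if an invariant coloring existed.

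Your fallback --- take a known coloring witnessing $\chi(J(7,3))=6$ and verify the two conditions --- is also not automatic, and this is precisely where the paper does its work. The paper first exhibits a natural proper $6$-coloring $\chi$ and observes it fails weak safety: the class $\chi^{-1}[5]=\set{134,156,235,246}$ contains no hand with card $0$, so on hearing that message $C$ learns $A$ lacks card $0$. The missing idea is the repair step: move the single hand $012$ from $\chi^{-1}[0]$ into $\chi^{-1}[5]$, checking that $012$ meets each of $134,156,235,246$ in at most one card (so properness is preserved) and that its removal does not destroy safety of class $0$. Without either this repair or an explicit coloring written down and verified, your upper bound is asserted rather than proved. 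A minor further point: your expected class profile $5,6,6,6,6,6$ is unsupported --- the paper's safe weak coloring has profile $5,5,6,6,6,7$, and the paper's impossibility result for near-uniform profiles (Theorem~\ref{th:lowerB6}) concerns the classic case $\Nc=1$, so it neither confirms nor rules out your profile for the weak case; you should not build the construction around it without proof.
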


First, it is not hard to design  proper $6$-colorings of $J(7,3)$, the following is an example:
\phantom{.}\\
$\chi^{-1}[0] = \{012, 034, 056, 135, 146, 236, 245\}$\\
$\chi^{-1}[1] = \{016, 024, 035, 123, 145,  256,  346\}$\\
$\chi^{-1}[2] = \{015, 023, 046, 124, , 136, 345\}$\\
$\chi^{-1}[3] = \{013, 026,  045,   125, 234,  356\}$\\
$\chi^{-1}[4] = \{014, 025, 036, 126, 456\}$\\
$\chi^{-1}[5] = \{134, 156, 235, 246\}$\\
For each $i\in\set{0,1,2,3,4}$ the coloring is safe, because there are vertices $a,a'\in \chi^{-1}[i]$, 
with $x\in a$ and $x\not\in a'$, for each $x\in\set{0,\ldots,6}$. 
However,  this coloring  is not {safe} w.r.t. $\Nc=0$, i.e. for the weak version, because
there is no $a\in  \chi^{-1}[5]$, with $0\in a$. That is, if $C$ listens to announcement $5$ she learns
that $A$ does not have card $0$.

 To obtain a safe coloring w.r.t. $\Nc=0$, we may  fix $\chi^{-1}[5]$ by adding a vertex $a$
 that contains card $0$, but taking care that $a$ is not adjacent to any vertex already
 there.
 We construct a safe variant $\chi_1$ of $\chi$, 
by removing the vertex $a_1 = \{012\}$ from  $\chi^{-1}[0]$ and adding it to $\chi_1^{-1}[5]$.
We get the following    $6$ coloring, safe w.r.t. $\Nc=0$, because by removing
$a_1$ from  $\chi^{-1}[0]$ we have not disrupted the safety of the announcement $0$.

$\,$\\
$\chi_1^{-1}[0] = \{034, 056, 135, 146, 236, 245\}$\\
$\chi_1^{-1}[1] = \{016, 024, 035, 123, 145,  256,  346\}$\\
$\chi_1^{-1}[2] = \{015,  023, 046, 124, 136,  345\}$\\
$\chi_1^{-1}[3] = \{013, 026, 045, 125,  234,  356\}$\\
$\chi_1^{-1}[4] = \{ 014, 025, 036, 126, 456\}$\\
$\chi_1^{-1}[5] = \{012, 134, 156, 235, 246\}$\\

However, the previous coloring, is not  safe w.r.t. the Russian cards problem, where the $C$-vertices are
$\powersetS{\Nc}{\cards}$, $\Nc=1$.
For example,  if $C$ 
has card $1$ and $A$ announces color $5$, then $C$ knows that $A$ has hands $235$ or $246$,
and can deduce that $A$ has card $2$ and also that she does not have card $0$.
Also,  if $C$  has card $0$ and $A$ announces color $4$ then she knows that $A$ has hands $126$ or $456$,
and can deduce that $A$ does not have $3$ and she has $6$. 

There is an informative and safe coloring of the Russian cards problem with six messages~\cite{swansStinson14},
$\,$\\
$\chi_2^{-1}[0] = \{013, 026, 045, 124, 156, 235, 346 \}$\\
$\chi_2^{-1}[1] = \{015, 023, 046,  126, 134,  245, 356 \}$\\
$\chi_2^{-1}[2] = \{016, 024, 035, 123,  145, 256\}$\\
$\chi_2^{-1}[3] = \{012,036,   135, 234, 456  \}$\\
$\chi_2^{-1}[4] = \{ 056,  034, 125, 146,  236\}$\\
$\chi_2^{-1}[5] = \{ 014, 025,  136, 246, 345\}$\\

\subsection{Impossibility of uniform solutions}
\label{sec:rusCards:lower}

Here we discuss a new technique to study the structure of six message solutions to the Russian cards problem.
The previous solution with six messages, partitions the $35$ hands of $A$ into color classes of size $5,5,5,6,7,7$.
 We prove now that this is optimal for a six message \emph{uniform} solution, namely, two color classes
 must be of size 7. Thus, there is no solution with six messages
 with classes of sizes $5,6,6,6,6,6$ nor $5,5,6,6,6,7$. 
 
\begin{theorem}
\label{th:lowerB6}
There is no uniform solution to the Russian cards problem with six messages.
\end{theorem}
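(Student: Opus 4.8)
The plan is to work throughout with proper colourings of $J(7,3)$ (bringing in safety only at the very end) and to exploit the $5$-cliques of Remark~\ref{rm:maximalityClicks}: for each $2$-subset $P\subset\cards$ the five $3$-hands containing $P$ form a clique $K'_p$, so any proper colouring gives them five distinct colours, leaving exactly one colour $\mu(P)$ \emph{absent} at $P$. First I would settle the class-size combinatorics. A colour class is an independent set of $J(7,3)$, i.e.\ a packing of triples (pairwise meeting in at most one point), so it has at most $\binom{7}{2}/3=7$ triples, with equality exactly for the Fano planes $S(2,3,7)$. A \emph{uniform} $6$-colouring has five classes of a common size $k$ and one of size $35-5k$; the constraints $k\le 7$ and $1\le 35-5k\le 7$ force $k=6$. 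Hence the theorem reduces to excluding the single size profile $(6,6,6,6,6,5)$.

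Next I would read the local structure off $\mu$. Double counting the incidences (pair, colour present there) gives $|\mu^{-1}(i)|=21-3s_i$ for the class of size $s_i$. For a size-$6$ class this number is $3$, and since the class is a packing it covers an even number of edges at every vertex, so the three uncovered edges $\mu^{-1}(i)$ form an even-degree graph, necessarily a triangle $T_i$. Thus each size-$6$ class together with $T_i$ is a Fano plane $F_i$, i.e.\ the class equals $F_i$ minus the line $T_i$. The five triangles $T_1,\dots,T_5$ are edge-disjoint and, with the $6$-edge graph $\mu^{-1}(0)$, partition $E(K_7)$; correspondingly the size-$5$ class covers exactly $\bigcup_i T_i$.

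The global step is to collide this with the classical fact that there are at most two pairwise-disjoint systems $S(2,3,7)$ (there is no large set on $7$ points). Each missing line $T_i$ receives a colour $c_i\neq i$. If $c_i=c_j=0$ then $F_i$ and $F_j$ share no triple; if $c_i=j\in\{1,\dots,5\}$ then $T_i$ is a non-missing line of $F_j$, so $F_i$ and $F_j$ share the line $T_i$. Recording the latter as edges of a conflict graph $\Gamma$ on $\{1,\dots,5\}$, an independent set of $\Gamma$ is a family of pairwise-disjoint Fano planes, whence $\alpha(\Gamma)\le 2$. Since $\Gamma$ has at most one edge per recoloured missing line, a Mantel-type count shows that whenever at least two missing lines are coloured $0$ (so $|E(\Gamma)|\le 3$) one gets $\alpha(\Gamma)\ge 3$, a contradiction.

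The main obstacle is the residual regime in which at most one missing line is coloured $0$ and the recolourings cluster, so that $\Gamma$ (for instance a $5$-cycle, or a triangle plus a disjoint edge) still satisfies $\alpha(\Gamma)=2$ and the disjointness bound is not triggered. Geometrically these correspond to cyclic systems of five Fano planes overlapping in consecutive lines, with $\bigcup_i T_i$ admitting a \emph{second} triangle decomposition realised by the size-$5$ class. I expect eliminating these to be the crux, and I would attack it either by a finer analysis of such overlap patterns — showing that a cyclic chain of Fano planes sharing consecutive lines forces a third mutually disjoint system — or, failing a purely combinatorial argument, by invoking safety via Lemma~\ref{lm:doubleCsafety} (with $\Na\ge 2$, $\Nc=1$) to rule out the spread-out size-$5$ class that such a configuration would require.
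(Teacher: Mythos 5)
Your structural setup is correct as far as it goes, and it is a genuinely different route from the paper's. The paper's proof never mentions designs: it fixes the card $0$ that occurs three times in the size-$5$ class, uses safety (Lemma~\ref{lm:doubleCsafety}) to force every card to occur at least twice in every class, encodes the hands $0xy$ of three suitable classes as an edge-coloured graph on $\{1,\dots,6\}$, shows this graph is a cycle or two triangles, and then kills every possible completion by an explicit tree search (also covering a class of size $7$). Your steps instead use only properness, and they are sound: the absent-colour map $\mu$ on pairs, the count $|\mu^{-1}(i)|=21-3s_i$, the even-degree argument showing each size-$6$ class misses exactly the three edges of a triangle $T_i$ and hence equals a Fano plane $F_i$ minus a line, the edge-disjointness of $T_1,\dots,T_5$, and the fact that an independent set of your conflict graph $\Gamma$ yields pairwise disjoint copies of $S(2,3,7)$, of which at most two exist.

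But the proof does not close, and you say so yourself. The Mantel count only disposes of the case where at least two missing lines are coloured $0$; the surviving configurations ($\Gamma\cong C_5$, $\Gamma\cong K_3\cup K_2$, and one or two similar graphs with $\alpha(\Gamma)=2$ realizable by the partial function $i\mapsto c_i$) are precisely the crux, and the classical disjointness bound cannot eliminate them even in principle. Indeed, on $7$ points the intersection spectrum of two distinct Steiner triple systems is $\{0,1,3\}$, and a short double count ($n_1+3n_3=35$, $3n_3=21$) gives that each Fano plane is disjoint from exactly $n_0=8$ others; so configurations in which every plane has one or two disjoint partners are abundant and perfectly compatible with ``no three pairwise disjoint.'' Ruling them out must use the finer structure (that each $T_i$ is a \emph{line} of $F_{c_i}$, that the $T_i$ are edge-disjoint, that the size-$5$ class packs $\bigcup_i T_i$), and neither of your proposed exits does this: the claim that a cyclic chain of Fano planes sharing consecutive lines forces a third mutually disjoint system is a conjecture, not a lemma, and the appeal to Lemma~\ref{lm:doubleCsafety} is unspecified --- it is unclear what it would contradict, since a size-$5$ class packing five edge-disjoint triangles can easily contain every card at least twice. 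Note also that your argument so far nowhere uses safety, so if it could be completed it would prove the stronger statement that no \emph{proper} uniform $6$-colouring of $J(7,3)$ exists; it is not at all clear that this stronger statement is true, and if it is false then your residual configurations are actually realizable and safety must enter in an essential way. As written, the proposal builds an elegant framework but stops exactly where the paper's case analysis begins, so it does not prove the theorem.
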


\begin{proof}
Assume for contradiction that there is such a protocol $P_A$, which partitions all the 35 possible hands of $A$ into 6 color classes. 
One  class must have 5 hands, by a counting argument, not all can have at least 6,
and it is not hard to check that a color class cannot have only 4 hands.
Also, a color class cannot have more than 7 hands (as observed in~\cite{DitmarschRC03}). 
Thus, the most uniform solution induces a partition of sizes  $5,6,6,6,6,6$.
And the less-uniform solutions are either  of sizes  $5,5,6,6,6,7$,
or $5,5,5,6,7,7$. 

A partition with 5 hands must have a single card, say $0$, that appears in 3 hands. All other cards appear twice.
There are  15  hands containing 0. Consider all remaining 12 hands containing 0 in the other color classes,
say 2 through 6.

In the remaining 5 classes 
there must be 3 with  two hands containing 0,
and 2 classes with three hands containing 0. 
Recall that each card must appear at least twice in a color class, Lemma~\ref{lm:doubleCsafety}.
Also, no color class can have 4 hands containing 0, because then two hands would
have an intersection of 2 cards (and share an edge of $J(7,3)$, violating the properness of the coloring).

\begin{figure}[h]
\centering
\includegraphics[scale=0.18]{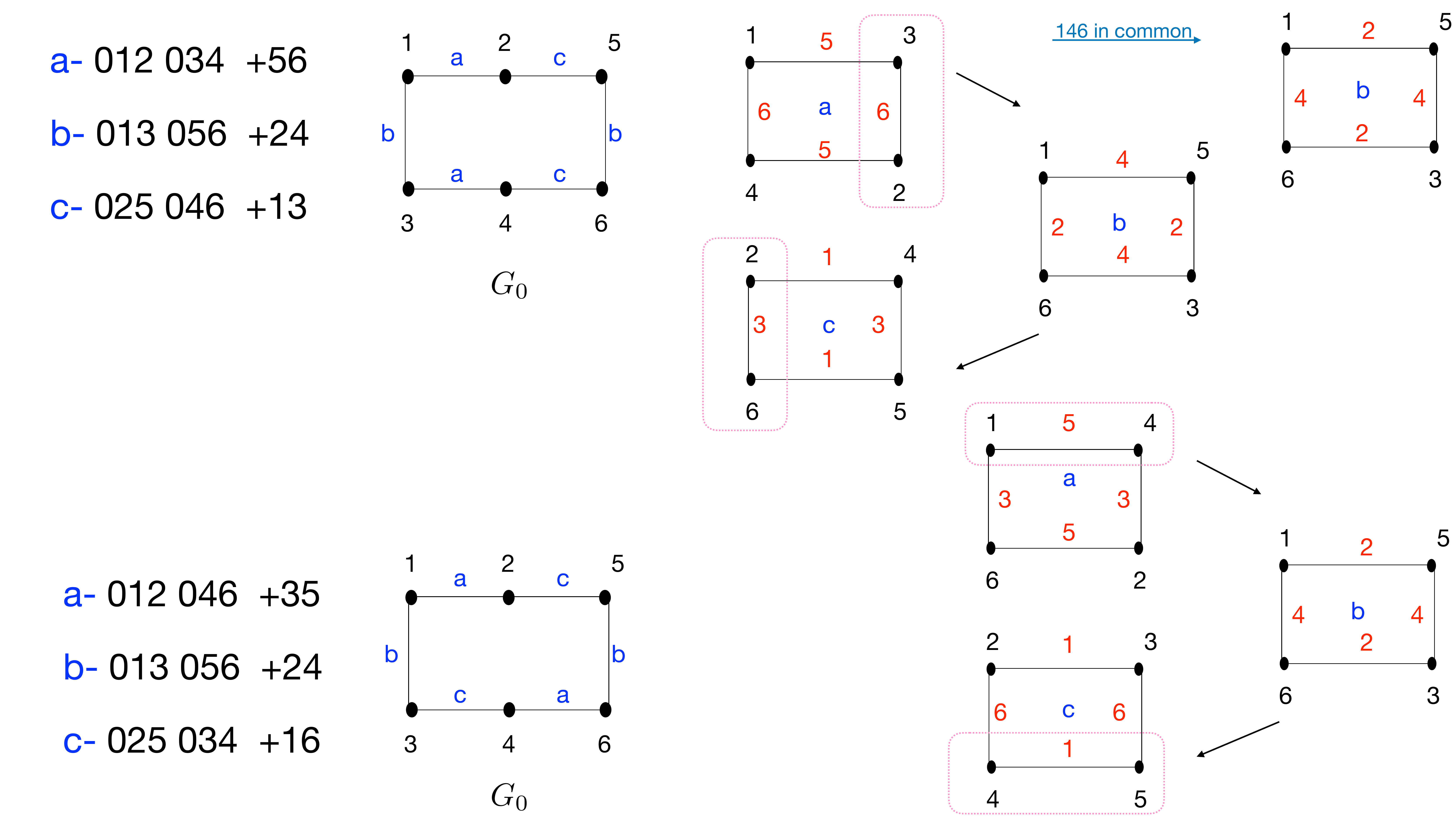}
\caption{First configuration on top $12,34$; $13,56$; $25,46$. Second configuration on bottom 
$12,46$; $13,56$; $25,34$. On the right part of the trees of possible ways of completing them.}
\label{fig-6coloringRC}
\end{figure}

Consider  three color classes of size  6, denoted $a,b,c$, 
each one has exactly two hands containing 0. 
The case where one of these classes is of size 7, and hence it has three hands containing 0, is similar; it will be 
discussed at the end.

The 3 color classes $a,b,c$ with two  hands containing $0$ define a graph 
$G_0$ on the vertices $\cards\setminus 0=\set{1,2,3,4,5,6}$, each
vertex representing a card. An edge of this graph is colored with an element from $\set{a,b,c}$,
meaning that if an edge $x,y$ is colored  $i$, then the hand $0xy$ is in class $i\in\set{a,b,c}$.

 Since two hands in a  class cannot have an intersection of more than one card, it follows that the 
 edges of the same color are independent in $G_0$. 

Now, assume for contradiction that a vertex, say $1$, has degree 3.
The three edges $\set{1,v_1},\set{1,v_2},\set{1,v_3}$ are colored with different elements from $\set{a,b,c}$.
As we shall see, this implies that $1$ appears  in three hands of each class, $a,b,c$.
Therefore, it appears in two hands, of each of the remaining classes, $d,e,f$.
We can thus consider the graph $G_1$ on the vertices $\cards\setminus 1$, with edges colored
with  elements from $\set{d,e,f}$, meaning that if an edge $x,y$ is colored  $i$, then the hand $1xy$ is in class $i$.
The vertex $0$ of $G_1$ must then have degree $3$, because as we shall see, this is needed for $0$
to appear three times in each class $d,e,f$. But this implies that $0$ is incident to one of $v_1,v_2,v_3$, say $v_i$,
since the graph has only $6$ vertices. Namely, $\set{0,v_i}$ is and edge of $G_1$, and
$\set{1,v_i}$ is and edge of $G_0$, so the hand $01v_i$ appears twice, in a class of $\set{a,b,c}$
and a class of $\set{d,e,f}$, a contradiction to the assumption that a vertex has degree three in $G_0$.

Thus, the edges of $G_0$ either they form a cycle or two triangles.
There are two types of cyclic configurations for the three classes $a,b,c$ with two hands containing 0:
either for each $i \in\set{a,b,c}$, the edges colored $i$ are opposite in the cycle or not. For instance,
12,34; 13,56; 25,46  (all plus 0)
or else 12,46;13,56;25,34  (all plus 0). See Figure~\ref{fig-6coloringRC} for these two cyclic configurations, and
 Figure~\ref{fig-6coloringRCtriangles} for the triangles case. These figures illustrate
 the case where 0 appears in exactly two hands, and the color classes are of size 6.
 
\begin{figure}[h]
\centering
\includegraphics[scale=0.18]{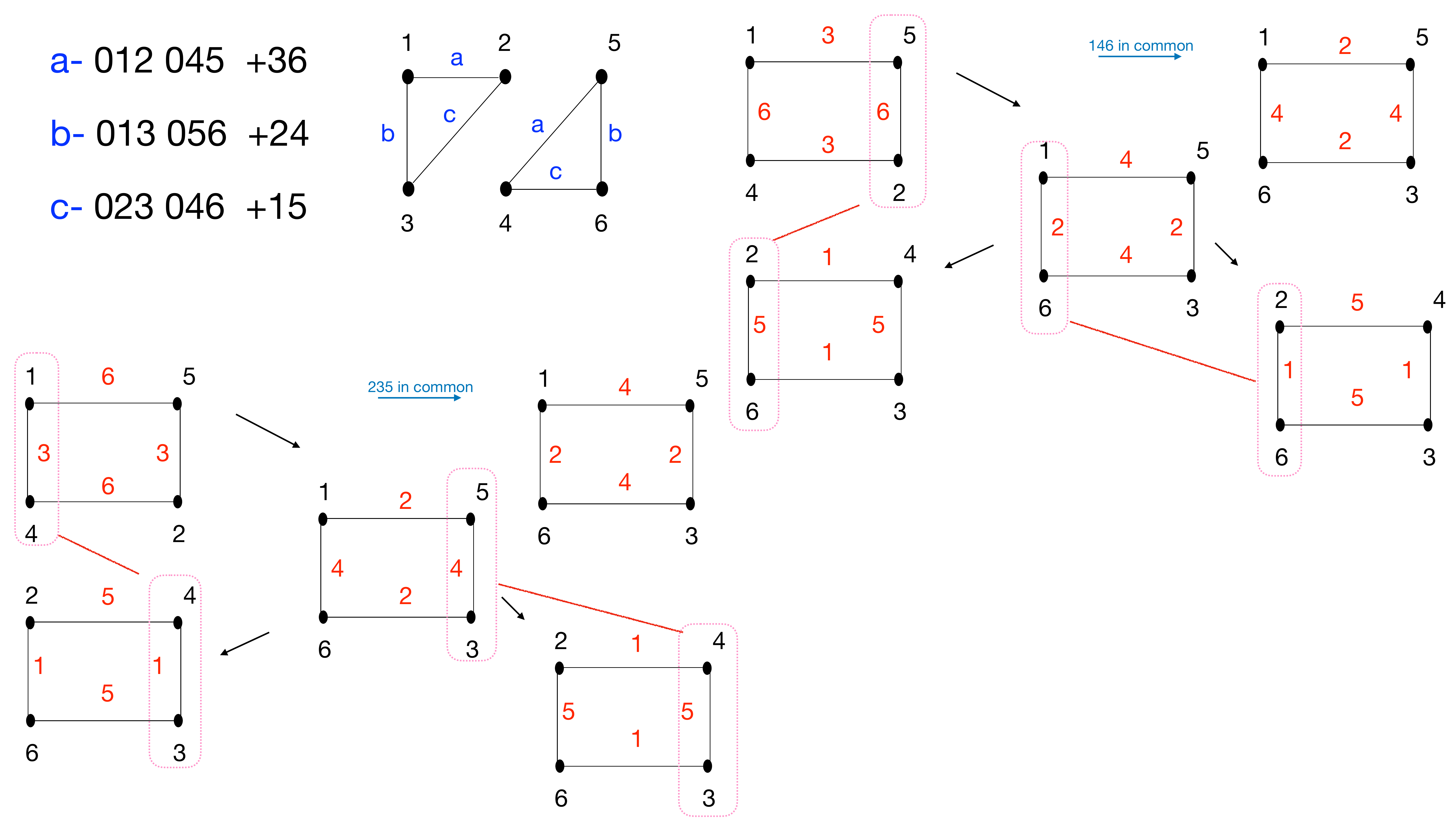}
\caption{The $a,b,c$ classes define two triangles. 
On the right part are the trees of possible ways of completing first $a$, then $b$ and then $c$, to have each 6 hands.
Each hand is represented by an edge.}
\label{fig-6coloringRCtriangles}
\end{figure}

We need to complete each set of two hands to form a color class of 6 hands, by adding  4 more hands.
These 4 more hands do not contain 0. The  process to do it, is 
 represented by three graphs, $G_a,G_b,G_c$. Now the vertices of the graph $G_i$, $i\in\set{a,b,c}$ are the four cards spanned by the two
independent edges of the class $G_i$.
There are four edges on these four vertices forming a cycle in each $G_i$; each edge corresponds
to a combination that \emph{does not}
appear in one of the two independent edges of $G_i$ 
(because two cards that already appeared in a hand, cannot occur in
another hand). The goal is to color these four edges, with the two remaining colors (0
is no longer available, because it already  appears in two hands).

Notice that a loop on a vertex $x$ could in principle be used, coloring it with the  two remaining colors,
giving the hand $xyz$, if the two remaining colors are $yz$. However, at most one such loop can be
used (using two such loops, would give hands with intersection $yz$, with violets the requirement that the color is proper).
And using a loop prevents using the two adjacent edges, leaving only the other two, non-adjacent edges to be used,
ie, coloring only 3 edges.
It follows that no such loop can be used, because we need to color 4 edges, to obtain together with the 2 hands
containing $0$, the
total number of hands which is 6 in the color class. 

Consider all 4 combinations of taking one card from each pair (of 2 values different from 0).
Then  add each of the two remaining cards
to complementary pairs, as illustrated in the figures.
For example, in Figure~\ref{fig-6coloringRC}, for the pairs (a) 12,34 one most add values 56. And there are only two options of getting independent edges.
Add 5 to 13 and to 24; add 6 to 14 and to 23, as in the figure. Or else 
add 6 to 13 and to 24; add 5 to 14 and to 23.

Once  5 is added to 13 and to 24, and  6 to 14 and to 23, the next move is determined, to complete class (b).
In the figure a blue arrow shows that 146 would be in common to the next class,
if we added 2 to 15 and 36; and 4 to 16 and 35. Thus, the only option is the complementary choice.
But then, either way, it is not possible to add 1 and 3 to class (c).  In the figure one choice
is shown, where 236 is repeated in classes (a) and (c).
The reader can verify that in either of the two types of configurations, this process cannot be completed.
The full tree  for  for the first configuration is in Figure~\ref{fig-6coloringRC-tree}.

To complete the proof, we describe how to deal with a class of size 7, where $0$ occurs in 3 hands.
Actually, exactly the same argument is used, considering two hands that contain $0$.
This is illustrated in Figure~\ref{fig-6coloringRCtriangles-partSize7}, where the hand $015$ of class $c$
is underlined, to indicate that it does not play a role on the right side of the figure (in fact, this
prevents it from using $15$ to label a loop). 
Namely, in the figure,  the two hands of class (c) are selected, $023,046$, to complete them
with 4 hands not containing 0, into a color class of size 7 (together with $015$).
Thus, we have the vertices 1245 on the graph for color class (c) on the right, and the possible combinations
represented by four edges forming a cycle. Each edge must be colored with $1$ or $5$, forming two
independent edges colored  $1$ and $5$.
The tree of possibilities is therefore the same as before. 
\begin{figure}[h]
\centering
\includegraphics[scale=0.5]{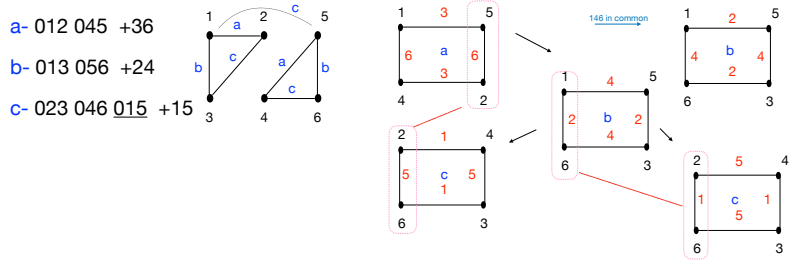}
\caption{Color class $c$ is of size 7, where 0 is in three hands, $023,046,015$. The first two
hands $023,046$ play the same role, as in the other figures. They have to be completed
with four more hands, to make a total of 7 hands.}
\label{fig-6coloringRCtriangles-partSize7}
\end{figure}

\end{proof}

\section{Minimal information transmission}
\label{sec:sBitTransm}

We study  first the  protocol, $\chi_2$,  that sends the sum of the cards modulo $2$.
The techniques are simple, but
serve as an introduction to the more complicated case of $\chi_{modn}$,
the  mod $\Nn$ version of this protocol, studied in Section~\ref{sec:modularAlgo}.

We show in Section~\ref{sec:minInfo2} that $\chi_{2}$ is minimally informative  
only if $\Nb<\floor{\Nn/2}$. Thus, $\chi_{2}$ is not minimally informative for
the classic Russian cards case $(3,3,1)$.

In Section~\ref{sec:minInfRussCards} we describe  how to transform an informative
protocol into a minimally informative protocol.   Applying the reduction to $\chi_{modn}$,
when $\Nc+\Nr=1$, as $\Na$ grows from $3$
up to roughly  $\Nn/2$, the number of different messages 
goes down from $\Nn/3$ to $2$.

This reduction shows that there is a safe minimally informative
protocol for the Russian cards case $(3,3,1)$ using $3$ messages.
Finally, we present a solution to the Russian cards case  using only $2$ messages, in Section~\ref{sec:minInf2RC}.
Given that there is no uniform safe informative protocol using $6$ messages  (Theorem~\ref{th:lowerB6}),
indeed this  $2$-message protocol  splits color classes of an informative protocol.

\subsection{Minimal information with 2 messages}
\label{sec:minInfo2}

For signature $(\Na,\Nb,\Nc)$, with $\Nn=\Na+\Nb+\Nc+\Nr$, consider
a protocol  
 $\chi_2:  \powersetS{\Na}{\cards} \rightarrow \set{0,1}$, defined by
$$
\chi_{2}(a)=\sum x\in a \pmod{2}.
$$


\subsubsection{The protocol $\chi_2$ is minimally informative}
\label{sec:minInfTransf}

Recall Lemma~\ref{lem:neighbrs}.
For each input vertex $(B,b)$ denoting that $B$ gets hand $b$,
there are  $m=  \binom{\Nn-\Nb}{\Na} $ possible hands $a_i$ for $A$, corresponding to vertices $(A,a_i)$.
In $J^{\Nc+\Nr}(\Nn,\Na)$ these vertices form a maximal click
 $K_{p} (\bar{b})$ of $\cG_B$,
 $p=  \binom{\Nn-\Nb}{\Na} $, consisting of all hands $a\subset \bar{b}$,  $|a|=\Na$.
If $\Nb\geq \floor{\Nn/2}$ then for $b$ of size $\Nb$, $\bar{b}$ may consist of cards of the same parity,
and thus all $a\subset \bar{b}$,  $|a|=\Na$ have the same parity, and $\chi_{2}$ is not minimally informative.


\begin{lemma}
\label{lem:mainMinCol}
Assume that $\Nc+\Nr\geq 1$, $\Na\geq 1$, 
 $\Nb<\floor{\Nn/2}$.
Then $\chi_{2}$ is a minimally informative  protocol.
\end{lemma}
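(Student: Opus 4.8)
The plan is to apply the minimally informative characterization of Theorem~\ref{def:prColoring}: the protocol $\chi_2$ is minimally informative if and only if for every hand $b\in\powersetS{\Nb}{\cards}$ the click $K_p(\bar b)$ of $J^{\Nc+\Nr}(\Nn,\Na)$ carries at least one edge whose two endpoints receive different colors. By Lemma~\ref{lem:neighbrs} the vertices of $K_p(\bar b)$ are exactly the $\Na$-subsets $a\subseteq\bar b$, so it suffices to exhibit, for each such $b$, two adjacent $\Na$-subsets $a,a'\subseteq\bar b$ with $\chi_2(a)\neq\chi_2(a')$.

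The first observation is that $\chi_2(a)$ depends only on the number of \emph{odd} cards in $a$, since $\sum_{x\in a}x \equiv \#\{x\in a : x\text{ odd}\}\pmod 2$. Consequently, changing the color of a hand is the same as changing the parity of its odd-card count, and this can be effected by a single even-for-odd swap. The whole question therefore reduces to whether $\bar b$ contains cards of \emph{both} parities.

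The key step, which I expect to be the only real subtlety, is the counting argument tying the hypothesis $\Nb<\floor{\Nn/2}$ to the presence of both parities in $\bar b$. In the deck $\cards=\set{0,\dots,\Nn-1}$ there are $\ceil{\Nn/2}$ even cards and $\floor{\Nn/2}$ odd cards, so the largest subset of $\cards$ of a single parity has size $\ceil{\Nn/2}$. From $\Nb<\floor{\Nn/2}$ one gets $|\bar b|=\Nn-\Nb>\ceil{\Nn/2}$, so $\bar b$ is strictly larger than any single-parity class and hence cannot be contained in one; it must contain at least one even card $e$ and at least one odd card $o$.

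It remains to produce the adjacent pair. Since $\Nc+\Nr\ge 1$ we have $|\bar b|=\Na+\Nc+\Nr\ge\Na+1$, and even $|\bar b\setminus\set{o}|=\Na+\Nc+\Nr-1\ge\Na$, so I can pick an $\Na$-subset $a\subseteq\bar b$ with $e\in a$ and $o\notin a$, and set $a'=(a\setminus\set{e})\cup\set{o}$. Then $a'\subseteq\bar b$ is again an $\Na$-subset, and $|a\cap a'|=\Na-1\ge\Na-(\Nc+\Nr)$, so by Definition~\ref{def:jg} the pair $\set{a,a'}$ is an edge of $J^{\Nc+\Nr}(\Nn,\Na)$ lying inside $K_p(\bar b)$. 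The single even-for-odd swap changes the odd-card count by one, whence $\chi_2(a)\neq\chi_2(a')$. This is exactly the condition required by Theorem~\ref{def:prColoring}, so $\chi_2$ is minimally informative and the lemma follows.
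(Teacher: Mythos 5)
Your proof is correct and takes essentially the same route as the paper: both arguments hinge on the counting fact that $\Nb<\floor{\Nn/2}$ forces $\bar{b}$ to contain cards of both parities, and both then exhibit a single even-for-odd swap producing two adjacent, differently colored vertices of $K_p(\bar{b})$, concluding via Theorem~\ref{def:prColoring}. The only difference is cosmetic: the paper locates the swap by taking the hand in $K_p(\bar{b})$ with the maximum number of odd cards, whereas you construct the hand directly (using $|\bar{b}|\geq\Na+1$ to include an even card and exclude an odd one), which is, if anything, slightly more explicit.
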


\begin{proof}
We use two facts.
Since $\Nb<\floor{\Nn/2}$ then $|\bar{b}|> \Nn-\floor{\Nn/2}$, for any $b$ with $|b|=\Nb$,
and  $\bar{b}$  must contain both even and odd cards.
Since $\Nc+\Nr\geq 1$ (as required by the minimally informative definition), then 
 $\Na< |\bar{b}|$.

To show that $\chi_{2}$ is minimally informative, consider any click $K_{p} (\bar{b})$.
Let $a\subset \bar{b}$,  $|a|=\Na$, be a vertex of  $K_{p} (\bar{b})$  with the largest  number of odd cards. 
Since   there are both even and odd cards in $\bar{b}$, $a$ contains at  least one odd card, $y$.
Since $a$ contains the largest possible number of odd cards, it contains the minimum number of even cards.
Thus, there is at least one even card $y'\in \bar{b}\setminus a$, given that $|a|< |\bar{b}|$. 
 Let $a'=(a\setminus y)\cup y'$. Thus, $a'$ is also a vertex of $K_{p} (\bar{b})$, and
 $\chi_{2}(a)\neq \chi_{2}(a')$.
\end{proof}

%
%
%
%
%
%
%
%
%
%

\subsubsection{The protocol $\chi_2$ is safe}
\label{sec:safeChi2}

Lemma~\ref{lem:mainMinCol} implies that $\chi_{2}$ is  minimally informative when
 $\Nn=7, \Na=3,\Nb=2,\Nc=2, \Nr=0$, namely, for  $J^2(7,3)$.
But it is not safe, because if $C$ has hand $\set{1,3}$ and the announcement is $0$ she knows
that $A$ does not have card $5$. Or if the announcement is $1$, she knows that $A$ has card $5$.
More generally, the number of odd cards in $\cards$ is $\floor{\Nn/2}$. If $\Nc = \floor{\Nn/2}-1$
then when $C$ holds $\Nc$ odd cards she can deduce from the announcement whether $A$
holds the remaining odd card.
Thus, assume that $\Nc \leq  \floor{\Nn/2}-2$, and  additionally,   $\Na\geq 2$ (Remark~\ref{lem:simpleSafe1}).

In Section~\ref{sec:modSafe} we  discuss the modulo $\Nn$ case and the relation of proving safety with 
additive number theory.  The  proof here for the modulo 2  case provides a simple illustration of the ideas.
 
The safety characterization of Theorem~\ref{th:safetyMain}(\ref{th:safetyMain2}) instantiated for
 protocol $\chi_{2}$, says that (cf.~\cite[Proposition 6]{Cordon-FrancoDF12})
$\chi_{2}$ is {safe} (with respect to $\Nc$) if and only if
for each  $\Nc$-set $c$,  $y\in\bar{c}$, and $M\in \set{0,1}$, 
there exists two $\Na$-sets $a,a' \in \bar{c}$, $\chi_{2}(a)=\chi_{2}(a')=M$ such that $y\in a\triangle a'$.


\begin{lemma}
\label{lem:mainMinColSafe}
Assume that ~$\Na,\Nb\geq 2$ and $\Nc\leq \floor{\Nn/2}-2$.
Then $\chi_{2}$ is a  safe protocol.
\end{lemma}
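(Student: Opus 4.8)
The plan is to verify the safety characterization of Theorem~\ref{th:safetyMain}(\ref{th:safetyMain2}) for $\chi_2$: I must show that for every $\Nc$-hand $c$, every $y\in\bar c$, and every message $M\in\set{0,1}$, there exist $\Na$-sets $a,a'\subseteq\bar c$ with $\chi_2(a)=\chi_2(a')=M$ and $y\in a\triangle a'$. Since safety is a statement about each fixed $c$, the whole argument takes place inside the complement $\bar c$, which by the assumption $\Nc\leq\floor{\Nn/2}-2$ is guaranteed to contain a healthy surplus of cards of each parity. Concretely, $|\bar c|=\Nn-\Nc\geq \Nn-\floor{\Nn/2}+2=\ceil{\Nn/2}+2$, so $\bar c$ contains at least two even and at least two odd cards beyond the minimum needed; this parity surplus is exactly what will let me build both required witnesses.

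First I would fix $c$, $y\in\bar c$, and a target parity $M$, and split into the two cases $y$ even and $y$ odd (the two are symmetric, so I would write one and remark on the other, as is done for $\chi_2$'s minimally-informative proof). The strategy is the shifting technique: I start from some $\Na$-set $a\subseteq\bar c$ of parity $M$ containing $y$, and produce $a'\subseteq\bar c$ of the same parity $M$ with $y\notin a'$, so that automatically $y\in a\triangle a'$. To keep the parity fixed while removing $y$, I would swap $y$ out for a card of the same parity as $y$ — if such a card is available in $\bar c\setminus a$ — giving $a'=(a\setminus y)\cup\set{y'}$ with $y'\notin a$, $y'\in\bar c$, and $y'\equiv y\pmod 2$; then $\chi_2(a')=\chi_2(a)=M$. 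The second case, starting from an $a$ of parity $M$ \emph{not} containing $y$ and producing an $a'$ of the same parity that \emph{does} contain $y$, is handled by the mirror swap. The content of the lemma is thus reduced to the counting claim that $\bar c$ always contains enough cards of each parity so that both an $\Na$-set of each target parity and a same-parity replacement card outside it can be found.

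The main obstacle — and the only place the precise hypotheses $\Nc\leq\floor{\Nn/2}-2$ and $\Na\geq 2$ are consumed — is guaranteeing these parity resources simultaneously: I need $\bar c$ to contain at least two cards of $y$'s parity (one for $y$ itself, one to serve as the replacement $y'$), and I need room to fill out the remaining $\Na-1$ slots of $a$ from $\bar c$ while still leaving the replacement card free. I would verify this by a direct count: the number of odd cards in $\cards$ is $\floor{\Nn/2}$ and of even cards is $\ceil{\Nn/2}$; since $c$ removes at most $\Nc\leq\floor{\Nn/2}-2$ of them, $\bar c$ retains at least $\floor{\Nn/2}-\Nc\geq 2$ cards of whichever parity $c$ depleted most, hence at least two cards of \emph{each} parity. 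With at least two odd and two even cards available in $\bar c$ and $|\bar c|\geq\Na+2$ (which follows from $\Nc+\Nr\geq 1$ together with $\Nc\leq\floor{\Nn/2}-2$), one can always select an $\Na$-set of the prescribed parity $M$ containing (resp.\ avoiding) $y$ and still reserve a same-parity swap partner; the requirement $\Na\geq 2$ ensures the $\Na$-set is large enough to carry a meaningful parity. I would finish by exhibiting the swap explicitly and invoking Theorem~\ref{th:safetyMain}(\ref{th:safetyMain2}) to conclude safety.
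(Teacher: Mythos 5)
Your strategy differs from the paper's in one crucial way, and that difference is exactly where the proof breaks. The paper's proof swaps two cards $z,z'$ of \emph{opposite} parity, both distinct from $y$ and outside $c$: this flips the color while staying on the same side of $y$, so it manufactures hands of \emph{both} colors avoiding $y$ and hands of both colors containing $y$, which are then paired up across the two sides for each $M$. You instead swap $y$ itself for a card $y'$ of the \emph{same} parity, which preserves the color and crosses from ``contains $y$'' to ``avoids $y$''. For this you need, for each $M\in\set{0,1}$, a hand $a\subseteq\bar{c}$ of color $M$ containing $y$ that still leaves a card of $y$'s parity unused in $\bar{c}$. Your counting claim --- that at least two cards of each parity in $\bar{c}$ together with $|\bar{c}|\ge \Na+2$ suffice for this --- is false.

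Concretely, take $\Nn=9$, signature $(\Na,\Nb,\Nc)=(4,3,2)$ with $\Nr=0$, $c=\set{3,5}$ and $y=1$. The hypotheses of the lemma hold ($\Na,\Nb\ge 2$ and $\Nc=2\le\floor{9/2}-2$), and $\bar{c}=\set{0,1,2,4,6,7,8}$ contains exactly two odd cards, $1$ and $7$. Any $4$-set $a\subseteq\bar{c}$ that contains $y=1$ and reserves an odd spare must exclude $7$, hence consists of $1$ and three even cards, so $\chi_2(a)=1$. Thus for the possible message $M=0$ (possible since $\chi_2(\set{0,2,4,6})=0$) no pair of your form exists, from either direction of the swap, even though safety does hold for this $M$: a witnessing pair is $\set{0,2,4,6}$ and $\set{0,2,1,7}$, which differ by exchanging \emph{both} odd cards for two even ones --- a move a single same-parity swap can never produce. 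So the gap is real: when $\bar{c}$ has exactly two cards of $y$'s parity (your hypotheses only guarantee ``at least two''), the color of every hand of your form is forced, and one of the two messages has no witness. The repair is either the paper's opposite-parity swap of cards $z,z'\ne y$ (which only needs one card of each parity in $\cards\setminus(c\cup\set{y})$, guaranteed by $\Nc\le\floor{\Nn/2}-2$), or an additional case in your argument performing a double swap that exchanges both cards of $y$'s parity for two cards of the other parity.
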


\begin{proof}
Consider any  $\Nc$-set $c$, and  $y\in\bar{c}$.
Let $z,z'\in \cards\setminus (c\cup y)$ be cards of different parity, which they exist because  $\Nc\leq \floor{\Nn/2}-2$.
First, let $a_1$ be any $\Na$-set in $\bar{c}$ that does not include $y$, and which  includes $z$ but not $z'$,
which exists because $\Nb\geq 2$.
Let $a_2=(a_1\setminus z)\cup z'$.
Thus, $\chi_2(a_1)\neq\chi_2(a_2)$.
Similarly, let $a'_1$ be any $\Na$-set in $\bar{c}$ which includes $y$, and which  includes $z$ but not $z'$.
And let $a'_2=(a'_1\setminus z)\cup z'$.
Thus, $\chi_2(a'_1)\neq\chi_2(a'_2)$.

We are done, because  for each $M\in\set{0,1}$, there is one $i\in\set{1,2}$ such that $\chi_2(a_i)= M$ and does not include $y$,
and there is one $i\in\set{1,2}$ such that $\chi_2(a'_i)= M$ and does  include $y$.
\end{proof}

Combining Lemma~\ref{lem:mainMinCol} and Lemma~\ref{lem:mainMinColSafe} we get 
the following theorem.

\begin{theorem}
\label{th:mainMinInfo}
Let $\Nn=\Na+\Nb+\Nc+\Nr$.
If
 $\Na,\Nb\geq 2$, $\Nc\leq \floor{\Nn/2}-2$,
 $\Nc+\Nr\geq 1$, and
 $\Nb<\floor{\Nn/2}$,
 then $\chi_2$ is minimally informative and safe. 
\end{theorem}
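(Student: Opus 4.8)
The plan is to observe that this theorem is a pure bookkeeping combination of the two preceding lemmas, so the entire task reduces to checking that the hypotheses stated here subsume the hypotheses required by each lemma separately. No new combinatorial content is needed; the theorem is an immediate corollary.

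First I would establish minimal informativeness by appealing to Lemma~\ref{lem:mainMinCol}. That lemma requires $\Nc+\Nr\geq 1$, $\Na\geq 1$, and $\Nb<\floor{\Nn/2}$. The first and third are assumed here verbatim, and $\Na\geq 1$ follows from the stronger assumption $\Na\geq 2$. Hence $\chi_2$ is minimally informative under the stated hypotheses.

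Next I would establish safety by appealing to Lemma~\ref{lem:mainMinColSafe}. That lemma requires $\Na,\Nb\geq 2$ and $\Nc\leq\floor{\Nn/2}-2$, all of which are assumed here directly. Hence $\chi_2$ is safe. Combining the two conclusions, $\chi_2$ is both minimally informative and safe, which is exactly the statement of the theorem.

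There is no genuine obstacle here: the only thing to verify is that the hypotheses of the theorem jointly imply the (individually weaker) hypothesis sets of the two lemmas, which they plainly do. The actual mathematical work has already been discharged inside those lemmas — the parity argument guaranteeing an even-for-odd swap inside each clique $K_{p}(\bar{b})$, which yields minimal informativeness, and the explicit construction of the witnessing pairs $a_1,a_2$ and $a'_1,a'_2$ of equal parity differing in the target card $y$, which yields safety. Consequently the proof of this theorem is a two-line invocation of the two lemmas after noting the hypothesis containment.
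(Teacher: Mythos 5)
Your proposal is correct and matches the paper's own proof exactly: the paper derives Theorem~\ref{th:mainMinInfo} by simply combining Lemma~\ref{lem:mainMinCol} (minimal informativeness) and Lemma~\ref{lem:mainMinColSafe} (safety), precisely the two invocations you make. Your explicit check that the theorem's hypotheses subsume those of each lemma is a slightly more careful write-up of the same one-line argument.
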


Thus, for example, when $\Nn=7$, $\Na=3$, $\Nb=2$, $\Nc=1,\Nr=1$, namely, $J^2(7,3)$,
then $\chi_2$ is both minimally informative and safe.
Similarly for $\Nn=7$, $\Na=4$, $\Nb=2$, $\Nc=1,\Nr=0$, namely, $J(7,4)$.
Which is interesting, because it shows that the  duality Theorem~\ref{th:dual} does not
hold for minimally informative protocols; 
notice that $J(7,4)\cong J(7,3)$, but $\bar{\chi}_2$ is not minimally informative
for $J(7,3)$ (neither is $\chi_2$).
More generally, for the  Russian cards case, we get the following.

\begin{corollary}
\label{cor:minInfo1}
Assume $\Nc+\Nr=1$. Then, $\chi_2$ is minimally informative and safe,
whenever $\Na>\ceil{\Nn/2}-1$ and $\Nb<\floor{\Nn/2}$.
\end{corollary}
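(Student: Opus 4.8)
The plan is to obtain Corollary~\ref{cor:minInfo1} as the $\Nc+\Nr=1$ specialization of Theorem~\ref{th:mainMinInfo}, which already asserts that $\chi_2$ is minimally informative and safe whenever $\Na,\Nb\geq 2$, $\Nc\leq\floor{\Nn/2}-2$, $\Nc+\Nr\geq 1$, and $\Nb<\floor{\Nn/2}$. So the entire argument reduces to checking that the corollary's hypotheses $\Na>\ceil{\Nn/2}-1$ and $\Nb<\floor{\Nn/2}$, together with $\Nc+\Nr=1$ and the standing assumption $\Na,\Nb\geq 2$, imply those four conditions.

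First I would dispatch the conditions that are immediate. The hypothesis $\Nc+\Nr\geq 1$ is exactly $\Nc+\Nr=1$, and $\Nb<\floor{\Nn/2}$ is assumed verbatim. For $\Na\geq 2$: the corollary's $\Na>\ceil{\Nn/2}-1$ means $\Na\geq\ceil{\Nn/2}$, and since $\Na+\Nb\geq 2$ and $\Nc+\Nr=1$ give $\Nn\geq 3$, we have $\ceil{\Nn/2}\geq 2$ and hence $\Na\geq 2$. The bound $\Nb\geq 2$ I would flag as genuinely necessary rather than cosmetic: if $\Nb=1$ then $\Na=\Nn-2$ and, taking $\Nc=1$, every hand of $A$ in a click $K_{p}(\bar{c})$ is $\bar{c}$ with one card deleted; the hands of a given parity then all contain every card of the opposite parity lying in $\bar{c}$, so on hearing $\chi_2$ the eavesdropper $C$ can read off one of $A$'s cards, and safety fails. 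Hence the corollary must run under $\Nb\geq 2$.

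The one condition carrying real arithmetic is $\Nc\leq\floor{\Nn/2}-2$, and this is where I would put the work. Since $\Nc+\Nr=1$ gives $\Nc\leq 1$, it suffices to show $\floor{\Nn/2}\geq 3$. Feeding the lower bound on $\Na$ into the size equation $\Nn=\Na+\Nb+\Nc+\Nr=\Na+\Nb+1$ together with $\Na\geq\ceil{\Nn/2}$ and $\Nb\geq 2$ yields $\Nn\geq\ceil{\Nn/2}+3$, i.e. $\floor{\Nn/2}=\Nn-\ceil{\Nn/2}\geq 3$, so $\Nc\leq 1\leq\floor{\Nn/2}-2$. With all four hypotheses verified, Theorem~\ref{th:mainMinInfo} applies. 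The only delicate point is this last inequality, where one must track the $\ceil{\Nn/2}$ versus $\floor{\Nn/2}$ bookkeeping and the role of $\Nb\geq 2$; as a sanity check I would note that the same bookkeeping shows the corollary's two hypotheses are not independent, since $\Na\geq\ceil{\Nn/2}$ and $\Na+\Nb=\Nn-1$ already force $\Nb\leq\floor{\Nn/2}-1<\floor{\Nn/2}$.
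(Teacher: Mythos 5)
Your proposal takes exactly the paper's route: the paper offers no separate proof of Corollary~\ref{cor:minInfo1}, presenting it as a direct specialization of Theorem~\ref{th:mainMinInfo}, and your verification that the corollary's hypotheses feed into the theorem's is precisely the intended argument. Your arithmetic is correct throughout: $\Na>\ceil{\Nn/2}-1$ gives $\Na\geq\ceil{\Nn/2}\geq 2$; with $\Nb\geq 2$ and $\Nn=\Na+\Nb+1$ this forces $\floor{\Nn/2}\geq 3$, hence $\Nc\leq 1\leq\floor{\Nn/2}-2$; and your side remark that $\Nb<\floor{\Nn/2}$ is then redundant is also right.

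Where you and the paper part ways is the status of $\Nb\geq 2$, and your instinct is the better one. The paper's standing assumption (Section~\ref{sec:inputC}) is only $\Na,\Nb\geq 1$, so the corollary as printed admits $\Nb=1$, and there the conclusion can genuinely fail: for signature $(5,1,1)$, $\Nn=7$, both stated hypotheses hold (and even $\Nc\leq\floor{\Nn/2}-2$ holds), yet when $C$ holds $c$ and hears the parity $M$, the consistent hands are exactly $\bar{c}$ minus one card of a single parity determined by $M$, so $C$ learns that $A$ holds every card of $\bar{c}$ of the opposite parity --- safety fails, exactly by the mechanism you describe. So $\Nb\geq 2$ is a hypothesis the corollary silently inherits from Theorem~\ref{th:mainMinInfo} (ultimately from Lemma~\ref{lem:mainMinColSafe}) but does not state; your proof establishes the corrected statement, and your $\Nb=1$, $\Nc=1$ analysis shows no proof of the literal statement can exist. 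This is a genuine catch rather than bookkeeping; the only loose end is that your necessity claim is specific to $\Nc=1$, since for $\Nc=0$, $\Nr=1$ (the weak variant) the $\Nb=1$ case does not break safety in the same way.
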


%

\subsection{Reducing  informative to minimally informative protocols}
\label{sec:minInfRussCards}


As observed in Section~\ref{sec:minInfTransf}, the protocol $\chi_2$ is not
minimally informative when  $\Na\leq \ceil{\Nn/2}-1$ or $\Nb\geq \floor{\Nn/2}$, and thus, in particular, 
 for the Russian cards problem $(3,3,1)$, $\Nr=0$.
We present here a protocol for this case, 
based on the $\chi_{modn}$  protocol studied in Section~\ref{sec:modularAlgo}.
Notice that the protocol $\chi_{modn}$ is safe and informative when $\Nc+\Nr=1$.

The protocol  uses the idea that, merging two color
classes of a protocol $P_A$, 
$P_A^{-1}[M]\cup P_A^{-1}[M']$, leads to a new protocol that 
preserves safety (but possibly not informative properties). 
Actually, the idea works for any safe and informative protocol 
 $P_A: \powersetS{\Na}{\cards} \rightarrow \MA$.
 If $|\MA|=m$, let us denote  $\MA= \mathbb{Z}_m$.
 
If  $ P_A: \powersetS{\Na}{\cards} \rightarrow \mathbb{Z}_{m}$ is a safe proper coloring of $J^{\Nc+\Nr}(\Nn,\Na)$, $\Nc+\Nr\geq 1$,
define the protocol,
 $
 P_A^{[p]}: \powersetS{\Na}{\cards} \rightarrow \mathbb{Z}_{\ceil{m/(p-1)}},
$
where 
\begin{equation*}
P_A^{[p]}(a) = P_A(a)  \pmod{\ceil{m/(p-1)}},
 \end{equation*}
$p= {{\Na+\Nc+\Nr}\choose{\Na}} = {{\Nn-\Nb}\choose{\Na}}$.

\begin{theorem}[Information reduction]
\label{th:minInfoReduc}
If $P_A$ is a safe and informative protocol 
then $P_A^{[p]}$ is a safe and minimally informative protocol.
Thus, if $m$ is the different number of messages used by $P_A$,
then $\ceil{m/(p-1)}$ is the number of messages used by  $P_A^{[p]}$.
\end{theorem}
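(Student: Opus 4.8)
The plan is to exploit the fact that $P_A^{[p]}$ is nothing but a \emph{merging} of the color classes of the informative, safe coloring $P_A$. Writing $k=\ceil{m/(p-1)}$, each class $(P_A^{[p]})^{-1}[r]$ is the union $\bigcup_{j\equiv r \,(\mathrm{mod}\, k)} P_A^{-1}[j]$ over the residues $j\in\mathbb{Z}_m$ congruent to $r$ modulo $k$. By Remark~\ref{rm:subgraphsM}, joining color classes preserves safety, so morally safety is inherited for free; I would make this precise through the characterization of Theorem~\ref{th:safetyMain}.

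For safety I would verify condition (\ref{th:safetyMain2}) of Theorem~\ref{th:safetyMain} directly. Fix any $c\in\powersetS{\Nc}{\cards}$, any $y\in\bar{c}$, and any color $r\in P_A^{[p]}(K_p(\bar{c}))$. Pick a vertex of $K_p(\bar{c})$ witnessing $r$; its $P_A$-color is some $M\equiv r \pmod{k}$ with $M\in P_A(K_p(\bar{c}))$. Since $P_A$ is safe, Theorem~\ref{th:safetyMain} supplies $a,a'\in K_p(\bar{c})$ with $P_A(a)=P_A(a')=M$ and $y\in a\triangle a'$. Reducing modulo $k$ gives $P_A^{[p]}(a)=P_A^{[p]}(a')=r$, so the very same pair witnesses safety of $P_A^{[p]}$.

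For minimal informativeness I would invoke the second clause of Theorem~\ref{def:prColoring}: it suffices to show that on every click $K_p(\bar{b})$ at least two vertices receive distinct $P_A^{[p]}$-colors, since a click is complete and hence any such pair forms a properly bichromatic edge. Because $P_A$ is informative, hence a proper coloring, its restriction to the $p$ vertices of $K_p(\bar{b})$ uses $p$ \emph{distinct} values in $\mathbb{Z}_m$. The heart of the argument is the counting fact that each residue class modulo $k=\ceil{m/(p-1)}$ contains at most $p-1$ of the integers $\set{0,\dots,m-1}$: indeed $k\geq m/(p-1)$ forces $m/k\leq p-1$, and since $p-1$ is an integer this yields $\ceil{m/k}\leq p-1$. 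Consequently the $p$ distinct $P_A$-colors appearing on the click cannot all be congruent modulo $k$, so at least two distinct residues occur, producing the required bichromatic edge. (Note that $\Nc+\Nr\geq 1$ guarantees $p\geq 2$, so $k$ is well defined.)

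The message-count claim is then immediate, as the codomain of $P_A^{[p]}$ is $\mathbb{Z}_{\ceil{m/(p-1)}}$. The only genuinely delicate point is the ceiling inequality $\ceil{m/\ceil{m/(p-1)}}\leq p-1$, which pins down exactly why the modulus $\ceil{m/(p-1)}$ is the right choice: it is the largest modulus that still forces every full click $K_p(\bar{b})$ to split into at least two colors, while keeping the number of messages as small as possible.
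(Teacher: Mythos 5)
Your proof is correct and takes essentially the same approach as the paper: you view $P_A^{[p]}$ as a merging of the color classes of $P_A$, get safety from the fact that merging classes preserves the safety characterization of Theorem~\ref{th:safetyMain}, and get minimal informativeness by counting that each merged class contains at most $p-1$ original classes while the $p$ vertices of any click $K_{p}(\bar{b})$ carry pairwise distinct $P_A$-colors. The only difference is one of detail: you make explicit the ceiling inequality $\ceil{m/\ceil{m/(p-1)}}\leq p-1$ and the verification of safety vertex-by-vertex, both of which the paper asserts without elaboration.
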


\begin{proof}
%
Notice that each color class of $P_A^{[p]}$ consists
of a union of at most $p-1$
 color classes of $P_A$.
Since the protocol $P_A^{[p]}$  is defined in terms of merging color classes
of $P_A$, if $P_A$ is safe
then $P_A^{[p]}$   is safe (follows directly from Theorem~\ref{th:safetyMain}).

Furthermore, $P_A^{[p]}$  is minimally informative, because 
the number of vertices in a click $K_{p} (\bar{b})$ is 
$p= {{\Na+\Nc+\Nr}\choose{\Na}} = {{\Nn-\Nb}\choose{\Na}}$.
Since $P_A$ is informative, any two vertices of $K_{p} (\bar{b})$
belong to different color classes of $P_A$. 
Since each color class of  $P_A^{[p]}$  consists of at most  $p-1$
color classes of $P_A$, then not all such
 $p$ vertices can be assigned the same color by  $P_A^{[p]}$.
%
\end{proof}

In the case of $\Nc+\Nr=1$, the protocol $\chi_{modn}$
studied in Section~\ref{sec:modularAlgo} is a safe and informative protocol
(Theorem~\ref{th:safetyRussCards}), using $\Nn$ different messages.
In this case, $p=\Na+1$. Thus we have the following.

\begin{corollary}
\label{cor:minInfoRusCard}
The protocol   $ \chi_{modn}^{[\Na+1]}$ 
is minimally informative and safe for $\Na,\Nb\geq 3$, $\Nc+\Nr=1$,
using $ \ceil{\Nn/\Na}$ different messages.
In particular, it uses $3$ messages for the case $(3,3,1)$, $\Nr=0$.
\end{corollary}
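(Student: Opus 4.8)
The plan is to obtain this corollary as a direct instantiation of the information reduction Theorem~\ref{th:minInfoReduc}, applied to the modular protocol $\chi_{modn}$. First I would invoke the fact (Theorem~\ref{th:safetyRussCards}) that, when $\Nc+\Nr=1$ and $\Na,\Nb\geq 3$, the protocol $\chi_{modn}$ is both safe and informative, using exactly $m=\Nn$ different messages. The side condition $\Nn\geq 7$ needed there is automatic under the present hypotheses, since $\Na+\Nb\geq 6$ together with $\Nc+\Nr=1$ forces $\Nn\geq 7$.

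Next I would compute the click size $p$ that governs the reduction. By Lemma~\ref{lem:neighbrs}, each click $K_{p}(\bar{b})$ has $p=\binom{\Nn-\Nb}{\Na}$ vertices; and since $\Nc+\Nr=1$ we have $\Nn-\Nb=\Na+1$, so $p=\binom{\Na+1}{\Na}=\Na+1$. This is precisely the superscript appearing in $\chi_{modn}^{[\Na+1]}$, and it gives $p-1=\Na$.

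With these two ingredients in hand, the conclusion follows by substitution into Theorem~\ref{th:minInfoReduc}: because $\chi_{modn}$ is safe and informative and $\Nc+\Nr\geq 1$, the derived protocol $\chi_{modn}^{[p]}=\chi_{modn}^{[\Na+1]}$ is safe and minimally informative, and uses $\ceil{m/(p-1)}=\ceil{\Nn/\Na}$ messages. Finally I would specialize to the classic signature $(3,3,1)$ with $\Nr=0$: here $\Na=3$, $\Nb=3$, $\Nc=1$, $\Nn=7$, so the message count is $\ceil{7/3}=3$.

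Since every step is a substitution into previously established results, there is no genuine obstacle; the only thing to watch is that the hypotheses of both invoked theorems hold, namely $\Na,\Nb\geq 3$ (so that $\chi_{modn}$ is safe and informative via Theorem~\ref{th:safetyRussCards}) and $\Nc+\Nr\geq 1$ (so that the minimally informative notion, and hence the reduction of Theorem~\ref{th:minInfoReduc}, applies). Both are guaranteed by the corollary's assumptions.
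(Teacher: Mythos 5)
Your proposal is correct and follows exactly the paper's route: the corollary is obtained there as an immediate instantiation of Theorem~\ref{th:minInfoReduc} applied to $\chi_{modn}$, using Theorem~\ref{th:safetyRussCards} for safety and informativeness and the observation that $p=\Na+1$ when $\Nc+\Nr=1$, giving $\ceil{\Nn/\Na}$ messages. Your explicit check that $\Nn\geq 7$ follows from $\Na,\Nb\geq 3$ and $\Nc+\Nr=1$ is a small but welcome detail the paper leaves implicit.
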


Notice that not every minimally informative safe protocol can be obtained
by reduction from an informative protocol.
Theorem~\ref{th:mainMinInfo} states that  $\chi_2$ is minimally informative and safe in
some cases where
\begin{equation}
\label{eq:noSafeInfo}
\Nc\geq \Nb \,\text{ or }\,  \Nc+\Nr \geq \min\set{\Na,\Nn-\Na}-1.
\end{equation}
For instance, the case of signature $(6,6,8)$, $\Nr=0$, satisfies the
hypothesis of the theorem and hence $\chi_2$ is minimally informative and safe.
But recall that in such cases (\ref{eq:noSafeInfo}) there  is no informative and safe protocol (Corollary~\ref{cor:mainBoundInfoSafe}).

\subsection{A solution to the Russian Cards problem with two messages}
\label{sec:minInf2RC}

In this section we present a solution found by Zoe Leyva-Acosta and Eduardo Pascual-Aseff, using a computer program.
The following protocol $\chi$ is a minimally informative 2-coloring of $J(7,3)$.
\vspace{3mm}
\phantom{.}\\
$\chi^{-1}(0) = \{012, 013, 014, 015, 016, 023, 024, 025, 036, 046, 056, 126, 134, 135,$\\
\hspace*{4.5em} $234, 236, 245, 246, 345, 356, 456\}$\\
$\chi^{-1}(1) = \{026, 034, 035, 045, 123, 124, 125, 136, 145, 146, 156, 235, 256, 346\}$
\vspace{3mm}

In Table~\ref{tab:mininf_table} 
we show for each $3$-set $b$, how $\chi$ partitions
the 3-set vertexes in $K_p(\bar{b})$ into two color classes, so that the reader  can verify
that this is in fact a minimally informative coloring for $J(7,3)$. 
To verify that $\chi$ is also a safe coloring, in Table~\ref{tab:safety_table} we show how $\chi$ partitions $K_p(\bar{c})$ for each
card $c$ into two color classes. The reader can check that in all such partitions and for any 
card other than $c$, there is a hand which contains it and another that doesn't. 

\begin{table}[h!]
\centering
\begin{tabular}{| m{0.8cm} | m{2.5cm} | m{2.5cm} | m{0.8cm} | m{2.5cm} | m{2.5cm} |}
\hline

\vspace{2mm}\hspace{2mm}$b$ & \hspace{2mm}$\chi^{-1}(0) \cap K_p(\bar{b})$ & \hspace{2mm}$\chi^{-1}(1) \cap K_p(\bar{b})$ & \vspace{2mm}\hspace{2mm}$b$ & \hspace{2mm}$\chi^{-1}(0) \cap K_p(\bar{b})$ & \hspace{2mm}$\chi^{-1}(1) \cap K_p(\bar{b})$ \\ [0.5ex] 
\hline
\hspace{1.5mm}$012$ & \hspace{1.5mm}\{345, 356, 456\} & \hspace{1.5mm}\{346\} & \hspace{1.5mm}$126$ & \hspace{1.5mm}\{345\} & \hspace{1.5mm}\{034, 035, 045\} \\

\hspace{1.5mm}$013$ & \hspace{1.5mm}\{245, 246, 456\} & \hspace{1.5mm}\{256\} & \hspace{1.5mm}$134$ & \hspace{1.5mm}\{025, 056\} & \hspace{1.5mm}\{026, 256\} \\

\hspace{1.5mm}$014$ & \hspace{1.5mm}\{236, 356\} & \hspace{1.5mm}\{235, 256\} & \hspace{1.5mm}$135$ & \hspace{1.5mm}\{024, 046, 246\} & \hspace{1.5mm}\{026\} \\

\hspace{1.5mm}$015$ & \hspace{1.5mm}\{234, 236, 246\} & \hspace{1.5mm}\{346\} & \hspace{1.5mm}$136$ & \hspace{1.5mm}\{024, 025, 245\} & \hspace{1.5mm}\{045\} \\

\hspace{1.5mm}$016$ & \hspace{1.5mm}\{234, 245, 345\} & \hspace{1.5mm}\{235\} & \hspace{1.5mm}$145$ & \hspace{1.5mm}\{023, 036, 236\} & \hspace{1.5mm}\{026\} \\

\hspace{1.5mm}$023$ & \hspace{1.5mm}\{456\} & \hspace{1.5mm}\{145, 146, 156\} & \hspace{1.5mm}$146$ & \hspace{1.5mm}\{023, 025\} & \hspace{1.5mm}\{035, 235\} \\

\hspace{1.5mm}$024$ & \hspace{1.5mm}\{135, 356\} & \hspace{1.5mm}\{136, 156\} & \hspace{1.5mm}$156$ & \hspace{1.5mm}\{023, 024, 234\} & \hspace{1.5mm}\{034\} \\

\hspace{1.5mm}$025$ & \hspace{1.5mm}\{134\} & \hspace{1.5mm}\{136, 146, 346\} & \hspace{1.5mm}$234$ & \hspace{1.5mm}\{015, 016, 056\} & \hspace{1.5mm}\{156\} \\

\hspace{1.5mm}$026$ & \hspace{1.5mm}\{134, 135, 345\} & \hspace{1.5mm}\{145\} & \hspace{1.5mm}$235$ & \hspace{1.5mm}\{014, 016, 046\} & \hspace{1.5mm}\{146\} \\

\hspace{1.5mm}$034$ & \hspace{1.5mm}\{126\} & \hspace{1.5mm}\{125, 156, 256\} & \hspace{1.5mm}$236$ & \hspace{1.5mm}\{014, 015\} & \hspace{1.5mm}\{045, 145\} \\

\hspace{1.5mm}$035$ & \hspace{1.5mm}\{126, 246\} & \hspace{1.5mm}\{124, 146\} & \hspace{1.5mm}$245$ & \hspace{1.5mm}\{013, 016, 036\} & \hspace{1.5mm}\{136\} \\

\hspace{1.5mm}$036$ & \hspace{1.5mm}\{245\} & \hspace{1.5mm}\{124, 125, 145\} & \hspace{1.5mm}$246$ & \hspace{1.5mm}\{013, 015, 135\} & \hspace{1.5mm}\{035\} \\

\hspace{1.5mm}$045$ & \hspace{1.5mm}\{126, 236\} & \hspace{1.5mm}\{123, 136\} & \hspace{1.5mm}$256$ & \hspace{1.5mm}\{013, 014, 134\} & \hspace{1.5mm}\{034\} \\

\hspace{1.5mm}$046$ & \hspace{1.5mm}\{135\} & \hspace{1.5mm}\{123, 125, 235\} & \hspace{1.5mm}$345$ & \hspace{1.5mm}\{012, 016, 126\} & \hspace{1.5mm}\{026\} \\

\hspace{1.5mm}$056$ & \hspace{1.5mm}\{134, 234\} & \hspace{1.5mm}\{123, 124\} & \hspace{1.5mm}$346$ & \hspace{1.5mm}\{012, 015, 025\} & \hspace{1.5mm}\{125\} \\

\hspace{1.5mm}$123$ & \hspace{1.5mm}\{046, 056, 456\} & \hspace{1.5mm}\{045\} & \hspace{1.5mm}$356$ & \hspace{1.5mm}\{012, 014, 024\} & \hspace{1.5mm}\{124\} \\

\hspace{1.5mm}$124$ & \hspace{1.5mm}\{036, 056, 356\} & \hspace{1.5mm}\{035\} & \hspace{1.5mm}$456$ & \hspace{1.5mm}\{012, 013, 023\} & \hspace{1.5mm}\{123\} \\

\hspace{1.5mm}$125$ & \hspace{1.5mm}\{036, 046\} & \hspace{1.5mm}\{034, 346\} & \hspace{1.5mm}      & \hspace{1.5mm}                  & \hspace{1.5mm}        \\

\hline
\end{tabular}
\vspace{2mm}
\caption{Color partitions of $K_p(\bar{b})$ for each $b$, according to $\chi$}
\label{tab:mininf_table}
\end{table}

\begin{table}[h!]
\centering
\begin{tabular}{| m{0.5cm} | m{4cm} | m{4cm} |}
\hline
\vspace{2mm}
\hspace{1.5mm}$c$ & $\chi^{-1}(0) \cap K_p(\bar{c})$ & $\chi^{-1}(1) \cap K_p(\bar{c})$ \\ [0.5ex] 
\hline
\hspace{1.5mm}$0$ & \{126, 134, 135, 234, 236, 245, 246, 345, 356, 456\} & \{123, 124, 125, 136, 145, 146, 156, 235, 256, 346\} \\
\hline
\hspace{1.5mm}$1$ & \{023, 024, 025, 036, 046, 056, 234, 236, 245, 246, 345, 356, 456\} & \{026, 034, 035, 045, 235, 256, 346\} \\
\hline
\hspace{1.5mm}$2$ & \{013, 014, 015, 016, 036, 046, 056, 134, 135, 345, 356, 456\} & \{034, 035, 045, 136, 145, 146, 156, 346\} \\
\hline
\hspace{1.5mm}$3$ & \{012, 014, 015, 016, 024, 025, 046, 056, 126, 245, 246, 456\} & \{026, 045, 124, 125, 145, 146, 156, 256\} \\
\hline
\hspace{1.5mm}$4$ & \{012, 013, 015, 016, 023, 025, 036, 056, 126, 135, 236, 356\} & \{026, 035, 123, 125, 136, 156, 235, 256\} \\
\hline
\hspace{1.5mm}$5$ & \{012, 013, 014, 016, 023, 024, 036, 046, 126, 134, 234, 236, 246\} & \{026, 034, 123, 124, 136, 146, 346\} \\
\hline
\hspace{1.5mm}$6$ & \{012, 013, 014, 015, 023, 024, 025, 134, 135, 234, 245, 345\} & \{034, 035, 045, 123, 124, 125, 145, 235\} \\
\hline
\end{tabular}
\vspace{2mm}
\caption{Color partitions of $K_p(\bar{c})$ for each $c$, according to $\chi$}
\label{tab:safety_table}
\end{table}

\section{The modular protocol  $\chi_{modn}$ for $\Nc+\Nr=1$}
\label{sec:modularAlgo}


For signature $(\Na,\Nb,\Nc)$, with $\Nn=\Na+\Nb+\Nc+\Nr$, consider
the protocol 
 $\chi_{modn}: \powersetS{\Na}{\cards} \rightarrow \mathbb{Z}_\Nn$, defined by
$$
\chi_{modn}(a)=\sum x\in a \pmod{\Nn}.
$$
All operations in this section are modulo $\Nn$, working in $\mathbb{Z}_\Nn$, even when not explicitly stated.
We show that  $\chi_{modn}$ is informative and safe when $\Nc+\Nr=1$.  It is easy to see that
$\chi_{modn}$ is not informative when $\Nc+\Nr>1$, and more complicated techniques
are needed, discussed in Section~\ref{sec:general-c}.

\subsection{$\chi_{modn}$ is informative and coding theory}
\label{sec:modInfo}

 The result that $\chi_{modn}$ is informative when $\Nc+\Nr=1$ is known and easy~\cite{Cordon-FrancoDF12}.
But our perspective that this is equivalent to being a proper vertex coloring
of $J(\Nn,\Na)$ exposes the connection with coding theory.
It is actually the  argument 
(generalized in Section~\ref{sec:general-c} to $\Nc+\Nr>1$)  behind  a classic 
 coding theory proof that shows a lower bound  on $A(n,4,w)$,
 the maximum number of codewords in any binary code of length $n$,
 constant weight $w$, and Hamming distance 4~\cite[Theorem 1]{GrahamS1980}.

\begin{lemma}
\label{lem:properColBasic}
For $\Nc+\Nr=1$, $\chi_{modn}$ is a proper vertex coloring of $J(\Nn,\Na)$, for $1\leq \Na<\Nn$.
\end{lemma}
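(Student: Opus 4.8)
The plan is to verify directly that adjacent vertices of $J(\Nn,\Na)$ receive distinct colors under $\chi_{modn}$. By Definition~\ref{def:jg} with $d=1$, two distinct $\Na$-subsets $a,a'$ are adjacent exactly when $|a\cap a'|=\Na-1$, i.e.\ when $a'$ is obtained from $a$ by a single shift $a\stackrel{ij}{\longrightarrow}a'=a_{ij}$ for some $i\notin a$, $j\in a$ with $i\neq j$. So it suffices to show $\chi_{modn}(a)\neq\chi_{modn}(a_{ij})$ for every such shift.

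First I would compute the effect of a shift on the sum. Since $a_{ij}=(a\setminus j)\cup\set{i}$, the two sums differ by exactly
$$
\chi_{modn}(a_{ij})-\chi_{modn}(a)\equiv i-j \pmod{\Nn}.
$$
Then I would observe that $i,j\in\cards=\set{0,\ldots,\Nn-1}$ are distinct, so $i-j$ is a nonzero integer with $|i-j|\le \Nn-1$, whence $i-j\not\equiv 0\pmod{\Nn}$. Therefore $\chi_{modn}(a)\neq\chi_{modn}(a_{ij})$, and the coloring is proper for all $1\leq\Na<\Nn$.

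There is essentially no obstacle here: the entire content is that a single-card swap changes the modular sum by the nonzero, bounded quantity $i-j$, which cannot vanish mod $\Nn$. The substance lies not in the argument but in its reading: by Theorem~\ref{def:prColoring}, a proper coloring of $J(\Nn,\Na)$ is exactly an informative protocol, and it is a partition of the $\Na$-subsets into independent sets, each of which is a binary constant-weight code of length $\Nn$, weight $\Na$, and minimum Hamming distance $4$ (adjacent vertices correspond to characteristic vectors at Hamming distance $2$). Thus the lemma simultaneously shows that $\chi_{modn}$ is informative and that $\Nn$ colors suffice, recovering the classical lower bound on $A(\Nn,4,\Na)$ of~\cite{GrahamS1980}. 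This coding-theoretic interpretation is the point I would emphasize, since it is the scaffolding that generalizes to $\Nc+\Nr>1$ in Section~\ref{sec:general-c}.
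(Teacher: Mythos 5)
Your proof is correct and is essentially the paper's own argument: the paper likewise writes adjacent vertices as differing in a single card $x_1\neq x'_1$ and notes that the sums then differ by $x_1-x'_1\not\equiv 0\pmod{\Nn}$ since both cards lie in $\set{0,\ldots,\Nn-1}$. Your shift notation and the concluding coding-theory remarks are consistent with the paper's framing (Section~\ref{sec:modInfo} makes the same connection to $A(\Nn,4,\Na)$ and~\cite{GrahamS1980}), but they do not change the substance of the argument.
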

\begin{proof}
Let
 $a=\set{x_1,x_2,\ldots,x_\Na} $ and
$a'=\set{x'_1,x_2\ldots,x_\Na}$ be adjacent vertices of $J(\Nn,\Na)$, $x_1\neq x'_1$.
Thus, 
 $a\cap a' = \set{x_2,\ldots,x_m}$ and $a\triangle a'=\set{x_1,x'_1}$.
If $\Na\geq 2$, let $  k = \sum x\in  a\cap a' \pmod{\Nn}$, else remove $k$ from the following equation.
 Then
 $\sum x\in a \equiv x_1+k\pmod{\Nn}$,
and
$\sum x\in a' \equiv x'_1+k\pmod{\Nn}$. 
Thus,  $\chi_{modn}(a)\neq\chi_{modn}(a')$, since $x_1\neq x'_1$ and $0\leq x_1,x'_1\leq n-1$.
\end{proof}

Notice that taking the sum modulo a number smaller than $\Nn$ may not give
a proper coloring. For example, for $J(7,3)$, $a=\set{012}$, $a'=\set{126}$, 
$\sum x\in a \pmod{6}=\sum x\in a' \pmod{6}=3$. Yet, we know from
Theorem~\ref{th:informaRussCards} that there is a proper coloring of $J(7,3)$ with $6$ colors.

\subsection{Additive number theory for safety}
\label{sec:3prop}

We have already hinted in Section~\ref{sec:safeChi2} that
proving that  the modular protocol is safe  translates into a question about additive number theory.
For each $M\in\mathbb{Z}_\Nn$, 
we look for solutions to the following linear congruence in $\mathbb{Z}_\Nn$,  
\begin{equation}
\label{eq:mainDistinct}
x_1+x_2+\cdots x_\Na \equiv M\pmod\Nn
\end{equation}
with distinct $x_i\in\mathbb{Z}_\Nn$.  
Additionally, for any given  $\Nc$-subset $c$ of $\mathbb{Z}_\Nn$,
we want that no $x_i\in c$.
Such a solution is denoted $a$, since it corresponds to an $\Na$-set, a vertex
 $a \in K_p(\bar{c})$, and it is said to \emph{avoid} $c$. 
 For $y\in \mathbb{Z}_\Nn$ and a solution $a$ to the linear congruence,
we say that $y\in a$, if $y=x_i$ for some $x_i$ in the solution.
 Finally, we need to show that for any $y\in \mathbb{Z}_\Nn$,  $y\not\in c$
there are two $c$-avoiding solutions, $a,a'$, such that $y\in a$ and $y\not\in a$.


The safety proofs are based on simple  properties about solutions to equation~(\ref{eq:mainDistinct}), stated for the general case of $\Nc+\Nr\geq 1$.
 And Lemma~\ref{lem:shiftCons}, which does not talk about $\Nc+\Nr$ at all.

We  already used the  {shifting} technique in Section~\ref{sec:strGA}.
For a vertex $a$, and cards $i,j$, with  $i\not\in a$, $j\in a$, 
$
a_{ij}=(a\setminus j)\cup \set{i},
$
denoted by an arc 
$
a \stackrel{ij}{\longrightarrow} a_{ij}.
$
For set $c$, we say that $a'$ is $c$-\emph{avoiding-reachable} from $a$ if there
is a directed path defined by a (possibly empty) sequence of arcs $\stackrel{ij}{\longrightarrow}$, all of them with
$i\not\in c$. 
The \emph{weight} of arc  $a \stackrel{ij}{\longrightarrow} a_{ij}$ is $i-j$,
and the weight of a sequence of arcs is the sum of their weights.
We are interested in zero-sum paths, because if there is a zero-sum path from $a$ to $a'$, then
$\chi_{modn} (a)=\chi_{modn}(a')$.
We use the following simple idea repeatedly, illustrated in Figure~\ref{fig-thPrivate-case3}.

\begin{lemma}
\label{lem:basicTool}
Let $c$ be a $\Nc$-set and $a$ an $\Na$-set, $a\subseteq \bar{c}$, with $\Na\geq 2$.
Consider two cards  $z_1,z_2\in a$.
If there exists an integer  $i$, $1\leq i \leq \floor{(z_2-z_1-1)/2}$  such that both 
$z_1+i\not\in a\cup c$ and
$z_2-i\not\in a\cup c$, then let  $y_1=z_1+i$ and $y_2=z_2-i$.
 The following is a zero-sum, $c$-avoiding path  from $a$ to  $a'$ 
$$
a \stackrel{y_1 z_1}{\longrightarrow} a_{1} \stackrel{y_2 z_2}{\longrightarrow} a' .
$$
Thus, $ \chi_{modn} (a)=\chi_{modn}(a')$, and
$a'\cap (c\cup \set{z_1,z_2}) =\emptyset$ and $a\cap (c\cup\set{y_1,y_2})=\emptyset$.
\end{lemma}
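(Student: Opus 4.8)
The statement is essentially a verification: we are handed explicit $y_1 = z_1+i$ and $y_2 = z_2-i$, and we must check that the two-step shifting path does what is claimed. So the plan is to confirm, in order, (i) that each arc is legitimate, (ii) that the composite path is $c$-avoiding, (iii) that its weight is zero so the $\chi_{modn}$ values agree, and (iv) that the two disjointness conclusions hold. The main work is bookkeeping with the hypotheses $z_1+i\notin a\cup c$ and $z_2-i\notin a\cup c$; there is no genuine obstacle, only the need to track which cards leave and enter the hand at each step.

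First I would check the arcs are well-defined. The arc $a \stackrel{y_1 z_1}{\longrightarrow} a_1$ requires $z_1\in a$ (given) and $y_1=z_1+i\notin a$, which is part of the hypothesis $z_1+i\notin a\cup c$; so $a_1=(a\setminus z_1)\cup\{y_1\}$ is a valid $\Na$-set. For the second arc $a_1\stackrel{y_2 z_2}{\longrightarrow} a'$ I need $z_2\in a_1$ and $y_2=z_2-i\notin a_1$. Since $i\le \floor{(z_2-z_1-1)/2}$ forces $y_1=z_1+i < z_2$, we have $y_1\neq z_2$, so removing $z_1$ and inserting $y_1$ leaves $z_2$ untouched, giving $z_2\in a_1$. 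And $y_2\notin a$ by hypothesis, while $y_2\neq y_1$ because $y_1<y_2$ (the same inequality $2i\le z_2-z_1-1$ gives $z_1+i<z_2-i$), so $y_2\notin a_1$ as well. Hence $a'=(a_1\setminus z_2)\cup\{y_2\}=(a\setminus\{z_1,z_2\})\cup\{y_1,y_2\}$.

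Next, the $c$-avoiding property holds because each arc inserts $y_1$, resp.\ $y_2$, both of which lie outside $c$ by hypothesis. The zero-sum claim is the key computation: the weight of the first arc is $y_1-z_1=i$ and of the second is $y_2-z_2=-i$, summing to $0$, so by the remark preceding the lemma $\chi_{modn}(a)=\chi_{modn}(a')$. Finally, the two disjointness statements are read directly off the description of $a'$: since $a'=(a\setminus\{z_1,z_2\})\cup\{y_1,y_2\}$ and $y_1,y_2\notin c$, we get $a'\cap c=\emptyset$ and $z_1,z_2\notin a'$, i.e.\ $a'\cap(c\cup\{z_1,z_2\})=\emptyset$; and since $a\subseteq\bar c$ with $y_1,y_2\notin a$, we get $a\cap(c\cup\{y_1,y_2\})=\emptyset$. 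The only point needing care throughout is the strict ordering $z_1<y_1<y_2<z_2$ guaranteed by the bound on $i$, which is what prevents the two shifts from interfering with one another.
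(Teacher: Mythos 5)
Your verification is correct. The paper actually gives no proof of this lemma at all --- it is introduced as "the following simple idea" and used immediately --- so your write-up is precisely the routine check the paper leaves implicit: both arcs are legal because $y_1,y_2\notin a$ and the bound $2i\le z_2-z_1-1$ keeps $z_1,y_1,y_2,z_2$ pairwise distinct (cyclically, since all arithmetic is in $\mathbb{Z}_\Nn$), the weights $i$ and $-i$ cancel, and the disjointness claims read off from $a'=(a\setminus\{z_1,z_2\})\cup\{y_1,y_2\}$. The only cosmetic caveat is that the strict ordering $z_1<y_1<y_2<z_2$ should be read as membership in the cyclic interval from $z_1$ to $z_2$, which is how the paper's $\ell_1=z_2-z_1-1$ card count is meant.
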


An immediate application of Lemma~\ref{lem:basicTool} is the following, illustrated
in Figure~\ref{fig-thPrivate-case3}.

\begin{lemma}
\label{lem:mainJd1Shft}
Assume $\Na\geq 2$, $\Nc\geq 0$, and $\Na+\Nc < \Nn/2$.
Let $c$ be a $\Nc$-set and $a$ an $\Na$-set, $a\subseteq\bar{c}$. 
For any two $z_1,z_2\in a$  there exist $y_1,y_2$ such that 
the following is a zero-sum, $c$-avoiding path  from $a$ to  $a'$ 
$$
a \stackrel{y_1 z_1}{\longrightarrow} a_{1} \stackrel{y_2 z_2}{\longrightarrow} a' .
$$
Thus, $ \chi_{modn} (a)=\chi_{modn}(a')$, and
$a'\cap (c\cup \set{z_1,z_2}) =\emptyset$ and $a\cap (c\cup\set{y_1,y_2})=\emptyset$.
\end{lemma}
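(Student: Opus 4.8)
The plan is to reduce everything to Lemma~\ref{lem:basicTool}, whose hypothesis is the existence of a single admissible shift $i$; thus it suffices to produce an integer $i$ in the allowed range with both $z_1+i\notin a\cup c$ and $z_2-i\notin a\cup c$. The two cards $z_1,z_2$ cut the cyclic group $\mathbb{Z}_\Nn$ into two arcs of lengths $g_1,g_2$ with $g_1+g_2=\Nn$. Let $k_j$ count the elements of $a\cup c$ lying strictly inside arc $j$; since $z_1,z_2\in a$ are the two endpoints and $a\cap c=\emptyset$, we have $k_1+k_2=(\Na+\Nc)-2$. Writing $t_j=\floor{(g_j-1)/2}$, I would first observe that, after a cyclic rotation of $\mathbb{Z}_\Nn$ (which preserves all set memberships and preserves equality of $\chi_{modn}$-values) and possibly swapping the names of $z_1,z_2$, Lemma~\ref{lem:basicTool} can be applied along whichever of the two arcs we choose, with $i$ ranging over $1\le i\le t_j$ and the candidate points $z_1+i$ and $z_2-i$ all lying in the interior of that arc.

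The first key step is a pairing argument: along arc $j$, if $t_j>k_j$ then a good $i$ exists. The $2t_j$ points $z_1+i$ and $z_2-i$ for $1\le i\le t_j$ are pairwise distinct because $2t_j\le g_j-1<g_j$, and they split naturally into the $t_j$ pairs $\set{z_1+i,\,z_2-i}$. Each element of $a\cup c$ can coincide with at most one of these $2t_j$ points, hence ``spoils'' at most one pair, so at most $k_j$ of the $t_j$ pairs are spoiled. If $t_j>k_j$ some pair survives, giving an $i$ with $z_1+i,z_2-i\notin a\cup c$; Lemma~\ref{lem:basicTool} then yields $y_1=z_1+i$, $y_2=z_2-i$ together with the desired zero-sum, $c$-avoiding path, and with it the stated disjointness conclusions.

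It then remains to guarantee that at least one arc satisfies $t_j>k_j$. I would argue by contradiction: if $t_1\le k_1$ and $t_2\le k_2$, summing gives $t_1+t_2\le k_1+k_2=\Na+\Nc-2$. On the other hand, a short case analysis on the parities of $g_1,g_2$ (using $g_1+g_2=\Nn$) shows $t_1+t_2\ge\floor{\Nn/2}-2$, with equality exactly when $\Nn$ is even and both arcs have even length. Combining the two bounds forces $\Na+\Nc\ge\Nn/2$, contradicting the hypothesis $\Na+\Nc<\Nn/2$. Hence some arc is usable, and the lemma follows.

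The hard part is that this is a tight counting argument: the worst case ($\Nn$ even, both arcs of even length) makes $t_1+t_2$ exactly $\Nn/2-2$, so the strict hypothesis $\Na+\Nc<\Nn/2$ is consumed with no slack and the floor bookkeeping must be carried out exactly. The other delicate point is the orientation bookkeeping needed to invoke Lemma~\ref{lem:basicTool} legitimately along either arc, i.e.\ checking that swapping the roles of $z_1$ and $z_2$ (which reverses the path direction and replaces $g_j$ by its complement $\Nn-g_j$) leaves both the zero-sum and the $c$-avoidance properties intact.
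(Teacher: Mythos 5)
Your proof is correct and follows essentially the same route as the paper's: you pair $z_1+i$ with $z_2-i$ along each of the two arcs determined by $z_1,z_2$, invoke Lemma~\ref{lem:basicTool} when some pair survives, and when neither arc works derive the same parity-sensitive counting contradiction with $\Na+\Nc<\Nn/2$. The differences are purely notational ($g_j=\ell_j+1$, and ``spoiled pairs'' in place of the paper's count of $a\cup c$ elements inside each interval).
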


\begin{proof}
Let $\ell_1=z_2-z_1-1$ and $\ell_2=z_1-z_2-1$.
Thus, $\ell_1$ is the number of cards in the interval $(z_{1},z_2)$ and $\ell_2$ is the number of cards in the interval $(z_2,z_1)$.

Let $i$, $1\leq i \leq \floor{\ell_1/2}$ be the smallest positive integer such that both 
$z_1+i\not\in a\cup c$ and
$z_2-i\not\in a\cup c$.
 If there exists such an integer, we are done, taking $y_1=z_1+i$ and $y_2=z_2-i$,
 noticing that $y_1\neq y_2$, since  $ i \leq \floor{\ell_1/2}$.
  Figure~\ref{fig-thPrivate-case3} illustrates three cases.
 \begin{figure}[h]
\centering
\includegraphics[scale=0.4]{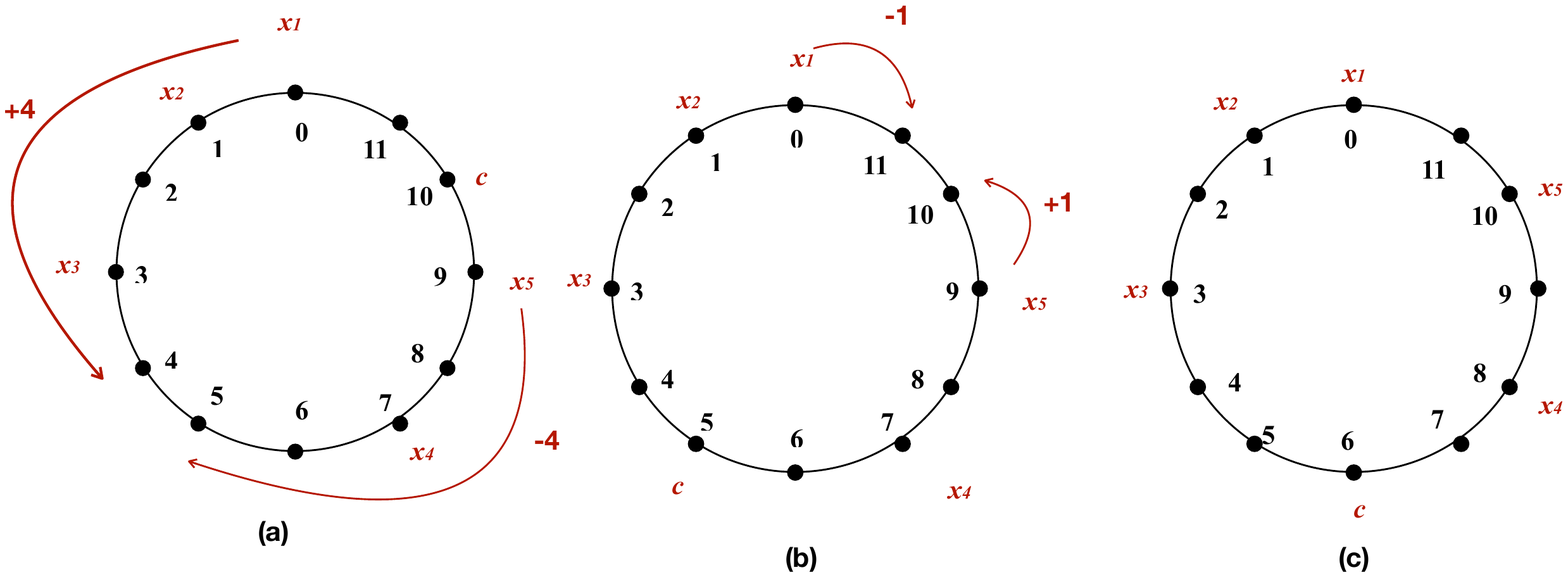} 
\caption{Case $\Nn=12,\Na=5,\Nc=1,\Nr=0$ of Lemma~\ref{lem:mainJd1Shft},
where $a=\set{x_1,x_2,x_3,x_4,x_5}$. In case (c) there is no two-step $c$-avoiding path for $x_1,x_5$}
\label{fig-thPrivate-case3}
\end{figure}
Otherwise, repeat the same argument on the other side, and we are done if there exists
 $i$, $1\leq i \leq \floor{\ell_2 /2}$  such that both 
$z_1-i\not\in a\cup c$ and
$z_2+i\not\in a\cup c$.
Thus (if we are not done),  there is a subset  $a_1$ of $\set{z_1+1,z_1+2,\ldots, z_2-1}$ such that $a_1\subseteq a\cup c$,
and $|a_1|\geq \floor{\ell_1 /2}$, and similarly, 
  a subset  $a_2$ of $\set{z_1-1,z_1-2,\ldots, z_2+1}$ such that $a_2\subseteq a\cup c$,
and $|a_2|\geq \floor{\ell_2/2}$,
and such that $|a_1|+|a_2|+|\set{z_1,z_2}| =\Na+\Nc$.

Hence, $\Na+\Nc \geq \floor{\ell_1/2}+\floor{\ell_2/2}+2$. But recall that  $\ell_1+\ell_2=\Nn-2$,
and thus, a simple case analysis about the parity of $\ell_1$ and $\ell_2$ shows that
$ \floor{\ell_1/2}+\floor{\ell_2/2}+2\geq \Nn/2$, a contradiction to the assumption that
$\Na+\Nc <  \Nn/2$.
\end{proof}

The previous Lemma~\ref{lem:mainJd1Shft} 
does not apply for $J(7,3)$, because in this case $\Na=3,\Nc=1$ and $\Na+\Nc>n/2$.
Indeed, the claim of the lemma is false in this case.
For example, taking $a=\set{0,1,4}$, and selecting $x_1=0,x_3=4$, the 
only possible $a'$ is $a'=\set{5,6}$, so in this case neither $5$ nor $6$ can 
take the value for $c$, they cannot be avoided. 
To deal with the symmetric case, where
 $\Na+\Nc=\floor{\Nn/2}$, the following lemma will be useful.

Notice the effect of shifting by one a vertex $a=\set{x_1,x_2,\ldots,x_{\Na}}$.
Namely,  $\chi_{modn}(\set{x_1+1,x_2+1,\ldots,x_{\Na}+1})=\chi_{modn}(\set{x_1,x_2,\ldots,x_{\Na}})+\Na$. Thus, 

\begin{remark}[Relatively prime]
\label{rm:prime}
If
 $\Na,\Nn$ are relatively prime, then for each $m\in\mathbb{Z}_\Nn$,
there exists an $x_m$, such that $a_m=\set{x_m,x_m+1,\ldots,x_m+\Na-1}$,
 $\chi_{modn}(a_m)=m$.  
\end{remark}

When we are satisfied that $a$ has only $\Na-1$ consecutive
 cards, we can use the following stronger claim.\footnote{
Lemma~\ref{lem:shiftCons} 
is similar to~\cite[Lemma 5]{Cordon-FrancoDF12}, except that this one
gives additional structure to the $\Na$-sets $a$, for $\Na\leq \Nn/2$.
}

\begin{lemma}
\label{lem:shiftCons}
Let $2\leq \Na\leq \Nn/2$.
For each $M\in\mathbb{Z}_\Nn$, and each $x_1\in\cards$, there is an $\Na$-set $a \in\chi_{modn}^{-1}(M)$, consisting of
at least $\Na-1$ consecutive cards, starting in either $x_1$ or $x_1+1$.
\end{lemma}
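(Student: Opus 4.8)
The plan is to exhibit two explicit candidate sets and argue that at least one of them must work. Write $\cards=\set{0,\ldots,\Nn-1}$ and work throughout in $\mathbb{Z}_\Nn$, using that a cyclic sum of residues may be computed as an ordinary integer sum reduced modulo $\Nn$ (so wrap‑around causes no trouble). For the given $x_1$, consider the two blocks of $\Na-1$ cyclically consecutive cards
\[
C_0=\set{x_1,x_1+1,\ldots,x_1+\Na-2}, \qquad C_1=\set{x_1+1,\ldots,x_1+\Na-1},
\]
where $C_1$ is obtained from $C_0$ by adding $1$ to each card mod $\Nn$. Since $\Na-1\leq \Nn/2-1<\Nn$, each block consists of $\Na-1$ \emph{distinct} cards; $C_0$ starts at $x_1$ and $C_1$ starts at $x_1+1$. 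Their sums in $\mathbb{Z}_\Nn$ are $S_0=(\Na-1)x_1+\binom{\Na-1}{2}$ and $S_1=S_0+(\Na-1)$, the latter because shifting each of the $\Na-1$ entries up by one raises the sum by $\Na-1$.

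To turn a block into an $\Na$-set summing to $M$, I would add the single card that repairs the sum: put $y_0=M-S_0$ and $y_1=M-S_1=y_0-(\Na-1)$ in $\mathbb{Z}_\Nn$. If $y_0\notin C_0$, then $a=C_0\cup\set{y_0}$ is an $\Na$-set with $\chi_{modn}(a)=M$ containing the run of $\Na-1$ consecutive cards starting at $x_1$, and we are done; symmetrically, if $y_1\notin C_1$, then $a=C_1\cup\set{y_1}$ works with the run starting at $x_1+1$. Hence it suffices to rule out the simultaneous failure $y_0\in C_0$ \emph{and} $y_1\in C_1$.

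For this bad case, write $y_0=x_1+j$ with $0\leq j\leq \Na-2$ and $y_1=x_1+k$ with $1\leq k\leq \Na-1$. Substituting $y_1=y_0-(\Na-1)$ gives $k\equiv j-(\Na-1)\pmod{\Nn}$, i.e. $k-j+(\Na-1)\equiv 0\pmod{\Nn}$. But the integer $k-j+(\Na-1)$ lies in $\set{2,\ldots,2\Na-2}$, and the hypothesis $\Na\leq \Nn/2$ gives $2\Na-2\leq \Nn-2<\Nn$; thus this integer is strictly between $0$ and $\Nn$ and cannot be a multiple of $\Nn$ — a contradiction. So at least one block can always be completed, which proves the lemma (the $\Na=2$ case, where both blocks are singletons, is covered by the same computation).

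I expect the main obstacle to be careful bookkeeping rather than any deep idea. One must track that the shift $C_1=C_0+1$ changes the sum by exactly $\Na-1$, keep the two index ranges for $j$ and $k$ straight, and verify that $\Na\leq \Nn/2$ is precisely what separates the residues $\set{1,\ldots,\Na-1}$ that $y_1-x_1$ would need from the block $\set{\Nn-\Na+1,\ldots,\Nn-1}$ that $y_1\in C_1$ together with $y_0\in C_0$ actually forces. Making this separation tight is the crux; the remaining ingredient is only the routine observation that cyclic consecutive sums reduce correctly mod $\Nn$, so the construction is immune to wrap‑around at the ends of $\cards$.
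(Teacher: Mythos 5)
Your proof is correct and takes essentially the same approach as the paper's: both build the candidate sets from a block of $\Na-1$ consecutive cards starting at $x_1$ or at $x_1+1$ together with a single extra ``repair'' card, and both rely on $\Na\leq \Nn/2$ to guarantee the extra card can be chosen outside the block. The only difference is in the finish --- the paper fixes $x_1=0$ w.l.o.g.\ and enumerates $\Nn$ such sets whose sums are consecutive integers, hence covering all of $\mathbb{Z}_\Nn$, whereas you solve for the repair card $y_i=M-S_i$ directly and rule out a simultaneous collision in both blocks via the congruence $k-j+(\Na-1)\equiv 0 \pmod{\Nn}$ --- a per-$M$ contradiction argument that is logically equivalent and equally valid.
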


\begin{proof}
For the general case where $\Na,\Nn$ may not be relatively prime, assume w.l.o.g. that $x_1=0$.
We  prove that there are $\Nn$ distinct $\Na$-sets $a_r$,
such that for each $M$, one of them is in $\chi_{modn}^{-1}(M)$.
Each vertex  $a_r$ consists of $\Na-1$ consecutive values starting at either $0$ or $1$,
plus one additional value. 
For $\Na-1\leq r\leq 2\Na-2$, let $a_r=\{0,1,\ldots, \Na-2,r\}$. Thus,
\begin{align*}
a_{\Na-1}&=\set{0,1,\ldots, \Na-2, \Na-1},\\
a_{\Na}   &=\set{0,1,\ldots, \Na-2, \Na},\\
a_{\Na+1} &=\set{0,1,\ldots, \Na-2, \Na+1},\\ 
\, & \vdots \\
a_{2\Na-2} & =\set{0,1,\ldots, \Na-2, 2\Na-2}.
\end{align*}
Notice that each $a_r$ consists of a set of $\Na$ distinct values, since  we are assuming $\Na\leq n/2$.
Now, for $2\Na-1\leq r\leq n+\Na-2$,
 let $a_{r} = \{1,2,\ldots, \Na-1,r-\Na+1\}$. Thus,
 \begin{align*}
 a_{2\Na-1} & =\set{1,\ldots, \Na-1, \Na},\\ 
a_{2\Na} & =\set{1,\ldots, \Na-1, \Na+1},\\
a_{2\Na+1} &=\set{1,\ldots, \Na-1, \Na+2},\\
\, & \vdots \\
a_{n+\Na-2} &=\set{1,\ldots, \Na-1,n-1}.
\end{align*}
Again,  each $a_r$ consists of a set of $\Na$ distinct values, since  we are assuming $\Na\leq n/2$.
  Notice that  $\chi_{modn}(a_{2\Na-1})=\chi_{modn}(a_{2\Na-2})+1$ (mod $n$).
And in general,    $\chi_{modn}(a_{r+1})=\chi_{modn}(a_{r})+1$ (mod $n$).
In total, we have that $a_{\Na-1},\ldots, a_{n+\Na-2}$ are $n$ distinct values
(mod $n$).
Thus, for each $M$, there is one $a_r \in\chi_{modn}^{-1}(M)$.
\end{proof}

\subsection{If $\Nc+\Nr= 1$ then $\chi_{modn}$  is  safe}
\label{sec:modSafe}

Now we show that when $\Nc+\Nr= 1$, the  well-known codes
described in Section~\ref{sec:modInfo}, defined by $\chi_{modn}^{-1}$, 
are  safe. We prove it 
 using  the elementary additive number theory properties (the theorem  generalizes and simplifies results of~\cite{Cordon-FrancoDF12})\footnote{
In~\cite[Corollary 9]{Cordon-FrancoDF12}  it is shown that the protocol is safe
when $\Nn$  prime, 
with a proof based on a non-trivial theorem 
by Dias da Silva and Hamidoune~\cite[Theorem 4.1]{DaSilvaDH94}. Which is analogous to
the Cauchy–Davenport theorem, the first theorem in additive group theory~\cite{mann1973}.
Then, this result was extended to~\cite[Theorem 13]{Cordon-FrancoDF12}, proving that 
a protocol that announces
the sum of the cards modulo $p$ is safe, except for $(4,3,1)$, $(3,4,1)$,
where $p$ is the least prime greater than or equal to $\Na+\Nb+1$.
For this, Bertrand's postulate, as well as a theorem of Nagura~\cite{nagura1952} was used
(stating that one can always find a prime number relatively close to a given integer).
}
 of Section~\ref{sec:3prop}.

Recall the safety characterization of Theorem~\ref{th:safetyMain}.
 Instantiated for
 protocol $\chi_{modn}$
 it says that\footnote{
 It is similar to~\cite[Proposition 6]{Cordon-FrancoDF12}, except that this proposition
 also says that if $\chi_{modn}$ is safe, then for each value $M$ of $\cards$, there is an $a\subseteq \bar{c}$  for which $\chi_{modn}(a)=M$. 
  }
$\chi_{modn}$ is {safe} 
if and only if for each  $\Nc$-set $c$, 
$y\in\bar{c}$, and $M\in \chi_{modn}(K_p(\bar{c}))$, 
there exist $\Na$-sets $a,a' \subseteq \bar{c}$, $\chi_{modn}(a)=\chi_{modn}(a')=M$ such that $y\in a\triangle a'$.
Thus, we can assume that $\Nc=1,\Nr=0$, because proving that  the protocol is safe in this case,
implies that it is  safe when  $\Nc=0,\Nr=1$.

%

The conditions that  $\Na,\Nb\geq 3$ are necessary,
by Corollary~\ref{cor:mainBoundInfoSafe}.
Also, $\Nn\geq 7$, because  if $\Na=3$ and $\Nn=6$, then the protocol is not safe.
For instance, if $C$ has hand $5$ and hears announcement $4$
(because $A$ has hand $\set{0,1,3}$),
then she can deduce that $A$ does not have card $4$.

In the proof  we will assume that $\Na\leq \floor{\Nn/2}$,
by the duality Theorem~\ref{th:dual}.
Furthermore, to make the proof more elegant, we prove the 
  (almost) symmetric cases where $\Na+\Nc \geq  \Nn/2$ separately,
  in Appendix~\ref{app:symmModn}. Then, we can  use Lemma~\ref{lem:mainJd1Shft} directly.

\begin{theorem}
\label{th:safetyRussCards}
The protocol $\chi_{modn}$ is informative and safe when $\Nc+\Nr=1$, $\Na,\Nb\geq 3$, $\Nn\geq 7$.
\end{theorem}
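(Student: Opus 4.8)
The informative half is immediate: since $\Nc+\Nr=1$ we have $J^{\Nc+\Nr}(\Nn,\Na)=J(\Nn,\Na)$, and Lemma~\ref{lem:properColBasic} already shows $\chi_{modn}$ is a proper vertex colouring of $J(\Nn,\Na)$ for $1\le\Na<\Nn$; by the informative characterisation of Theorem~\ref{def:prColoring} this is exactly informativeness. So the whole content is safety, and I would begin with the three reductions signalled in the text. First, it suffices to treat $\Nc=1,\Nr=0$: any pair of hands avoiding an arbitrarily chosen card witnesses the weaker $\Nc=0,\Nr=1$ requirement as well. Second, by the duality Theorem~\ref{th:dual} I may assume $\Na\le\floor{\Nn/2}$, since $\Na+(\Nb+1)=\Nn$ forces the smaller of $\Na,\Nb+1$ to be at most $\floor{\Nn/2}$ and the dual signature still satisfies the hypotheses. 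Third, I would peel off the near-symmetric range $\Na+\Nc\ge\Nn/2$ into Appendix~\ref{app:symmModn}, leaving the main case $\Na+\Nc<\Nn/2$. Instantiating the safety characterisation Theorem~\ref{th:safetyMain}(\ref{th:safetyMain2}) for $\chi_{modn}$, the goal becomes: for every card $c$, every $y\ne c$, and every achievable $M$, exhibit $\Na$-sets $a,a'\subseteq\bar c$ with $\chi_{modn}(a)=\chi_{modn}(a')=M$ and $y\in a\triangle a'$.

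The key simplification is that a \emph{single} hand suffices. Suppose I can build one $\Na$-set $a$ with $\chi_{modn}(a)=M$, $a\subseteq\bar c$, and $y\in a$. Because $\Na\ge3$ I may choose a second card $z\in a$, $z\ne y$, and apply Lemma~\ref{lem:mainJd1Shft} (valid since $\Na+\Nc<\Nn/2$) with $z_1=y$, $z_2=z$. This returns $a'$ with $\chi_{modn}(a')=M$, $a'\subseteq\bar c$, and $a'\cap\{y,z\}=\emptyset$, so $y\notin a'$. Then $a,a'$ is precisely the required witnessing pair, as $y\in a\setminus a'\subseteq a\triangle a'$. Thus the entire safety argument collapses to one construction: for each $c$, each $y\ne c$, and each $M$, build an $\Na$-set of sum $M$ that contains $y$ and avoids $c$.

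That construction is the main obstacle, and I would obtain it from the consecutive-block colouring of Lemma~\ref{lem:shiftCons} (applicable since $\Na\le\floor{\Nn/2}$), anchoring the block at $y$. Using base point $x_1=y-1$, the lemma produces, for the given $M$, an $\Na$-set of sum $M$ whose run of $\Na-1$ consecutive cards lies in the window $\{y-1,\dots,y+\Na-2\}$; since $\Na\ge3$ this window contains $y$ in both of the lemma's cases, so $y\in a$ comes for free. What remains is to keep $c$ out of $a$, and here the slack $\Na+\Nc<\Nn/2$ is decisive: the block together with its single free ``extra'' card occupies well under half of the deck, leaving ample room to place it in a $c$-free stretch. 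When the window forced around $y$ nonetheless meets $c$ — possible only when $c$ lies within distance $\Na$ of $y$ — I would instead slide the anchor among the $\Na-2$ base points that still force $y\in a$, and, if $c$ persists, expel it by a weight-cancelling (zero-sum) two-card move that carries $c$ out of the block while fixing both $y$ and the total sum $M$.

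The delicate points, which I expect to absorb most of the work, are the tight instance $\Na=3$ (only a single anchor forces $y\in a$) and the configurations with $c$ adjacent to $y$; being finite in structure, I would settle these by the same inward/outward shifting used throughout Section~\ref{sec:3prop}, checking simultaneously that every residue $M$ is realised so that no announcement is vacuous. Combining this construction with the partner-hand step of the second paragraph yields safety in the main case, and together with the symmetric cases of Appendix~\ref{app:symmModn} and the duality reduction this gives safety, hence the theorem, for all $\Na,\Nb\ge3$ and $\Nn\ge7$.
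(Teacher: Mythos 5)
Your proposal is correct and follows essentially the same route as the paper's own proof: informativeness via Lemma~\ref{lem:properColBasic}, the same three reductions (to $\Nc=1,\Nr=0$, to $\Na\le\floor{\Nn/2}$ by the duality Theorem~\ref{th:dual}, and deferring the near-symmetric cases to Appendix~\ref{app:symmModn}), then a hand of sum $M$ containing $y$ and avoiding $c$ built from the consecutive-block Lemma~\ref{lem:shiftCons} anchored at $x_1=y-1$ with $c$ expelled by a zero-sum shift, and finally the partner hand not containing $y$ obtained by Lemma~\ref{lem:mainJd1Shft}. The only cosmetic difference is that you state the ``one hand suffices'' collapse explicitly up front, whereas the paper realizes it implicitly as its two-step construction.
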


\begin{proof}
We already saw that  $\chi_{modn}$ is informative, in Lemma~\ref{lem:properColBasic}.
To prove safety,
as explained above, we may assume that $\Nc=1,\Nr=0$.
Also, we may assume that $\Na\leq \floor{\Nn/2}$,
by the duality
Theorem~\ref{th:dual}.

We have considered the cases where:
$\Nn=2\Na+1$ in Lemma~\ref{lm:safetyRussCprime},
 $\Na=\Nn/2-1$ with both $\Nn$ and $\Na$ even in Lemma~\ref{lem:symmetricRCevenA},
 and $2\Na=\Nn$ in Lemma~\ref{lem:symmetricRC}.
Thus, we may assume that 
 $\Na+1<  {\Nn/2}$, and we can use Lemma~\ref{lem:mainJd1Shft} directly.

Consider an $M\in\mathbb{Z}_\Nn$ and $c\in\cards$. Let $y\in\bar{c}$.
First we show that there is an $\Na$-set $a_1\subseteq\bar{c}$, such that $y\in a_1$ and $\chi_{modn}(a)=M$.

For  $M\in\mathbb{Z}_\Nn$ and $x_1=y-1$, let  $a_1=\set{x_1,x_2,\ldots,x_{\Na}}$ be the set defined by Lemma~\ref{lem:shiftCons}.
Thus, $\chi_{modn}(a_1)=M$, and $a_1$
consists of at least $\Na-1$ consecutive
cards starting in either $x_1$ or $x_1+1$, thus, $y\in\set{x_1,x_2}$, and in both cases, $y\in a_1$.
If $a_1\subseteq\bar{c}$ we are done. 

Thus, assume  $c\in a_1$. 
Then, we use   Lemma~\ref{lem:mainJd1Shft}, to remove $c$ from $a_1$,
without touching $y$.
Namely, we apply the lemma with
 $c$ and any other card of $a$ different from $y$.
We have shown that there is an $\Na$-set $a_1\subseteq\bar{c}$, such that $y\in a_1$ and $\chi_{modn}(a_1)=M$.
 Figure~\ref{fig-safetyRussCards1} illustrates three cases, that can be dealt with,
 even when $\Nn/2=\Na+1$.
 \begin{figure}[h]
\centering
\includegraphics[scale=0.4]{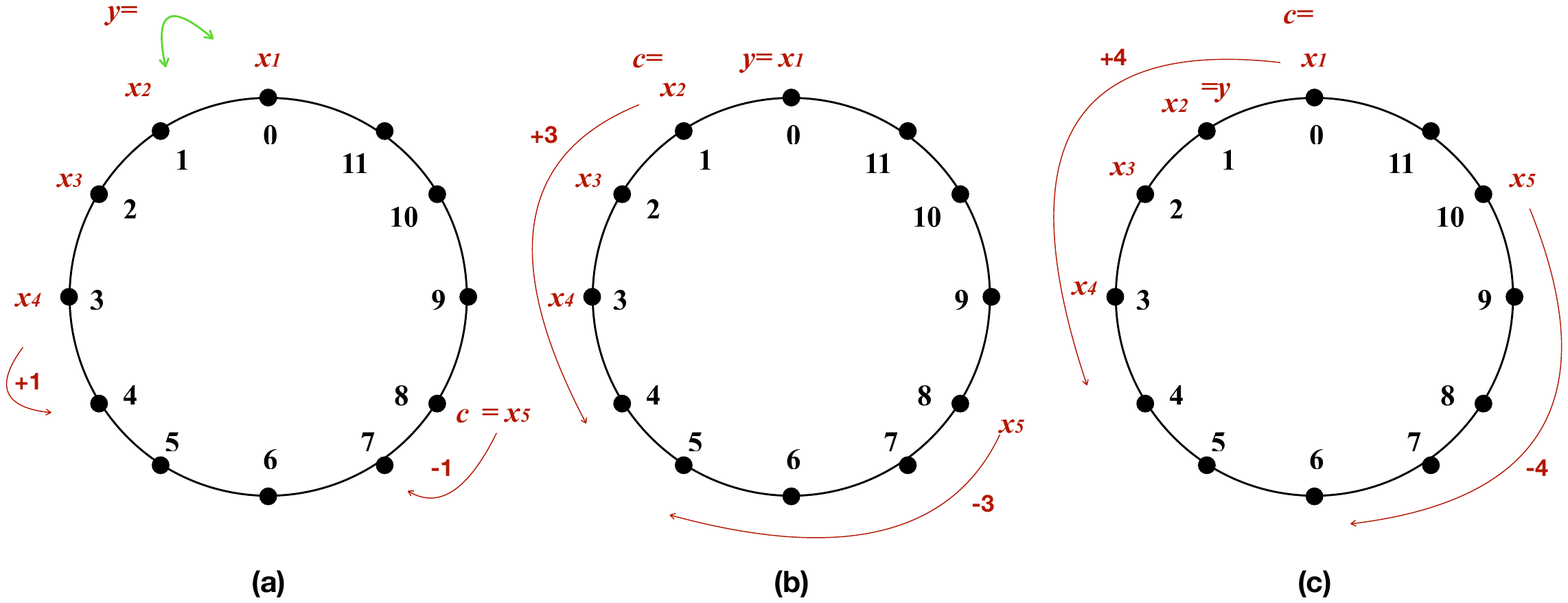} 
\caption{Case  $\Nn=12,\Na=5,\Nc=1,\Nr=0$}. 
\label{fig-safetyRussCards1}
\end{figure}

To complete the proof, we need to show that  there
is an $\Na$-set $a_2\subseteq\bar{c}$, such that $y\not\in a_2$ and $\chi_{modn}(a_2)=M$.
This is done again by a direct application of 
Lemma~\ref{lem:mainJd1Shft}, removing from $a_1$ any two cards that include $y$,
without including $c$. 

%
%
\end{proof}

\section{Informative transmission: the general case $\Nc+\Nr\geq 1$}
\label{sec:general-c}
\label{sec:modInfo-gen-c}

In this section we briefly discuss an informative solution when $\Nc+\Nr\geq 1$.
As far as we know, this is the first informative protocol,
and there is no safe and informative general solution known.
Swanson et al.~\cite{SSadd2014} discuss informative protocols and
their relation to combinatorial designs. They  explain the combinatorial difficulty
of the case  $\Nc+\Nr\geq 1$.

 
We have seen in Section~\ref{sec:modInfo} that $\chi_{modn}$ is informative when  $\Nc+\Nr=1$,
but not when $\Nc+\Nr>1$, namely,
  $\chi_{modn}$ is not a  proper vertex coloring of $J^{\Nc+\Nr}(\Nn,\Na)$.
We are now   behind  the classic 
 coding theory proof that shows a lower bound  on $A(n,2\delta,w)$,
 the maximum number of codewords in any binary code of length $n$,
 constant weight $w$, and Hamming distance $2\delta$.
Namely, the proof that shows that the vertices in $\chi_{modn}^{-i}$ in this case
define a binary code of length $n$, constant weight $w$, and Hamming distance $2\delta$.

We rephrase the coding theory argument from~\cite[Theorem 4]{GrahamS1980} in our notation.
Let $q$ be a primer power (positive integer power of a single prime number), $q\geq n$. Let the elements of the Galois field $\text{GF}(q)$ 
be $w_0,w_1,\ldots, w_{q-1}$. For a vertex $a$ of $J^d(\Nn,\Na)$, let $a_i=1$ if $i\in a$, and else $a_i=0$.
Namely, for the following lemma we view $a$ as a vector  
${a}=(a_0,\ldots,a_{n-1})\in \mathbb{F}_{\Na}^{n}$.
Define $\bar{\chi}({a})$ to be the vector  $(\chi_1(a), \chi_2(a), \ldots, \chi_{d}(a))$,
\begin{equation} \label{eq1}
\begin{split}
\chi_1({a}) & = \sum_{a_i=1} w_i, \\
\chi_2({a})  & = \sum_{\substack{ {i<j}\\ {a_i=a_j=1}}} w_i w_j, \\
\chi_3({a})  & = \sum_{\substack{ {i<j<k}\\ {a_i=a_j=a_k=1}}} w_i w_j w_k, \\
\cdots
\end{split}
\end{equation}
Then, for $\vec{v} \in \text{GF}(q)^{d}$, the set of vertices colored $\vec{v}$ is $\bar{\chi}^{-1}(\vec{v})$.

Recall that if $d\geq \min\{\Na,\Nn-\Na \}$  then $J^d(\Nn,\Na)$ it is a complete graph.

\begin{lemma}
\label{lem:properColBasicG}
$\bar{\chi}$
 is a proper vertex coloring of $J^d(\Nn,\Na)$, $d\geq 1$,
 and $d<\min\{\Na,\Nn-\Na \}$. 
\end{lemma}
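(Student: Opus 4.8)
The lemma states that the coloring $\bar{\chi}(a) = (\chi_1(a), \ldots, \chi_d(a))$ — where $\chi_k(a)$ is the $k$-th elementary symmetric polynomial in the field elements $\{w_i : i \in a\}$ — is a proper coloring of the distance-$d$ Johnson graph $J^d(\Nn,\Na)$, under the hypothesis $1 \le d < \min\{\Na, \Nn-\Na\}$. Being a proper coloring means that adjacent vertices get distinct color vectors. Two vertices $a, a'$ are adjacent in $J^d(\Nn,\Na)$ precisely when $\Na - d \le |a \cap a'| \le \Na - 1$, equivalently when their symmetric difference satisfies $2 \le |a \triangle a'| \le 2d$. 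So I must show: if $a \ne a'$ with $|a \triangle a'| \le 2d$, then $\bar{\chi}(a) \ne \bar{\chi}(a')$.

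**The plan.** Let me think about what I'd do.

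Let me think about this more carefully. Suppose toward contradiction that $\bar\chi(a) = \bar\chi(a')$ for adjacent $a \ne a'$. Write $k = |a \triangle a'|/2 \le d$, and let $I = a \cap a'$ be the common part. Set $S = \{w_i : i \in a \setminus a'\}$ and $T = \{w_i : i \in a' \setminus a\}$, each of size $k$. The elementary symmetric polynomials of the multiset $\{w_i : i \in a\}$ factor through those of $I$ and those of $S$; similarly for $a'$ and $T$. The key algebraic fact is that the generating polynomial $\prod_{i \in a}(x - w_i)$ has coefficients given (up to sign) by the $\chi_j(a)$. The plan is to show that matching $\chi_1, \ldots, \chi_d$ forces the two products $\prod_{w \in S}(x-w)$ and $\prod_{w \in T}(x-w)$ to be equal as polynomials, which since the $w_i$ are distinct field elements forces $S = T$, contradicting that $a\setminus a'$ and $a'\setminus a$ are disjoint and nonempty.

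Here is the argument in steps. First I would form the monic polynomials $f(x) = \prod_{i \in a}(x - w_i)$ and $g(x) = \prod_{i \in a'}(x - w_i)$, both of degree $\Na$, whose coefficients are $(-1)^j \chi_j(a)$ and $(-1)^j \chi_j(a')$ respectively for $1 \le j \le \Na$. By hypothesis $\bar\chi(a) = \bar\chi(a')$, so $\chi_j(a) = \chi_j(a')$ for all $1 \le j \le d$; hence $f$ and $g$ agree in their top $d$ coefficients below the leading one, i.e. $f(x) - g(x)$ has degree at most $\Na - d - 1$. Second, I would factor out the common part: letting $h(x) = \prod_{i \in I}(x - w_i)$, we have $f = h \cdot F$ and $g = h \cdot G$ where $F(x) = \prod_{w \in S}(x-w)$ and $G(x) = \prod_{w \in T}(x-w)$ are monic of degree $k \le d$. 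Then $f - g = h \cdot (F - G)$. Now $F - G$ has degree at most $k-1 \le d-1$ (the leading terms cancel), and $h$ has degree $\Na - k$. The critical step is to compare degrees: $\deg(f-g) \le \Na - d - 1$, while if $F \ne G$ then $\deg(h(F-G)) = (\Na - k) + \deg(F-G)$. The main obstacle is ruling out cancellation, which I expect to handle by arguing that $h$ cannot divide $f - g$ unless $F = G$: since $h = \prod_{i \in I}(x - w_i)$ has $\Na - k$ distinct roots $\{w_i : i \in I\}$, and these roots are disjoint from $S$ and $T$, evaluating $f - g = h(F-G)$ at any root $w_i$ of $h$ gives $0$ automatically, so I instead compare degrees directly: $(\Na - k) + \deg(F - G) \le \Na - d - 1$ forces $\deg(F-G) \le d - 1 - (d - k) \cdot \ldots$ — let me just use that $\deg(F-G) \le k-1$ and $\deg(h) = \Na - k$, so $\deg(h(F-G)) \ge \Na - k$ whenever $F \ne G$ (since $h \ne 0$ and $F - G \ne 0$ means the product is a nonzero polynomial of degree $(\Na-k) + \deg(F-G) \ge \Na - k$). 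Combining with $\deg(f - g) \le \Na - d - 1 < \Na - k$ (valid since $k \le d$), we get a contradiction unless $F - G = 0$, i.e. $F = G$. Since $S, T$ are sets of distinct roots of monic polynomials, $F = G$ implies $S = T$. But $S \subseteq \{w_i : i \in a \setminus a'\}$ and $T \subseteq \{w_i : i \in a' \setminus a\}$ with the index sets disjoint and the $w_i$ distinct, so $S \cap T = \emptyset$; combined with $S = T$ this forces $S = T = \emptyset$, i.e. $k = 0$ and $a = a'$, the desired contradiction.

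**Where the hypotheses enter and the main difficulty.** The bound $d < \min\{\Na, \Nn - \Na\}$ guarantees $J^d(\Nn,\Na)$ is not already complete (so the claim is nonvacuous) and ensures there is room: I need $\deg h = \Na - k > 0$ for the degree comparison to bite, which is where $k \le d < \Na$ is used, and I need $q \ge n$ so the $n$ field elements $w_0, \ldots, w_{n-1}$ are genuinely distinct (this is what makes $F = G \Rightarrow S = T$). The main obstacle is organizing the degree-counting cleanly — in particular verifying that the near-equality of $f$ and $g$ in the top $d$ coefficients translates exactly into $\deg(f - g) \le \Na - d - 1$, and that this is strictly less than $\Na - k \ge \Na - d$; this strict inequality $\Na - d - 1 < \Na - d$ is what closes the argument and does not even need the full strength of $d < \Na - \ldots$, only $k \le d$. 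I would present the polynomial-factorization viewpoint as the conceptual core, since it makes transparent that matching the first $d$ symmetric functions of two $\Na$-sets differing in at most $2d$ positions is impossible unless the sets coincide.
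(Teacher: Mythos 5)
Your proof is correct, and it is at heart the same argument as the paper's (which follows Graham--Sloane): both exploit that the first $d$ elementary symmetric functions of $\{w_i : i\in a\}$ over $\mathrm{GF}(q)$ determine too much of the generating polynomial for two hands with small symmetric difference to collide. The difference is in how the contradiction is closed. The paper passes directly to the differing coordinates, asserts that the elementary symmetric functions $\sigma_1,\dots,\sigma_\gamma$ of $\{\alpha_i\}$ and $\{\beta_i\}$ coincide, and then observes that $\alpha_1,\dots,\alpha_\gamma,\beta_1,\dots,\beta_\gamma$ would be $2\gamma$ distinct roots of the degree-$\gamma$ polynomial $x^\gamma-\sigma_1x^{\gamma-1}+\cdots\pm\sigma_\gamma$. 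You instead keep the full monic polynomials $f=\prod_{i\in a}(x-w_i)$ and $g=\prod_{i\in a'}(x-w_i)$, factor out the common part $h$, and derive the contradiction from the degree bound $\deg(f-g)\le \Na-d-1<\Na-k\le\deg\bigl(h(F-G)\bigr)$ when $F\ne G$. The two endings are equivalent ($F=G$ is exactly the statement that the two degree-$\gamma$ polynomials above coincide), but your route has the small advantage of making explicit the step the paper leaves implicit, namely why agreement of $\chi_1,\dots,\chi_d$ on the \emph{full} hands forces agreement on the differing parts --- in the paper this requires a short unstated induction via $e_j(a)=\sum_{l}e_{j-l}(I)e_l(S)$, whereas in your version it falls out of the single degree comparison. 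One cosmetic remark: the hypothesis $d<\min\{\Na,\Nn-\Na\}$ is not actually what makes your degree count work (only $k\le d$ and the distinctness of the $w_i$, i.e.\ $q\ge\Nn$, are used); it merely restricts the statement to the range where $J^d(\Nn,\Na)$ is not complete.
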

\begin{proof}
Consider two vertices $a,b$ of  $J^d(\Nn,\Na)$ 
  viewed as vectors of $\mathbb{F}_{\Na}^{\Nn}$,
and such that $\bar{\chi} (a)= \bar{\chi} (b)$.
Assume for contradiction that $a$ and $b$ are adjacent.
Thus, 
 there are $2\gamma$  distinct coordinates $r_1,\ldots, r_\gamma$, $s_1,\ldots,s_\gamma$,
$\gamma\leq d$,
where $a$ and $b$ disagree, and on all other coordinates they agree.
Say, $a_{r_i}=1$ while  $b_{r_i}=0$, and conversely, $a_{s_i}=0$ while  $b_{s_i}=1$  ($1\leq i\leq \gamma$).
Write $\alpha_i=w_{r_i}$, $\beta_i=w_{s_i}$ ($1\leq i\leq \gamma$).
Since  $\bar{\chi} (a)= \bar{\chi} (b)$ we have
\begin{align*}
\sigma_1 &= \sum_i \alpha_i    = \sum_i \beta_i        \\
\sigma_2 &= \sum_{i<j} \alpha_i\alpha_j    = \sum_{i<j} \beta_i\beta_j        \\
\, & \cdots  \,  \\
\sigma_{d}&=\sum_{i_1<\cdots < i_{d}} \alpha_{i_1} \cdots  \alpha_{i_{d}}   = \sum_{i_1<\cdots < i_{d}} 
\beta_{i_1} \cdots  \beta_{i_{d}}
\end{align*}
Therefore, $\alpha_1,\ldots, \alpha_\gamma, \beta_1,\ldots,\beta_\gamma$ are $2\gamma$
distinct zeros of the polynomial 
$$
x^\gamma - \sigma_1 x^{\gamma-1}+\sigma_2 x^{\gamma-2}-\cdots \pm \sigma_\gamma.
$$
But a polynomial of degree $\gamma$ over a field has at most $\gamma$ zeros.
\end{proof}

Thus, the set of colors needed is of size at most $q^d$.
Which implies that there is always a set of size at most $(2\Nn)^d$ to properly
color $J^d(\Nn,\Na)$,
 because 
  Bertrand's postulate states that  there is a prime $p$ such that $\Nn <p\leq 2\Nn$.

On the other hand, there is a corresponding (asymptotically in terms of $\Nn$, for $\Nc+\Nr$ constant) lower bound\footnote{
Recall that $\binom{z+k}{k}={{k^z}\over{\Gamma(z+1)}}(1+{ z(z+1)\over 2k} + O(k^{-2}))$,
as $k\rightarrow \infty$.
Thus, in more detail, the lower bound in the number of bits is $\Theta( (\Nc+\Nr)\log \Nn -(\Nc+\Nr)\log(\Nc+\Nr))$. 
}.
Namely, by Lemma~\ref{lem:neighbrs}, the clicks
$K_{p} (\bar{b})$ in $\cG_B$ have size
 $p= \binom{\Na+\Nc+\Nr}{\Na}$, and
 by Lemma~\ref{lem:isomorphGenJ},  $J^d(n,m)\cong  J^d(n,n-m)$.
 Thus,
 

\begin{theorem}
\label{th:mainInfoGeneral}
 $\Theta( (\Nc+\Nr)\log \Nn)$ bits are needed and sufficient for an informative protocol.
 \end{theorem}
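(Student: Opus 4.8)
The plan is to read the statement through Theorem~\ref{def:prColoring}, which tells us that an informative protocol is precisely a proper vertex coloring of $J^{\Nc+\Nr}(\Nn,\Na)$; hence the number of messages $|\MA|$ equals the chromatic number of this graph and the number of bits is $\log_2|\MA|$. Writing $d=\Nc+\Nr$, I would prove the matching upper and lower bounds $\Theta(d\log\Nn)$ separately.

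For sufficiency I would invoke Lemma~\ref{lem:properColBasicG}: the coloring $\bar{\chi}$ properly colors $J^{d}(\Nn,\Na)$ using at most $q^{d}$ colors for any prime power $q\geq \Nn$. By Bertrand's postulate there is a prime $p$ with $\Nn<p\leq 2\Nn$, which serves as such a $q$, so at most $(2\Nn)^{d}$ messages suffice. Taking logarithms gives $\log_2\!\left((2\Nn)^{d}\right)=d\,(1+\log_2\Nn)=O(d\log\Nn)=O((\Nc+\Nr)\log\Nn)$ bits, as required.

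For necessity, the key observation is that a proper coloring must assign distinct colors to the vertices of any clique, so $|\MA|$ is at least the clique number of $J^{d}(\Nn,\Na)$. By Lemma~\ref{lem:neighbrs} and Remark~\ref{rm:maximalityClicks} this graph contains a clique $K_p(\bar{b})$ of size $\binom{\Na+d}{d}$ as well as a clique $K'_p$ of size $\binom{\Nn-\Na+d}{d}$ (equivalently, one may pass through the isomorphism $J^{d}(\Nn,\Na)\cong J^{d}(\Nn,\Nn-\Na)$ of Lemma~\ref{lem:isomorphGenJ}). Since $(\Na+d)+(\Nn-\Na+d)=\Nn+2d$, at least one of the two top parameters is $\geq \Nn/2$; calling the larger clique $\binom{m+d}{d}$ with $m+d\geq \Nn/2$, I use the elementary bound $\binom{m+d}{d}\geq\left((m+d)/d\right)^{d}\geq\left(\Nn/(2d)\right)^{d}$ and take logarithms to get $\log_2|\MA|\geq d\log_2\!\left(\Nn/(2d)\right)=d\log_2\Nn-d\log_2(2d)$.

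The main subtlety, rather than any deep obstacle, is interpreting the $\Theta$ correctly: the lower bound carries the correction $-d\log_2(2d)$, so the clean estimate $\Theta((\Nc+\Nr)\log\Nn)$ is asymptotic in $\Nn$ with $d=\Nc+\Nr$ held constant (as flagged in the preceding footnote), the sharp statement for growing $d$ being $\Theta(d\log\Nn-d\log d)$. I would also dispatch the degenerate regime $d\geq\min\{\Na,\Nn-\Na\}$, where Lemma~\ref{lem:properColBasicG} does not apply and $J^{d}(\Nn,\Na)$ is already complete: every vertex then needs its own color, but because $\min\{\Na,\Nn-\Na\}\leq d$ is constant, the count $\binom{\Nn}{\Na}$ is polynomial in $\Nn$ and its logarithm is again $\Theta(d\log\Nn)$, so the two bounds remain consistent.
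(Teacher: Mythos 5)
Your proposal is correct and follows essentially the same route as the paper: the upper bound via Lemma~\ref{lem:properColBasicG} together with Bertrand's postulate, and the lower bound from the clique sizes $\binom{\Na+\Nc+\Nr}{\Na}$ combined with the isomorphism of Lemma~\ref{lem:isomorphGenJ}. Your treatment is somewhat more explicit than the paper's (the elementary bound $\binom{m+d}{d}\geq (\Nn/(2d))^{d}$ in place of the Gamma-function asymptotic in the footnote, and the separate handling of the complete-graph regime), but these are refinements of the same argument rather than a different approach.
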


\section{Conclusions}
\label{sec:concl}

We have presented a new perspective that  brings closer
together previous research on secret sharing and Russian card problems,
by defining the underlying basic problem of safe information transmission
from $A$ to $B$ in the face of an eavesdropper $C$.
The new perspective inspired by distributed computing is based on a formalization in terms of Johnson graphs,
which facilitates using known results about these graphs,
closely related to coding theory, and motivates developing
new additive number theory proofs. We are able thus to prove
new results, as well as explaining and unifying previously known results.

We have assumed, following these previous research lines, that
the inputs are correlated, using a deck of cards.
Considering a deterministic protocol $P_A$ for $A$, we
stayed with the common definition of safety, requiring that $C$ does not
learn any of the cards of $A$ after listening to her announcement.
Also,  we considered the standard definition of informative, requiring
the $B$ learns her whole hand.
We defined a new requirement, of minimal information transfer,
requiring that always $B$ learns \emph{something} about $A$'s hand.

Many interesting avenues remain for future work.
Some problems would imply solutions in coding theory, where 
much research has beed done; the smallest number of messages
needed for informative information transmission is equivalent
to finding the chromatic number of a Johnson graph, 
a question of wide interest which is open even in the case of $J(\Nn,\Na)$, not to mention
the general case of $J^d(\Nn,\Na)$.
For  $d=\Nc+\Nr=1$, we have described solutions which show that no more than 2 additional messages
are needed to go from the known proper coloring solutions with $\Nn-2$ messages (or more), 
to a proper coloring that is additionally safe, with $\Nn$ messages. 
A  thorough study of the general case $d\geq 1$ is beyond the scope of this paper.

The colorings for minimal information transmission do not seem to have been studied
before. Even very concrete cases remain open.
We showed that there is a minimally informative safe protocol for the Russian cards problem $(3,3,1)$ 
with only two messages, but the solution was found using a computer program.
The modular algorithm $\chi_2$ works only in the
cases described by Theorem~\ref{th:mainMinInfo}, and this solution  uses 3 messages.

It would be of course interesting to consider randomized solutions, and the relation to the
%
Fischer, Paterson and Rackoff~\cite{FischerPR89secBitTr} approach.
They consider
the problem of $A$ and $B$ agreeing on a bit that is  secret from $C$, using 
randomized  protocols.
They then mention that it is not clear how to get rid of randomization, because
the protocol itself is known also to $C$, and illustrate the difficulty with the following
example, using the notion of \emph{key set}  $\set{x,y}$. This notion plays a crucial role
in the algorithms of this paper, and subsequent ones.  A key set  consists  of one card of $A$ and one of $B$,
 equally likely, given the information available to $C$, that $A$ holds $x$ and $B$ holds $y$,
or the opposite. 
Then $A$ and $B$ can obtain a secret bit $r$ from the key set, say $r=0$ if $A$ holds
the smaller card.
If $A$ announces a key set $\set{x,y}$ by picking the smallest card in her hand for $x$
and the smallest card not in her hand for $y$, then she may be revealing her entire hand
by announcing $\set{x,y}$.
Nevertheless,  they describe   a deterministic protocol where 
$A$ and $B$ exchange  message several rounds to agree on a bit
that is secret to $C$, that works when $\Nc\leq \min (\Na,\Nb)/3$,  $\Na,\Nb\geq 1$.
Notice that the solutions
that we discuss  do hide form $C$ the location of the cards of $A,B$, while this is not the case for their protocol (but we have not shown that  our solutions satisfy their requirment
that all inputs are equally likely).
%

Our protocols that send one bit, by which $B$ learns one of the
cards of $A$, are somewhat reminiscent of the widely studied oblivious transfer problem~\cite{Schoenmakers2011}, but passive,
in the sense that the whole interaction consists
of $A$ sending a message to $B$, and $A$ does not know 
 which of her cards were learned by $B$.  Namely,
$B$ has no say as to which card he wishes to learn.

%
 
 Notice that
 a solution to the Russian cards problem implies a solution to the secret key problem.
When $\Nc=1,\Nr=0$, consider the $N= {{\Nn-1}\choose{\Na}}$  possible deals to $A$ and $B$, all 
 possible from the perspective of  $C$, indexed
 from $0$ to $N-1$ in some predetermined way, and let $r$ the index of the actual deal. 
 Both $A$ and $B$ can compute $r$, while $C$ has no information about it~\cite{FischerPR89secBitTr}.
 Thus, $A$ and $B$ can share a string of $\log_2 N$ bits, without revealing any of their cards to $C$,
using the Russian cards protocol with signature $(\Na,\Nb,1)$, where $A$ sends a string of $\log_2 (\Nn)$ bits   
and $B$ answers with a $\log_2 (\Nn)$ bit string (again, not clear that they are all equally likely).
%

Many other interesting problems remain open, about the relation
with combinatorial designs that has been thoroughly studied e.g.~\cite{SSadd2014},
about stronger security
requirements e.g.~\cite{LFcaseStudy17},  about fault-tolerant solutions~\cite{HerlihyKR:2013},
and more than two parties e.g.~\cite{Duan2010}. It would be interesting to
understand the role of Johnson graphs in multi-round protocols; there exists work
both from the secret sharing side e.g.~\cite{FischerW92crypto},  and from the Russian cards side~\cite{colorGRSP13,Ditmarsch2011ThreeS}, and of course in distributed computing, although
without preserving privacy~\cite{DFRbits2020}.

%
%
%
%
%
%

\section*{Acknowledgements}
We would like to thank Zoe Leyva-Acosta and Eduardo Pascual-Aseff for their many  comments, and
for finding the 2-message minimally informative solution to the Russian cards problem.
Also, Jorge Armenta,  for his help in the early stages of this research.
This work was supported by  the UNAM-PAPIIT project IN106520.

\bibliographystyle{splncs04}
\bibliography{lit}

%
\appendix
\section{Related work.}
\label{sec:relatedWork}
\label{app:relatedW}
There is related work in several domains: Russian card problems, Johnson graphs, coding theory, 
additive number theory, unconditionally secure key exchange, distributed computability
and correlated inputs.

\subsection{Russian cards}
\label{sec:relatedWork:russianCards}

Many instances of the generalized Russian cards problem have been studied,
included  where the cards are dealt over more than three agents and work on cryptography.
The generalized Russian cards problem has close ties to the field of combinatorial designs, 
particularly for perfect security notions~\cite{LFcaseStudy17,SSadd2014}.
The traditional security requirement of the Russian cards problem,
which is the one we consider,  $C$ may not know with certainty who holds
any given card, that does not mean that she may not have a high probability of guessing this
information correctly. To this end, stronger notions of security have been studied in these papers.

Cord\'on--Franco et al.~\cite{Cordon-FrancoDF12}
investigate  conditions for when $A$ or $B$ can safely announce the sum of the cards they hold modulo 
 the smallest prime greater than or equal to $\Nn = \Na + \Nb + \Nc$. 
 They hold whenever $\Na, \Nb > 2$ and $\Nc = 1$, except for the cases $(3, 4, 1)$ and $(4, 3, 1)$.
 The exceptional cases $(3, 4, 1)$ and $(4, 3, 1)$ are treated separately using Haskell,
 and shown to work with modulo $\Nn$.
 They observe that because $C$ holds a single card, this also implies that $A$ and $B$ 
will learn the card deal from the announcement of the other player. For the general case when $\Nc\geq 1$
they give a characterization of when the protocol is safe, but  notice that the protocol is informative only when $\Nc=1$.

Albert et al.~\cite{Albert2011securComm} investigate both the problem of communicating the entire hand  and communicating a secret bit.
 The analysis includes a sum announcement protocol for the case $(k,k,1)$, where $k\geq  3$; 
  both players announce the sum of their cards modulo $2k+1$. In addition, they show that state safe implies bit safe, and pose the  open question of whether a protocol for sharing a secret bit implies the existence of a protocol for sharing states/card deals.

There are several  additional ways of restating the safety property of Definition~\ref{def:safProt},
such as  CA2 and CA3  from~\cite{Albert2005SafeCF}.

\begin{lemma}[Safety characterization]
\label{lm:safetyMain}
Let $P_A: \powersetS{\Na}{\cards} \rightarrow \MA$.
The following  conditions are equivalent.
\begin{enumerate}
\item \label{lm:safetyMain1}
$P_A$ is {safe}.
\item  \label{lm:safetyMain3}
 For each  $M\in \MA$,  $\Nc$-set $c$,  the following holds.
Let $X_{\bar{c}}$ be the subset  of  $P_A^{-1}(M)$ avoiding $c$.
If $X_{\bar{c}}\neq \emptyset$ then for any
 $y\not\in c$, 
there exist $a,a' \in X_{\bar{c}}$ 
such that $y\in a\triangle a'$.
\item \label{lm:safetyMain4}
 For each  $M\in \MA$, 
 \begin{description}
\item[CA2] for every $\Nc$-set $c$ the members of $P_A^{-1}(M)$ avoiding $c$ have empty intersection, and
\item[CA3] for every  $\Nc$-set $c$ the members of $P_A^{-1}(M)$ avoiding $c$ have
 union consisting of all cards $\cards$ except those of $c$.
\end{description}
\end{enumerate}
\end{lemma}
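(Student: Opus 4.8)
The plan is to prove the two equivalences $(\ref{lm:safetyMain1})\Leftrightarrow(\ref{lm:safetyMain3})$ and $(\ref{lm:safetyMain3})\Leftrightarrow(\ref{lm:safetyMain4})$ separately. The first is essentially a restatement of Theorem~\ref{th:safetyMain}. Under the one-to-one correspondence between $\Nc$-hands $c$ and $C$-vertices of $\cI$, an $\Na$-hand $a$ forms an edge $\{(A,a),(C,c)\}\in\cI$ exactly when $a\cap c=\emptyset$, i.e. when $a\subseteq\bar c$, and such a hand always extends to a full deal since $\Na+\Nc\le\Nn$. Hence the clique $K_p(\bar c)$ of Theorem~\ref{th:safetyMain} is precisely the set of $\Na$-hands avoiding $c$, and for a fixed message $M$ the subset of these colored $M$ is exactly $X_{\bar c}$. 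A message $M$ being ``possible for $(C,c)$'' (equivalently $M\in P_A(K_p(\bar c))$) means exactly $X_{\bar c}\neq\emptyset$, and the requirement of Definition~\ref{def:safProt} that two hands avoiding $c$ with common message $M$ disagree on $y$ is literally the requirement that some $a,a'\in X_{\bar c}$ satisfy $y\in a\triangle a'$. So $(\ref{lm:safetyMain1})\Leftrightarrow(\ref{lm:safetyMain3})$ is immediate, and the real content lies in $(\ref{lm:safetyMain3})\Leftrightarrow(\ref{lm:safetyMain4})$.

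For the second equivalence I would fix a message $M$ and an $\Nc$-set $c$ and abbreviate $X=X_{\bar c}$. The key elementary observation is that, for a single card $y\in\bar c$, the statement ``there exist $a,a'\in X$ with $y\in a\triangle a'$'' holds if and only if $y$ lies in some member of $X$ but not in every member of $X$, that is, $y\in\bigcup X$ and $y\notin\bigcap X$. Now quantify over all $y\in\bar c$. Since every $a\in X$ satisfies $a\subseteq\bar c$, we always have $\bigcup X\subseteq\bar c$ and $\bigcap X\subseteq\bar c$; therefore ``$y\in\bigcup X$ for all $y\in\bar c$'' is equivalent to $\bigcup X=\bar c$, which is exactly CA3, and ``$y\notin\bigcap X$ for all $y\in\bar c$'' is equivalent to $\bigcap X=\emptyset$, which is exactly CA2. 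Thus, for this fixed pair $(M,c)$ with $X\neq\emptyset$, condition (\ref{lm:safetyMain3}) holds iff both CA2 and CA3 hold; taking the conjunction over all $M$ and all $c$ yields $(\ref{lm:safetyMain3})\Leftrightarrow(\ref{lm:safetyMain4})$.

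The one point that needs care — and the main, if minor, obstacle — is the degenerate case $X_{\bar c}=\emptyset$, in which no hand avoiding $c$ carries the message $M$. Condition (\ref{lm:safetyMain3}) is explicitly guarded by ``if $X_{\bar c}\neq\emptyset$'' and so imposes nothing there; I would match this by reading CA2 and CA3 as vacuously satisfied whenever $X_{\bar c}=\emptyset$ (equivalently, letting $M$ in condition (\ref{lm:safetyMain4}) range only over messages actually achievable given $c$), since otherwise the literal readings $\bigcup\emptyset=\bar c$ and $\bigcap\emptyset=\emptyset$ would spuriously fail. Fixing this convention at the outset, the two equivalences combine into the full chain $(\ref{lm:safetyMain1})\Leftrightarrow(\ref{lm:safetyMain3})\Leftrightarrow(\ref{lm:safetyMain4})$.
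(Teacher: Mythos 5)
Your proposal is correct, and it is worth noting that the paper itself never proves this lemma: it is stated in the appendix without argument, with only a pointer to CA2/CA3 of Albert et al., and the only proof the paper supplies is the terse one for Theorem~\ref{th:safetyMain}, which covers exactly your first equivalence $(\ref{lm:safetyMain1})\Leftrightarrow(\ref{lm:safetyMain3})$. Your handling of that part matches the paper's: identifying the $A$-hands forming an edge with $(C,c)$ as precisely the $\Na$-subsets of $\bar c$ (i.e.\ $K_p(\bar c)$), so that ``$M$ possible for $(C,c)$'' becomes $X_{\bar c}\neq\emptyset$. (One pedantic point: the condition guaranteeing that a disjoint pair $(a,c)$ extends to a full deal is $\Na+\Nb+\Nc\leq\Nn$, not merely $\Na+\Nc\leq\Nn$; it holds because $\Nr\geq 0$.) The second equivalence is the genuinely new content, and your key observation --- that for a fixed $y\in\bar c$ the existence of $a,a'\in X$ with $y\in a\triangle a'$ is equivalent to $y\in\bigcup X$ together with $y\notin\bigcap X$, whence quantifying over $y\in\bar c$ and using $\bigcup X\subseteq\bar c$ yields exactly CA3 and CA2 --- is the right argument and is complete. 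Your flagged caveat about $X_{\bar c}=\emptyset$ is also a real issue, not a formality: as literally written, condition~(\ref{lm:safetyMain4}) quantifies over all $M\in\MA$ and all $\Nc$-sets $c$ with no nonemptiness guard, so for a pair $(M,c)$ with no $M$-colored hand avoiding $c$ the union is empty and CA3 fails, making~(\ref{lm:safetyMain4}) strictly stronger than~(\ref{lm:safetyMain3}) unless one adopts your convention (or restricts $M$ to $P_A(K_p(\bar c))$). The paper itself acknowledges that such degenerate pairs can occur, so this correction to the statement is warranted.
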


\subsection{Johnson graphs and algebraic graph theory}
\label{sec:relatedWork:johnsonG}
As we show here,  Johnson graphs capture the relations induced by correlated inputs
defined by a deck of cards. Furthermore,
certain vertex colorings of Johnson graphs turn out to capture 
essence behind  information transmission with such correlated inputs.
Johnson graphs, Kneser graphs and other related highly  symmetric graphs have been well studied through algebraic methods~\cite{GodsilRalgGraphTh}, and  in  spectral analysis of graphs~\cite{horaObata}.
They are related to the  Erd\"os--Ko–Rado Theorem, one of the fundamental results in combinatorics
about intersecting families of sets. 
Its proof uses a simple yet useful operation  called \emph{shifting}, that we use too.
The symmetry  and algebraic properties of Johnson graphs are well understood, yet,
although their chromatic number is important, especially in coding theory,
it remains an open problem, see~\cite[Chapter 16]{godsil_meagher_2015} where there
is a summary of known results, as well as in~\cite{BE11}.

\subsection{Coding theory}
\label{sec:relatedWork:codingT}
Vertex colorings of Johnson graphs  are closely related to coding theory.  
Coding theory  captures  necessary properties
 for  information transmission with such correlated inputs; but the properties are
 not sufficient for the safety
requirement that $C$ does not learn about the inputs, for this, additional properties about the codes
are needed.
The independence number of the Johnson graph $J(n,m)$ is the size of the largest constant weight code with word length $n$, weight $m$, and minimum distance $4$. The chromatic number is the minimum number of parts in a partition into such constant weight codes. There is a lot of literature, due to its combinatorial interest and also applications.
For instance,
Smith et al.~\cite{SmithHP06} extend known tables  of constant weight codes of length $n\leq 28$ up to 63,
motivated by the generation of frequency hopping lists for use in assignment problems
in radio networks. Large distance between codewords gives smaller overlap between lists.
This leads to fewer clashes on the same frequency and so less interference.  Similarly, a
larger number of codewords allows larger list re-use distances in the network and again
leads to lower interference.

A binary constant weight code of word length $n$ and weight $w$ and 
distance $d$  is a collection of $(0,1)$-vectors of length $n$, all having $w$ ones and
$n-w$ zeros, such that any two of these vectors differ in $d$ places.
The Johnson graph $J(n,w)$ is the graph on the binary vectors of length $n$
and weight $w$, adjacent when they have Hamming distance $2$.

The chromatic number of $J(\Nn,w)$ is the minimum number of disjoint constant weight codes of length $n$,
weight $w$, and distance $4$, for which the union is the set of all $n$-tuples with weight $w$.
It is also the minimum number of disjoint packings of $(w-1)$-subsets by $w$-subsets, for which the union
is the set of all $w$-subsets of the $n$-set.
Let $(n,d,w)$ denote a code of length $n$, constant weight $w$, and distance $d$, and let
$A(n,d,w)$ denote the maximum size of an $(n,d,w)$ code. Graham and Sloan~\cite{GrahamS1980} proved, for $d=4$,
that $\chi (J(n,w))\leq n$ for all $0\leq w\leq n$. The proof is actually by the
same algorithm of the Russian cards problem:
putting the structure of abelian group on the coordinate positions, and all words with given sum of the elements in the support form a constant weight code with minimum distance 4.
They present a generalization for all $d$, using an algorithm where a color is a vector,
giving an upper bound for the number of colors need to color $J^d(n,w)$, and
that we describe in Section~\ref{sec:modularAlgo}. This and other more complicated
methods, as well as explicit tables  are described in~\cite{BSSS90},
where the importance in combinatorics and coding of $A(n,d,w)$ is emphasized. 
Although the chromatic number of Johnson graphs have been thoroughly studied, 
there seem to be no non-trivial general lower bounds.
Apparently only a few cases are known where 
$\chi (J(n,w))< n$, and in those cases, $\chi (J(n,w))\geq n-2$, see Brouwer and Etzion~\cite{BE11}.
In general, determining the chromatic number of a Johnson graph is an open problem of wide interest~\cite{godsil_meagher_2015}.

\subsection{Combinatorial Designs}
Coding theory is an enormous topic in its own right, but  some results  are closely connected to another old and large topic: combinatorial designs.
The theory  of designs concerns itself with questions about  subsets  of a set  possessing a high degree of regularity, thus, 
the generalized Russian cards problem has close ties to the field of combinatorial designs.
The signature $(3,3,1)$ was first considered by Kirkman~\cite{kirkman}, 
who suggests a solution using a design.
The design consists of seven triples, which  are precisely the lines that form the projective geometric plane.
  Particularly for perfect security notions, designs are important,
  as demonstrated in~\cite{LFcaseStudy17,swansStinson14,SSadd2014}.
Such notions require $C$ not gaining any probabilistic advantage in guessing the fate of some set of 
$\delta$ cards,  \emph{perfect $\delta$-security}. An equivalence between
  perfectly $\delta$-secure strategies and $(c+\delta)$-designs on $n$ points with block size $\Na$, when announcements are chosen uniformly at random from the set of possible announcements
  is established. Also,  example solutions are provided, including a construction that yields
perfect $1$-security against  when $\Nc=2$, and  a construction  strategy with 
$\Na = 8, \Nb = 13$, and $\Nc = 3$ that is perfectly $2$-secure.
Notice that such stronger security notions requiere protocols that use a larger set of possible messages.

\subsection{Additive number theory}
\label{sec:relatedWork:additiveNumTh}

While coding theory properties are necessary for informative properties of the protocol,
to be safe, additional properties are needed, which define  additive number theory problems, 
at least when working with  additive protocols such as those in~\cite{Cordon-FrancoDF12} 
and those we consider in Section~\ref{sec:sBitTransm} and~\ref{sec:modularAlgo}.
Announcing the cards modulo $7$  was among the answers to a Moscow Mathematics Olympiad problem~\cite{makarychev2001importance}
that motivated subsequent work on Russian card problems.

 Although finding solutions to a linear congruence is a classic problem,  less seems to be known 
when the solution is required to be with distinct values~\cite{adamsP2010,HegdeM2016}, 
the question seems to have been studied first only fairly recently in~\cite{adamsP2010}, and a characterization of when a linear congruence
$$
\alpha_1 x_1+\alpha_2 x_2+\cdots \alpha_\Nn x_\Nn \equiv \alpha \pmod\Nn
$$
with $\alpha, \alpha_1,\ldots\alpha_n\in\mathbb{Z}$ has solutions with distinct values
has been  presented in~\cite{GrynPhiPon2013}.
The characterization implies that in our case (where the first $\Na$ coefficient $\alpha_i=1$ and the others are 
equal to $0$) the congruence has a solution, for every $\alpha\in \mathbb{Z}_\Nn$, 
a fact that can be proved directly rather easily (see~\cite[Lemma 5]{Cordon-FrancoDF12}),
but
to prove safety we need a  more detailed analysis, as explained in Section~\ref{sec:modSafe}.
The question has interesting applications and relations to weighted sub- sequence sum questions,
as described in these papers. Some work exists motivated by a 1964 conjecture
by Erd\"os, and Heilbronn\cite{erd1964addition} 
giving lower bounds of the number of distinct sums of $\Na$-subsets of $\mathbb{Z}_\Nn$,
 later proven by Dias da Silva and Hamidourne~\cite{DaSilvaDH94}, which is what is used
 to analyze the modular algorithm in~\cite{Cordon-FrancoDF12}.

\subsection{Unconditionally Secure Secret Key Exchange}
\label{sec:relatedWork:secureSKeyExch}
The idea that card games could be used to achieve perfect cryptography without further assumptions 
proposed by  Peter Winkler in 1981 in the context of the game of Bridge,
led to a sequence of papers by Fischer and Wright.
Peter Winkler~\cite{Winkler83} developed bidding conventions whereby one bridge player could send her partner 
secret information about her hand that was totally unrelated to the actual bid and completely undecipherable to the opponents, even though the protocol was known to them.
 Much work has continued to be done, especially on the randomized setting,
see e.g. for a more recent paper~\cite{KMizukiTN08}, and information theoretic~\cite{MaurerW99}.

Fischer and Wright's~\cite{FWeffic93}  motivation of considering  \emph{card games}, where $A,B,C$ draw cards from a deck of $d$ cards,
as specified by a signature $(\Na,\Nb,\Nc)$, with $\Na+\Nb+\Nc=\Nn=d$
(in~\cite{FW96} they also discuss a bit the case where there is a card which nobody gets), is as follows.
It is desired correlated random initial local variables for the players, that have a simple structure and a small amount
of initial information. By looking at her own cards, a player gains some information about
the other players' hands: a set of cards that appear in no other player's hand.
It is noted that if the initial local variables are uncorrelated, 
an eavesdropper can simulate any player over all random choices and all possible initial random values and 
learn the secret key.
Thus, Fischer and Wright ask: 
We would like to know which distributions of private initial values allow any team that forms to obtain 
an $n$-bit secret key. Although their protocols use randomization, they
require they \emph{always} work, the key to be \emph{completely} secret from 
a computational unbounded eavesdropper, and \emph{exactly}
known by all players (so standard techniques based on computational difficulty cannot be used).


Fischer and Wright~\cite{FW96} have explored in this and other papers
the problem of players  sharing a secret key using a  deal of cards while their conversation is
overhead by $E$, inspired by the work of Winkler.
They present a general model for communication among players overheard by a passive eavesdropper $E$,
in which all players including $E$ are given private inputs that may be correlated.
They study  secret key exchange in this model.
In particular, they consider the situation in which the team players are dealt hands of cards of 
prespecified sizes from a known deck of distinct cards.
They consider both the cases where $E$ gets the remaining card, and where she gets no card.
They start with an example of a deck of four cards, $A$ is given two, and $B$ one.
They explain that, if $E$  does not see the remaining card or if $A$ and $B$ can use randomization,
then $A$  and $B$ can agree on a perfectly secret bit. If $E$ sees
the remaining card or $A$ and $B$  are required to behave deterministically, 
then $A$ and $B$ cannot agree even on a weakly (strong requires equal probability) secret bit.
More generally, in $N$-\emph{valued multiparty secret key exchange} the players chose a value $v$ from
a known set of $N$ values. In the \emph{perfect} version $E$ considers all $N$ values equally likely,
while in the \emph{weak} version she considers all $N$ values possible.
They define it in terms of three requirements. \emph{Agreement} is met if all parties know
the secret key $B$; \emph{secrecy} is met if the eavesdropper's probability of guessing $B$ correctly
is the same before and after hearing the communication; \emph{uniformity} requires
that $B$ has equal probability of being any of the $2^n$ possible $n$-bit sequences.
Notice that there is no explicit requirement saying that the eavesdropper should not learn
any of the input bits of the players (and indeed in some of their protocols the eavesdropper
learns cards of the other players, e.g. the one-bit secret key exchange protocol in~\cite{FischerW92crypto}),  while this is a requirement for Russian cards games.

In particular,  they show that secret key exchange is not possible if the player's inputs are not correlated.
A \emph{signature} $(s_1,\ldots,s_k;d)$ specifies the hands size $s_i$ for each player and the deck size $d$.
The perfect (resp. weak) capacity of a signature is the largest $N$ such that $N$-valued perfrect (resp.)
weak) secret key exchange is possible when the deal is chosen randomly as specified by the given signature.
Previous work was informal, and some studied the case of $N=2$ and two players.

 Fischer and Wright~\cite{FWeffic93}
proposed a method for reducing the problem of a multi-party $n$-bit secret key exchange to the problem 
of a $2$-party $n$-bit secret key exchange. They present a simulation (that needs randomization),
and needs that the deals in the multiparty signature are large enough.
Hence, using this method, one can easily extend a  protocol from $A,B$ so that it performs a $p$-party $n$-bit
 secret key exchange with $p\geq 3$. In this paper they also describe the \emph{transformation protocol}
 for two parties. This protocol is later improved in~\cite{KMizukiTN08}, where a detail and clear
 analysis  is presenting, showing 
  that the improved transformation protocol establishes 
an $n$-bit secret key exchange for a signature $(a,b;e)$ if and only if $\Psi(a,b;e)\geq n$,
$\Psi$ a function which is approximately proportional to $d$, where $d$ is the number of distinct
cards in the deck. 
 For \emph{key set} protocols, Fischer and Wright show that $A$ and $B$ can share a bit
 if and only if $\Na+\Nb\geq \Nc+2$, this is reported in~\cite{MizukiSN02journal} 
 (journal version of~\cite{MizukiT99}), where
 a characterization for the signatures that are solvable by key set protocols is presented,
 and observe that actually the transformation protocol of Fischer and Wright~\cite{FWeffic93} can deal with a case that is not solvable
 by key set, namely $(3,2;4)$. 
 All this is  for randomized protocols,
 the only case of deterministic protocols that we are aware of is Fischer, Paterson and Rackoff~\cite{FischerPR89secBitTr}, where they give a protocol for secret bit transmission, and show it works
 if $\Nc\leq\min(a, b)/3$. Notice that this protocol is not private against the deal: $A,B$ reveal some of
 their cards in the process. 



\subsection{Distributed Computability}
\label{sec:relatedWork:distrComp}

In a distributed system a set of processes communicate with each other to solve problems. The
simplest kind is one where they start with input values, and decide on output values, once.
In a \emph{task} the domain is a set of input assignments to the processes,
 the range is a set of output assignments, and the task specification $\Delta$ is an input/output relation between them.
An input vector~$I$ specifies in its $i$-th entry the (private) input to the $i$-th process,
and an output vector $O\in\Delta(I)$ states that it is valid for each process $i$ to produce as output the $i$-th entry of~$O$, whenever the input vector is~$I$. 
In more detail, a task $\cT=(\cI,\cO,\Delta)$ is defined by an input complex $\cI$, an output complex $\cO$,
and a carrier map $\Delta$.
An important example of a task is \emph{consensus}, where each process is given an input from a set of possible input values, and the participating processes have to agree on one of their inputs.

Delporte et al.~\cite{DFRbits2020} observed that the least
amount of communication that $A$ and $B$ need to send to each other,
one has to consider a vertex coloring of the graphs $\cI_A$ and $\cI_B$.
As  explained here, when $\cI$ corresponds to a signature $(\Na,\Nb,\Nc)$,
the proper coloring needed is of $J^{\Nc+\Nr}(\Nn,\Na)$.
This is the minimum needed so that $A$ and $B$ can learn each other inputs,
otherwise there will be input vectors indistinguishable to them. Here we explore
the additional requirement that input vectors are  indistinguishable to $C$
after listening to the conversation.

Notions of \emph{indistinguishability}
are central in computer science, particularly in distributed computing.
Representing the indistinguishability structure appropriately, exposes
what can and cannot be done in a given situation~\cite{Indistinguishability}.

A \emph{distributed computing model} has to specify various details related to how the processes communicate
with each other and what type of failures may occur, e.g.~\cite{AW04,Lynch96}. 
It turns out that different models may have different power, i.e., solve different sets of tasks.
 
The theory of distributed computability has been well-developed since the early 1990's~\cite{HerlihyS99}, with origins even before~\cite{BiranMZ90,FischerLP85},
 and  overviewed in a book~\cite{HerlihyKR:2013}.
It was discovered that the reason for why a task may or may not be computable is of a topological nature.
 The input and output sets of vectors are best described as \emph{simplicial complexes},
  and a task can be specified by a relation $\Delta$ from the input complex $\mathcal I$ to the output complex $\mathcal O$.
 The main result is that a task is solvable in the layered message-passing model  if and only if 
 there is a certain subdivision of the input complex ${\mathcal I}$ and a certain simplicial map $\delta$ to the output complex ${\mathcal O}$, that respects the specification $\Delta$. 

Notice that the requirement that $A$ and $B$ decide on each others inputs is closely related to
the \emph{interactive consistency} problem (and other vector consensus variants e.g.~\cite{CNV06}),
 introduced early on~\cite{PSL80} in a system where processes may fail,
and has continued to be studied up to day due to its practical importance. 
Once a solution to interactive consistency is obtained,
a solution to consensus can be obtained, if each process decides e.g. on the majority
of the inputs it has received.

\subsection{Correlated input complex}
\label{sec:relatedWork:correlatedInputs}

In distributed computing a common situation is when the inputs are not correlated.
The input complex $\cI$
is called \emph{colorless:} any input may be assign to any process. Colorless tasks have both   
 input and  output complex colorless.
Correlated inputs make the task computability analysis much more complicated. Thus,  
the book~\cite{HerlihyKR:2013} treats first colorless tasks, and then presents more advanced topological
techniques to deal with the general setting.

In various situations related to renaming, the input complex consists of assigning
distinct input names to the processes, from some domain on input names.
This leads to a card game where each process gets a single card.
Correlated outputs have been considered for this input complex,
many encompassed by the Generalized symmetry breaking
family of tasks~\cite{CastanedaIRR16symBreak}.
In this paper the question is considered of which correlated inputs are sufficient
to solve other tasks, especially set agreement, in a wait-free read/write context.

The condition-based approach started in~\cite{MostefaouiRR03conditions}
 studies subcomplexes of tasks that have a colorless input complex
(consensus or set agreement),
that make an unsolvable task either solvable or more efficiently solvable.
Namely, how much correlation among inputs is required to solve a given task.
Relations with coding theory are investigated in~\cite{FriedmanMRR07}.

\section{Johnson graphs}
\label{sec:Johnson}

In a \emph{Johnson graph} $J(n,m)$  the vertices are  $m$-subsets of a $n$-set,
and two vertices $a,a'$ are adjacent when $a\cap a'=m-1$.
In Figure~\ref{fig-johnsonGs} some examples are depicted.
In other words, when the symmetric difference is $|a\triangle a'|=2$.
The special case of $J(7,3)$ corresponds to $\cG_B$ ($=\cG_B$) of the classic Russian cards
problem.
The  \emph{Johnson distance} $d$ of two $m$-sets is half the size of their symmetric difference.
Thus, the graph $J^d(n,m)$ describes the distance-$d$ relation, and $J^1(n,m)$
is denoted $J(n,m)$.
\begin{figure}[h]
\label{fig-johnsonGs}
\centering
\includegraphics[scale=0.3]{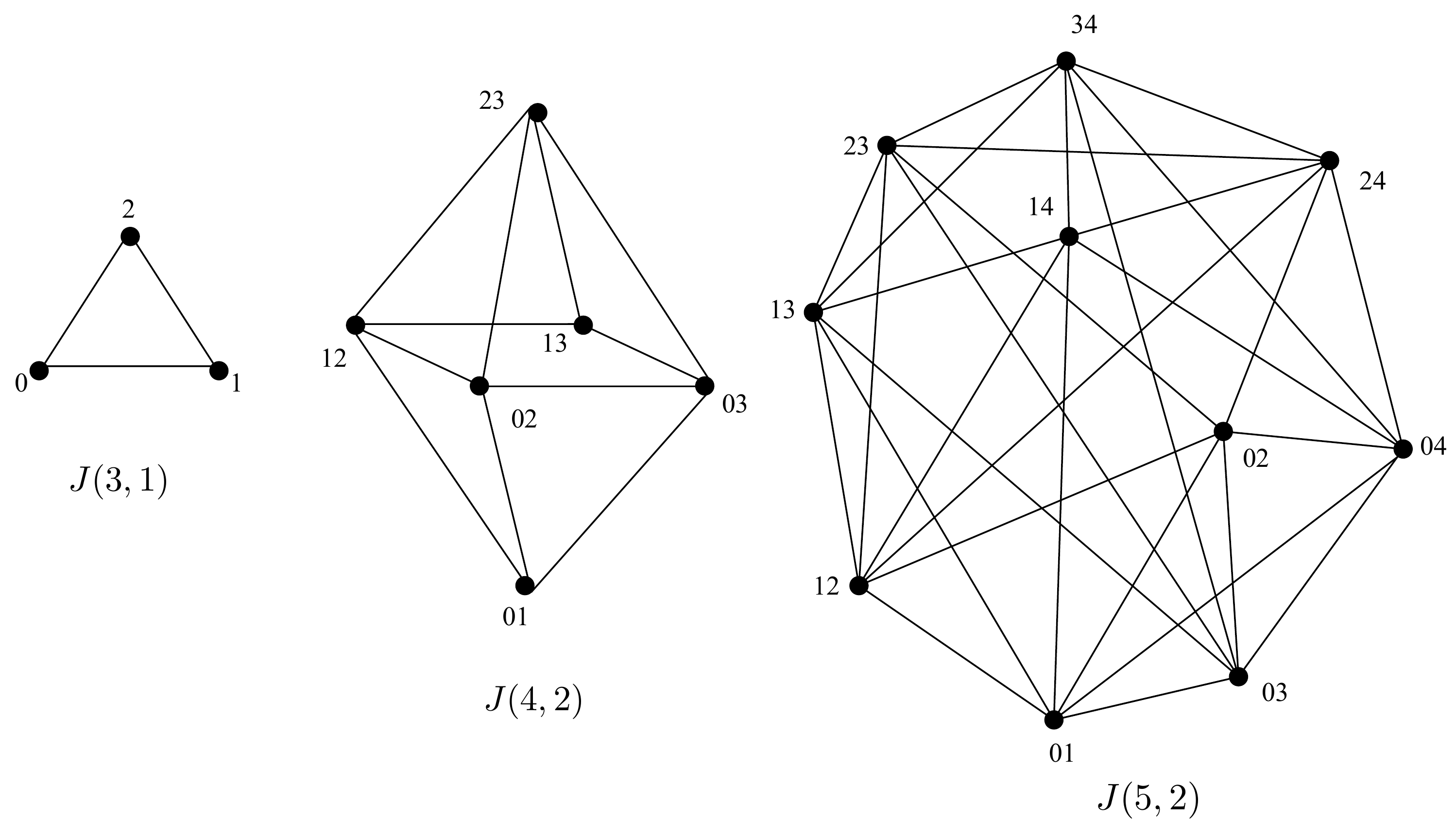}
\caption{Some Johnson graphs.}
\end{figure}

Notice that the \emph{Kneser graph,} $K(n,m)$ is the graph on the $m$-subsets of an $n$-set, adjacent when they are disjoint. And the \emph{generalized Johnson graph} $J(n,m,i)$ 
is the graph on the $m$-subsets of an $n$-set, adjacent whenever their intersection is $i$.
The graphs $J(n,m,m-1)$ are the Johnson graphs,  the graphs $J(n,m,0)$ are the Kneser graphs.
The Kneser graph $J(5,2,0)$ is the famous Petersen graph. All these are highly symmetric
graphs that recur throughout the book~\cite{GodsilRalgGraphTh}.
It is observed there that the following are isomorphic graphs $J(n,m,i)\cong  J(n,n-m,n-2m+i)$, $i\leq m\leq n$,
by the function that maps an $m$-set to its complement. 

Johnson graphs are related to coding theory, Quantum probability~\cite[Chapter 6: Johnson Graphs]{horaObata} 
and Steiner systems, and have been thoroughly studied.
Some of the more relevant   facts to our study are (we provide citations unless they are easy to prove):
\begin{enumerate}
\item 
\label{JG:isom}
The following are isomorphic graphs $J(n,m)\cong  J(n,n-m)$. Also, $J(n,1)\equiv J(n,n-1)\cong K_n$. 
$J(n,2)$ called a \emph{triangular graph},
which is the line graph of $K_n$.
\item 
\label{JG:distance}
Let $\delta(a,a')$ denote the distance between vertices $a,a'$ in $J(n,m)$. Then, $\delta(a,a')=k$
iff $a\cap a'=m-k$. 
Thus two $k$-subsets are adjacent in the Kneser graph $K(n, k)$ if and only if they are at maximum possible distance in $J(n, k)$. 
\item 
\label{JG:regularDiam}
$J(n,m)$ is distance-regular of diameter $\min\set{m, n – m}$. 
\item 
\label{JG:maxClicks}
The set of maximal cliques in $J(n,m)$ are of size $n-m+1$ and $m+1$ see~\cite{RamrasD2011}.
\item 
\label{JG:chrNumb}
The chromatic number of Johnson graphs have been well studied e.g.~\cite{EBitan96},
see Figure~\ref{fig-tableChrJohnsonG}.
But in general, determining the chromatic number of a Johnson graph is an open problem~\cite[Chapter 16]{godsil_meagher_2015}.
For the triangular graph, $\chi (J(n,2))=n$ for odd $n$, and $\chi (J(n,2))=n-1$ for even $n$.
It is known that $\chi (J(n,m))\leq n$. Often the chromatic number is a little bit smaller.
For $ n \equiv 1,3 \pmod{ 6}$, $n > 7$,
$\chi(J(n,3))= n-2$.
For the Russian cards case, notice that it is known that  $\chi(J(7,3)) =6$.
\item
\label{JG:genLB}
As far as we know the only general (for specific instances, there are others) lower bound 
on the chromatic number is $\chi(J(n,m)) \geq\max{ \set{n-m+1,m+1}}$,
implied by the maximal cliques in the Johnson graph.
\item 
\label{JG:transAuto}
The Johnson graph is vertex transitive and distance transitive. For $J(7,3)$, its automorphism group is $S_7.$
\item 
\label{JG:degConn}
$J(n,m)$ is regular of degree $m(n-m)$. Thus, it has vertex connectivity $m(n-m)$. See~\cite{DavenR99}.
\item
\label{JG:conj}
For vertex-transitive graphs with maximum degree  $\Delta\geq 13$
and clique number $\omega$, 	the Borodin-Kostochka conjecture,
$\chi\leq \max\set{\omega,\Delta-1 }$ was proved in~\cite{CranstonR15}.
Also, if $\omega<\Delta$ then $\chi\leq \Delta-1$.
\end{enumerate}

 \begin{figure}[h]
\centering
\includegraphics[scale=0.4]{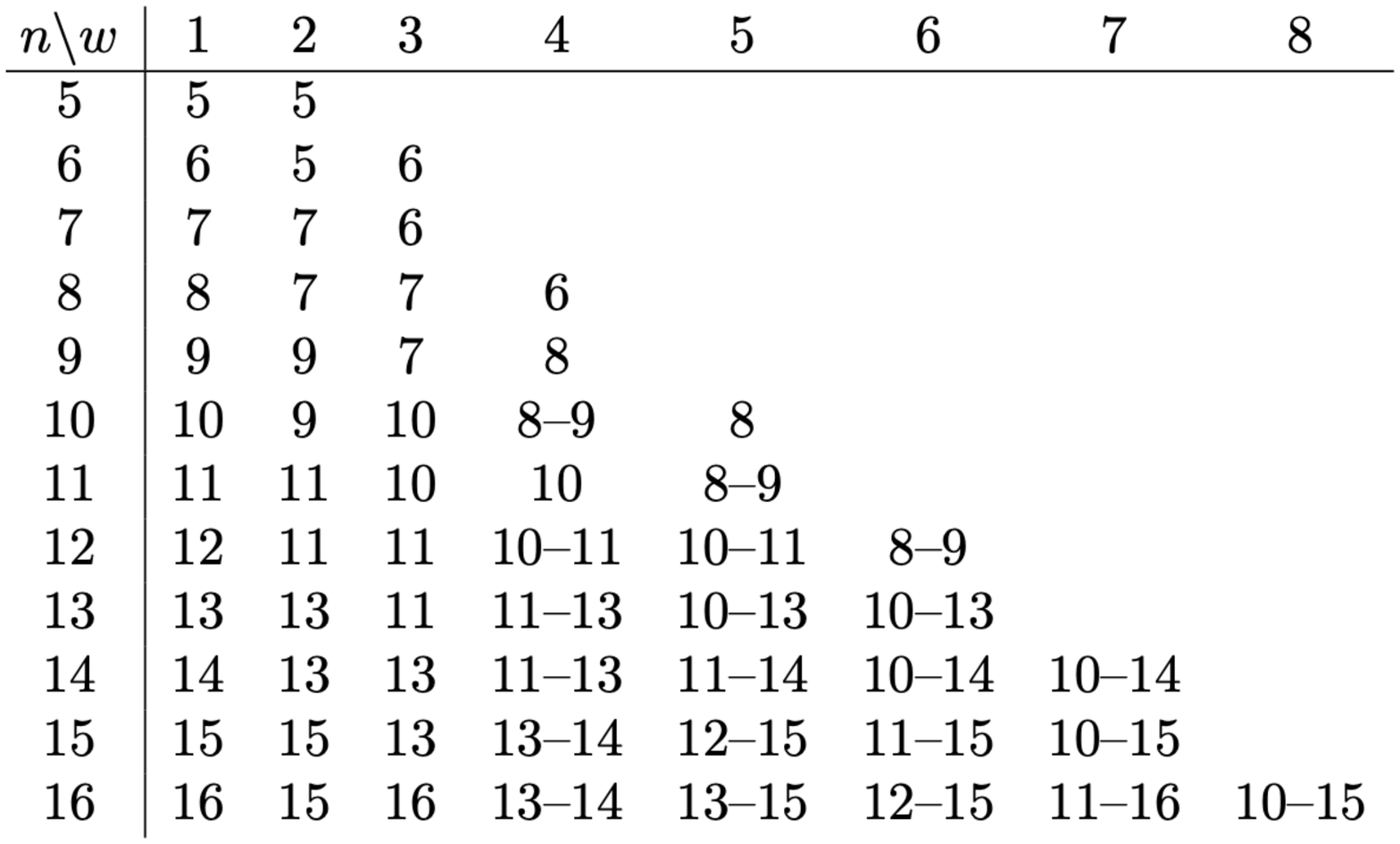} 
\caption{Bounds on the chromatic number of Johnson graphs~\cite[Table 4]{BE11}}. 
\label{fig-tableChrJohnsonG}
\end{figure}

\section{Impossibility of uniform solutions to the Russian cards problem}
\label{app:lowerBoundI}

Here we present additional details about six-message solutions to the Russian
cards problem, and the impossibility of Section~\ref{sec:rusCards:lower}, in which
Theorem~\ref{th:lowerB6} states that 
there is no uniform solution to the Russian cards problem with only six messages,
namely, where at most one color class is of size 7.

In the proof of Theorem~\ref{th:lowerB6}, it is shown that it it is not possible to design three color classes, where all
vertices of $G_0$ are of degree two. Figure~\ref{fig-6coloringRC-deg3ok} shows that
it is possible, using vertices of degree 3.

\begin{figure}[h]
\centering
\includegraphics[scale=0.45]{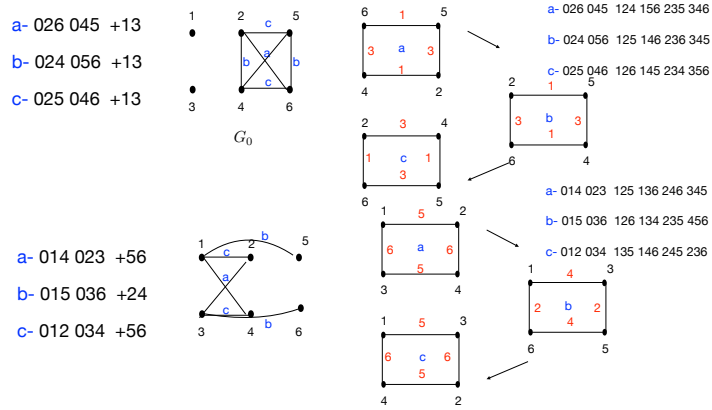}
\caption{It is possible to design three color classes with vertices of degree 3, here are two examples.}
\label{fig-6coloringRC-deg3ok}
\end{figure}

The full tree is  of configurations  $12,34$; $13,56$; $25,46$ is in Figure~\ref{fig-6coloringRC-tree}.

\begin{figure}[h]
\centering
\includegraphics[scale=0.45]{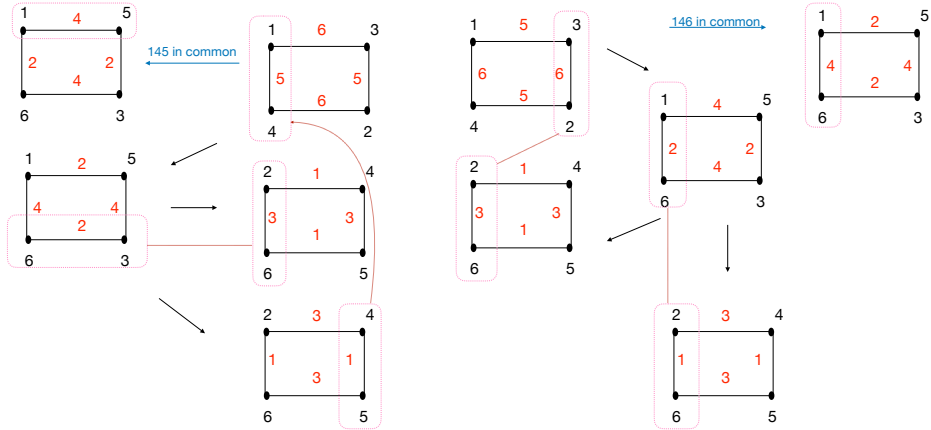}
\caption{The full tree for configuration $12,34$; $13,56$; $25,46$}
\label{fig-6coloringRC-tree}
\end{figure}

\section{Symmetric cases of the $\chi_{modn}$ protocol}
\label{app:symmModn}

We begin with the case of Remark~\ref{rm:prime} where $\Na,\Nn$ are relatively prime
(which includes~\cite[Corollary 9]{Cordon-FrancoDF12}).
In particular, this takes care of cases where $\Nn=2\Na+1$, such as the classic\footnote{
Interestingly, this is the case that had to be treated separately in~\cite{Cordon-FrancoDF12}.
The cases  $(4,3,1)$ and $(3,4,1)$ where checked using a Haskell script.
}
 $(3,3,1)$.

\begin{lemma}
\label{lm:safetyRussCprime}
The protocol $\chi_{modn}$ is safe when $\Nc+\Nr=1$, $\Na,\Nb\geq 3$, $\Nn\geq 7$
and $\Na,\Nn$ are relatively prime.
\end{lemma}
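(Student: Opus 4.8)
The plan is to reduce the safety requirement to a pair of subset-sum covering statements over a twice-punctured deck, and then to settle those using the relative primality of $\Na$ and $\Nn$. First I would invoke the reductions already available: since $\Nc+\Nr=1$, proving safety for $\Nc=1,\Nr=0$ yields the $\Nc=0,\Nr=1$ case, so I take $c$ to be a single card; and by the duality Theorem~\ref{th:dual}, together with $\gcd(\Na,\Nn)=\gcd(\Nn-\Na,\Nn)$, I assume $\Na\le\floor{\Nn/2}$, so that Lemma~\ref{lem:shiftCons} applies. I then instantiate Theorem~\ref{th:safetyMain}: it suffices that for every card $c$, every $y\neq c$, and every $M$, there are $\Na$-sets $a_1,a_2\subseteq\cards\setminus\{c\}$ with $\chi_{modn}(a_1)=\chi_{modn}(a_2)=M$ and $y\in a_1\triangle a_2$. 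Taking $y\in a_1$, $y\notin a_2$ and writing $S=\cards\setminus\{c,y\}$ (a set of $\Nn-2$ residues), this becomes: (I) some $(\Na-1)$-subset of $S$ sums to $M-y$, and (II) some $\Na$-subset of $S$ sums to $M$; then $a_1=\{y\}\cup(\text{the }(\Na-1)\text{-set})$ and $a_2=(\text{the }\Na\text{-set})$ work.

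Second, I would establish the covering statement that every residue of $\mathbb{Z}_\Nn$ is the sum of some $k$-subset of $S$, for $k=\Na$ and $k=\Na-1$. For $k=\Na$ the relative primality enters directly: by Remark~\ref{rm:prime} the consecutive blocks $\{t,\ldots,t+\Na-1\}$ already realize every sum, and, following the construction in the proof of Lemma~\ref{lem:shiftCons} but carried out inside $S$, a run of $\Na-1$ consecutive cards with one adjustable extra card realizes every sum as well, which also covers $k=\Na-1$ even when $\gcd(\Na-1,\Nn)$ exceeds $1$. In the densest case $\Nn=2\Na+1$ the statements (I) and (II) are moreover equivalent by complementation inside $S$: since $\Nn$ is odd, $\binom{\Nn}{2}\equiv 0$, and as $|S|=2\Na-1$ the $S$-complement of an $\Na$-subset is an $(\Na-1)$-subset whose sum is $\chi_{modn}(S)-M=-(c+y)-M$, so establishing one of them suffices.

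The step I expect to be the main obstacle is forcing the realizing subset to avoid \emph{both} punctured points $c$ and $y$ at once, precisely in the dense regime $\Na+\Nc\ge\Nn/2$ where the generic two-step shift Lemma~\ref{lem:mainJd1Shft} is unavailable (exactly why this case is split off). My plan to handle it is to place the consecutive run of the Lemma~\ref{lem:shiftCons} construction inside the longest arc of $\mathbb{Z}_\Nn$ missed by $\{c,y\}$, whose length is at least $\ceil{(\Nn-2)/2}\ge\Na-1$ because $\Na\le\floor{\Nn/2}$, leaving only the single extra card to be positioned; any residual collision with $c$ or $y$ is then removed by a sum-preserving swap as in Lemma~\ref{lem:basicTool}, for which there is room since the complement of $a\cup\{c\}$ still contains $\Nb\ge 3$ cards. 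The tightest instances are the boundary cases $\Nn=2\Na+1$ (the flagship $(3,3,1)$) and $\Nn=2\Na+2$ with $\Na$ odd, where, if needed, I would confirm the covering by the explicit run-placement above or by a short direct check, the relative primality of $\Na$ and $\Nn$ guaranteeing that the adjustable card sweeps a full residue system.
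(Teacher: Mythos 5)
Your reductions are sound: restricting to $\Nc=1,\Nr=0$, invoking duality (Theorem~\ref{th:dual}) together with the observation that $\gcd(\Nn-\Na,\Nn)=\gcd(\Na,\Nn)$ (a detail the paper glosses over), and recasting Theorem~\ref{th:safetyMain} as the two subset-sum covering statements (I) and (II) over $S=\cards\setminus\set{c,y}$ are all correct, and the complementation remark for $\Nn=2\Na+1$ is also right. The genuine gap is in the proof of the covering statements, and it sits exactly in the dense regime $\Nn=2\Na+1$ that this lemma exists to handle (the sparse cases are dispatched by Lemma~\ref{lem:mainJd1Shft} inside Theorem~\ref{th:safetyRussCards}). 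Once both the run of $\Na-1$ consecutive cards and the extra card are confined to $S$, the extra card's position is forced by the target sum, and when that position is $c$, $y$, or inside the run, the target is simply not realized by your family. Concretely, take $\Nn=7$, $\Na=3$, $c=0$, $y=4$, $M=0$: the longest arc is $\set{1,2,3}$, the runs are $\set{1,2}$ and $\set{2,3}$, and the realizable sums are only $\set{1,2,3,4,6}$, missing $0$ and $5$. The proposed repair also fails there: the unique $3$-subset of $S$ with sum $0$ is $\set{3,5,6}$, which has empty intersection with the sum-correct candidate $\set{1,2,4}$, hence is at Johnson distance $3$ and unreachable by any single zero-sum swap in the sense of Lemma~\ref{lem:basicTool}. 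Your justification that "there is room since the complement of $a\cup\set{c}$ contains $\Nb\geq 3$ cards" is a cardinality statement, not an existence argument for such swaps; the paper itself exhibits, right after Lemma~\ref{lem:mainJd1Shft}, that avoiding swaps can fail whenever $\Na+\Nc\geq\Nn/2$. Deferring the boundary cases to "a short direct check" concedes the point: those cases, $(3,3,1)$ included, are precisely what must be proven.

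The structural reason your plan struggles while the paper's does not is the order of the two repairs. The paper uses relative primality (Remark~\ref{rm:prime}) to start from a full $\Na$-interval whose sum is already the target $M$; only the membership of the two special cards $c,y$ then needs fixing, and this is feasible via Lemma~\ref{lem:basicTool} because the complement of an interval of length $\Na\leq\floor{\Nn/2}$ is itself a long interval, so the needed offsets $i$ exist after a small case analysis on where $c$ and $y$ lie relative to the interval. You instead fix membership first (run inside $S$) and try to recover the sum from one adjustable card; but since every repair move you allow is zero-sum, a candidate family that misses a residue can never be repaired into it, so the covering claim would have to be established outright for every configuration of the two arcs (note that in the example above it does hold, but only by using a run in the \emph{shorter} arc, which your construction excludes). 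That unconditional covering argument is exactly the step the proposal leaves open.
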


\begin{proof}
Assume that $\Na\leq \floor{\Nn/2}$, by the duality Theorem~\ref{th:dual}.

Consider an $M\in\mathbb{Z}_\Nn$ and $c\in\cards$. 
Let $y\in\bar{c}$.

Since   $\Na,\Nn$ are relatively prime (Remark~\ref{rm:prime}), 
there exists an $x$, such that $a=\set{x,x+1,\ldots,x+\Na-1}$ satisfies
 $\chi_{modn}(a_m)=m$.
Assume w.l.o.g. that $x=0$.

We consider several easy, similar cases,
where we  use Lemma~\ref{lem:basicTool} to obtain
$a'_1,a'_2\in\bar{c}$ such that $y\in a'_1\triangle a'_2$.

Case 1: assume that both $y$ and $c$ are in $a$.\\
Let $z_1=c$ and $z_2=y$, and apply the Lemma~\ref{lem:basicTool}  to obtain $a'_1\in\bar{c}$,
such that $y\not\in a'_1$.
Then, let $z_1=c$ and $z_2$ any card from $a$ different from $y$, and apply  Lemma~\ref{lem:basicTool}  to obtain $a'_2\in\bar{c}$,
such that $y\in a'_2$.

Case 2: assume that  $y\in a$ and $c\not\in a$.\\
In this case, we already have $a=a'_2\in\bar{c}$,
such that $y\in a'_2$. 
Thus, let $z_1=\Na-y-1$ and $z_2=y$. Notice that if $z_1\neq z_2$ then
 there exists an integer  $i$, $1\leq i \leq \floor{\ell_1/2}$  such that both 
$z_1+i\not\in a\cup c$ and
$z_2-i\not\in a\cup c$, and we can
apply  Lemma~\ref{lem:basicTool}  to obtain $a'_1\in\bar{c}$,
such that $y\not\in a'_1$.
Else, the conditions of the lemma hold for either   $z_1=y-1$ or $z_1=y+1$, with
 $z_2=y$, to obtain $a'_1\in\bar{c}$,
such that $y\not\in a'_1$.

Case 3: assume that  $y\not\in a$ and $c\not\in a$.\\
In this case, we already have $a=a'_1\in\bar{c}$,
such that $y\not\in a'_1$. 
Thus, let $z_1=0$ and $z_2=\Na-1$. If 
 there exists an integer  $i$, $1\leq i \leq \floor{\ell_1/2}$  such  
$z_1+i =y$ and
$z_2-i\neq c$,
 we can
apply  Lemma~\ref{lem:basicTool}  to obtain $a'_2\in\bar{c}$,
such that $y\in a'_2$.
Else, the conditions of the lemma hold for 
 $z_1=0$ and $z_2=\Na-2$,
to obtain $a'_2\in\bar{c}$,
such that $y\in a'_2$.

Case 4: assume that  $y\not\in a$ and $c\in a$ is similar.\\
\end{proof} 
 
 We now prove the symmetric case\footnote{
Interestingly, this is the case that had to be treated separately in~\cite{Cordon-FrancoDF12}.
The cases  $(4,3,1)$ and $(3,4,1)$ where checked using a Haskell script.
}
 where $\Na=\Nn/2$. 

 \begin{lemma}
 \label{lem:symmetricRCevenA}
If $\Na=\Nn/2-1$ with both $\Nn$ and $\Na$ even,
the protocol $\chi_{modn}$ is safe when $\Nc+\Nr=1$, $\Na \geq 3$, $\Nn\geq 7$.
 \end{lemma}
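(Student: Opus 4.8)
The plan is to follow the template of the relatively prime case (Lemma~\ref{lm:safetyRussCprime}), adapting it to the present non-coprime, boundary situation. First I would invoke the two standard reductions used throughout Section~\ref{sec:modSafe}: proving safety for $\Nc=1,\Nr=0$ suffices, since it implies the $\Nc=0,\Nr=1$ case, and by the duality Theorem~\ref{th:dual} we may assume $\Na\le\floor{\Nn/2}$, which holds since $\Na=\Nn/2-1$. By the safety characterization of Theorem~\ref{th:safetyMain}, it is then enough to show that for every $M\in\mathbb{Z}_\Nn$, every card $c$, and every $y\neq c$, there are $\Na$-sets $a_1,a_2\subseteq\bar{c}$ with $\chi_{modn}(a_1)=\chi_{modn}(a_2)=M$, $y\in a_1$ and $y\notin a_2$. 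The essential new feature is that $\Na+\Nc=\Nn/2$ \emph{exactly}, so Lemma~\ref{lem:mainJd1Shft} (which requires $\Na+\Nc<\Nn/2$) does not apply, and, since $\gcd(\Na,\Nn)=2$ here, Remark~\ref{rm:prime} is also unavailable; this is precisely why the case must be isolated. Note that the hypotheses force $\Na\geq 4$, as $\Na$ is even and $\Na\geq 3$.

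To build $a_1$, I would apply Lemma~\ref{lem:shiftCons} with $x_1=y-1$: since $2\le\Na\le\Nn/2$, this yields a set $a\in\chi_{modn}^{-1}(M)$ whose first $\Na-1$ cards are consecutive and start at $y-1$ or at $y$, so that $y\in a$ in both alternatives. If $c\notin a$ I take $a_1=a$; otherwise I would delete $c$ by the same two-step zero-sum shift used in the proof of Theorem~\ref{th:safetyRussCards}, removing $c$ together with a block card $z_2\neq y$ and re-inserting two cards outside $a$ and different from $c$, leaving $y$ untouched. Producing $a_2$ then amounts to one further shift, now of the form of Lemma~\ref{lem:basicTool} with $z_2=y$ and a suitable partner $z_1\in a_1$: this moves $y$ out of the set while preserving the sum and avoiding $c$, and the lemma guarantees $a_2\cap(c\cup\{z_1,y\})=\emptyset$, so indeed $y\notin a_2$.

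The main obstacle is exactly these shifts at the boundary $\Na+\Nc=\Nn/2$, where the existence of the required index $i$ in Lemma~\ref{lem:basicTool} is no longer automatic, and this is where the evenness hypotheses enter. For a pair $z_1,z_2$ let $\ell_1,\ell_2$ be the lengths of the two arcs of $\mathbb{Z}_\Nn$ between them, so $\ell_1+\ell_2=\Nn-2$ is even and $\ell_1,\ell_2$ share parity. Repeating the counting in the proof of Lemma~\ref{lem:mainJd1Shft}, the two-step shift on $z_1,z_2$ can be blocked on both arcs only if $a\cup c$ contains at least $\floor{\ell_1/2}+\floor{\ell_2/2}+2$ cards; when the arcs are \emph{even} this quantity equals $\Nn/2+1$, strictly exceeding $|a\cup c|=\Na+\Nc=\Nn/2$, a contradiction. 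Hence any pair whose arcs have even length admits the shift, and it suffices to pick a partner realizing even arcs. The almost-consecutive structure makes this easy in the generic situation: a card adjacent to $y$ in the run gives an arc of length $0$, which is even, so the $y$-removal step goes through, and the same device handles the $c$-deletion step by choosing $z_2$ non-adjacent to $y$. I expect the only remaining work to be a short dedicated check of the degenerate configurations in which the cleaning of $c$ strips $y$ of both immediate neighbours (mirroring Cases~1--4 of Lemma~\ref{lm:safetyRussCprime} according to whether $y$ and $c$ lie in the block); these few cases are where the argument is most delicate, but the even-arc criterion above reduces each of them to locating a single card at odd distance from $y$, which the consecutive block always supplies once $c$ is removed.
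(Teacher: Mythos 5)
Your proposal reaches the conclusion by a genuinely different route from the paper. The paper's own argument is structural and specific to $\gcd(\Na,\Nn)=2$: shifting a consecutive block by one position adds $\Na$ to the sum, so each achievable value $M$ is realized by exactly two ``opposite'' (nearly) consecutive $\Na$-blocks displaced by $\Nn/2$, which jointly cover $2\Na=\Nn-2$ cards, i.e.\ all but two; the two leftover cards are then reached by ad hoc small shifts, a step the paper leaves at the level of ``it is then easy'' plus a figure. You instead stay inside the generic template of Theorem~\ref{th:safetyRussCards} and repair it at the boundary $\Na+\Nc=\Nn/2$. Your central observation is correct and is, in my view, the cleaner device: since $\Nn$ is even the two arc lengths $\ell_1,\ell_2$ between a pair $z_1,z_2$ share parity, and when both are even the counting bound of Lemma~\ref{lem:mainJd1Shft} becomes $\ell_1/2+\ell_2/2+2=\Nn/2+1>\Na+\Nc$, so the two-step zero-sum shift always exists for partners at odd circular distance. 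You are also right that the refinement is only needed for the $y$-removal step: for deleting $c$ from a set containing it, $|a\cup\set{c}|=\Na=\Nn/2-1<\Nn/2$ and the original counting already succeeds for any partner.

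The place where your text is a plan rather than a proof is the deferred degenerate-case check, and I would caution that it is not purely cosmetic. Your criterion requires a partner of $y$ at odd circular distance inside $a_1$, and the cards at even distance from $y$ form a set of size $\Nn/2>\Na$, so a residual set with no odd-distance partner is not excluded a priori; moreover ``choose $z_2$ non-adjacent to $y$'' is not quite the right safeguard, since a block card at distance $3$ from $y$ is also a usable partner while one at distance $2$ is not --- what must be preserved through the $c$-deletion is some card at \emph{odd} distance from $y$, and for $\Na=4$ (block of only three consecutive cards, e.g.\ $\set{y,y+1,y+2}$ with $c=y+1$) this can force you to reorganize the construction, for instance by removing $y$ and $c$ together in a single adjacent-pair shift to produce $a_2$ directly. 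These repairs all go through, and the residual case analysis is of the same size and flavour as Cases 1--4 of Lemma~\ref{lm:safetyRussCprime} --- and, to be fair, no less complete than the paper's own ``it is then easy to reach these two values'' --- but it is genuine remaining work rather than a formality.
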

 \begin{proof}
 We have that $gcd(\Nn,\Na)=2$, 
 and hence for half of the values in $z_n$ there is an $m$ there are exactly two opposite sequences of $\Na$ cards, $a$ and $a'$
 with $\chi_{modn}(a)=\chi_{modn}(a')$, all even.
 For the other half, there is a sequence of $\Na-1$ consecutive, separated at the end by 1, 
 all odd.
 Thus, in either case, there are exactly two values that are not covered
 by these opposite $\Na$-sets.
And it is then easy to reach these two values.
 Figure~\ref{fig-safetyRussCardsSymEven} illustrates the two cases.
 \begin{figure}[h]
\centering
\includegraphics[scale=0.4]{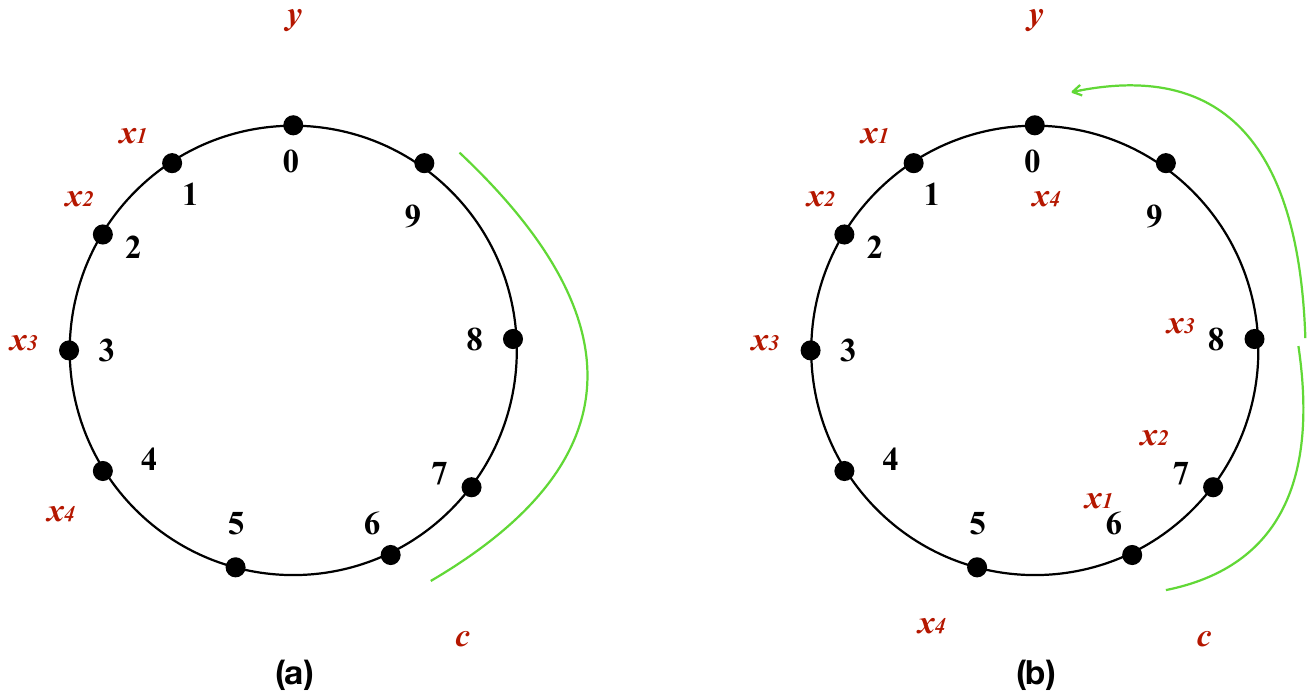} 
\caption{Symmetric case  $\Nn=10,\Na=4,\Nc=1,\Nr=0$}. 
\label{fig-safetyRussCardsSymEven}
\end{figure}
 \end{proof}

 \begin{lemma}
 \label{lem:symmetricRC}
If $\Nn=2\Na$, 
the protocol $\chi_{modn}$ is safe when $\Nc+\Nr=1$, $\Na \geq 3$, $\Nn\geq 7$.
 \end{lemma}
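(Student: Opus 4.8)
The plan is to follow the skeleton of the proof of Theorem~\ref{th:safetyRussCards}, adapting each shifting step to the tight symmetric regime. By the safety characterization (Theorem~\ref{th:safetyMain}) instantiated for $\chi_{modn}$, and since safety for $\Nc=1,\Nr=0$ implies safety for $\Nc=0,\Nr=1$, I may assume $\Nc=1$, so $\Nb=\Nn-\Na-1=\Na-1$; as $\Nn=2\Na$ is even and $\Nn\geq 7$, in fact $\Na\geq 4$ and $\Nb\geq 3$. Fix a card $c$, a card $y\in\bar{c}$, and a value $M\in\mathbb{Z}_\Nn$ attained by some $\Na$-set avoiding $c$; the goal is to exhibit $\Na$-sets $a_1,a_2\subseteq\bar{c}$ with $\chi_{modn}(a_1)=\chi_{modn}(a_2)=M$, $y\in a_1$ and $y\notin a_2$. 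The feature making this the hardest symmetric case is twofold: since $\Na+\Nc=\Na+1>\Nn/2$, the generic two-step shift of Lemma~\ref{lem:mainJd1Shft} is unavailable and each shift must be verified by hand; and since $\gcd(\Na,\Nn)=\Na$, consecutive blocks realise only two sums and Remark~\ref{rm:prime} gives nothing, so I must instead rely on Lemma~\ref{lem:shiftCons}, which still applies at $\Na=\Nn/2$ and produces, for every target sum, an $\Na$-set that is a run of at least $\Na-1$ consecutive cards plus at most one extra card.

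First I would build $a_1$ containing $y$. Applying Lemma~\ref{lem:shiftCons} with $x_1=y-1$ gives an $\Na$-set with sum $M$ whose consecutive run of length $\Na-1\geq 3$ starts at $y-1$ or $y$, hence contains $y$. If this set avoids $c$ it is the desired $a_1$. Otherwise $c$ lies in the set, and I would delete it by a single zero-sum two-card exchange as in Lemma~\ref{lem:basicTool}: because the set is essentially an interval, its complement is a long arc of free positions, so for $z_1=c$ and a suitable second card $z_2\neq y$ of the set there is an offset $i$ with $c+i,\,z_2-i$ free; the exchange preserves the sum, keeps $y$, and lands in $\bar{c}$.

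Finally, to obtain $a_2$ with $y\notin a_2$, I would apply the same exchange to $a_1$ with $z_1=y$ and a partner $z_2\in a_1\setminus\{y\}$, moving $y$ out while preserving the sum and avoiding $c$. This last step is the main obstacle: here $a_1$ occupies exactly $\Na$ of the $2\Na$ cards and $c$ forbids one more position, so only $\Na-1$ slots are free and a compensating shift is not automatic. I would normalise by the translation $x\mapsto x-c$ (which sends $\chi_{modn}^{-1}(M)$ to $\chi_{modn}^{-1}(M-\Na c)$ and preserves the shape of the condition) so that $c=0$ and $\bar{c}=\{1,\dots,2\Na-1\}$, and use the reflection $x\mapsto -x$ (which negates sums) to assume $y$ lies in a fixed half; then a short case analysis on the position of $y$ relative to the consecutive block of $a_1$ shows that in every configuration either the exchange with $z_1=y$ succeeds for some partner $z_2$, or an explicit alternative $\Na$-set of sum $M$ avoiding both $c$ and $y$ can be written down directly. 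Checking that these maneuvers exhaust all positions of $y$ is the one genuinely case-based part of the argument; everything else is the shifting mechanics already developed in Section~\ref{sec:3prop}.
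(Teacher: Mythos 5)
Your setup is correct (reduction to $\Nc=1$, the observation that $\Na\geq 4$ and $\Nb\geq 3$, the unavailability of Lemma~\ref{lem:mainJd1Shft} and of Remark~\ref{rm:prime}), and your route is genuinely different from the paper's. The paper does not start from Lemma~\ref{lem:shiftCons}: it normalizes $c=\Nn/2$ by translation, writes down two explicit translates $a_1=\set{0,1,\ldots,\Na-1}$ and $a_2=\set{\Na+2,\ldots,\Nn-1,0,1}$ with the same sum, observes that every card outside $\set{c,0,1}$ lies in $a_1\triangle a_2$, and then patches the two exceptional cards $y=0$ and $y=1$ by explicit two-arc zero-sum shifts via Lemma~\ref{lem:basicTool}. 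So the paper's witnesses are concrete from the start, whereas you generate a near-interval candidate for each $M$ and then try to repair it. Your decomposition is in principle more systematic, since it addresses every color class $M$ uniformly.

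The problem is that the proof lives exactly in the two repair steps you do not carry out. Both ``remove $c$ while keeping $y$'' and ``remove $y$ while avoiding $c$'' ask for a compensating pair $z_1+i$, $z_2-i$ landing in the free positions, and the only general existence guarantee for such a pair, Lemma~\ref{lem:mainJd1Shft}, requires $\Na+\Nc<\Nn/2$, which fails here. In the regime $\Nn=2\Na$ the complement of the current set has only $\Na$ positions, one of which is $c$ (or $y$), and the two index sets $\set{i: z_1+i \text{ free}}$ and $\set{i: z_2-i \text{ free}}$ are each of size about $\Na$ in $\mathbb{Z}_{2\Na}$, so their intersection can be empty or can consist only of forbidden values for particular choices of $z_2$; already for $\Nn=8$, $a=\set{0,1,2,5}$, $c=1$, $y=0$ the exchange in the exact form of Lemma~\ref{lem:basicTool} (with its range restriction on $i$) fails for every admissible partner $z_2$, and one must either allow a more general zero-sum two-arc path or switch to a different witness such as $\set{0,4,5,7}$. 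You acknowledge this is ``the one genuinely case-based part of the argument'' and then assert its conclusion; but that case analysis \emph{is} the lemma -- it is precisely why the symmetric cases are split off into Appendix~\ref{app:symmModn}. Until you either enumerate the positions of $y$ and $c$ relative to the block and exhibit the exchange or the alternative set in each one, or replace this step with explicit witnesses as the paper does, the argument is a plan rather than a proof.
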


 \begin{proof}
The arguments are similar to the above, we present only a sketch.
We use Remark~\ref{rm:prime}, to choose
without loss of generality  $c=\Nn/2$.
Consider the two $\Na$-sets $a_1=\set{0,1,\ldots, c-1}$, and 
$a_2=\set{c+2,c+3,\ldots, 0,1}$.
Notice that $\chi_{modn}(a_1)=\chi_{modn}(a_2)$, because
$a_2=a_1+(\Nn/2+2)\Na$.
Thus, for each card $y\not\in\set{c,0,1}$,  $y\in a_1\triangle a_2$.
To complete the proof of this case, we use Lemma~\ref{lem:basicTool} as follows.
Consider $y=0$, and let $a'_1$ be
$$
a_1 \stackrel{-2, 0}{\longrightarrow} a_{1} \stackrel{c+1, c-1}{\longrightarrow} a'_1 .
$$
Thus, $y=0 \in a_1\triangle a'_1$, and $\chi_{modn}(a_1)=\chi_{modn}(a'_1)$.
Similarly, consider $y=1$,  and let $a'_2$ be
$$
a_2 \stackrel{2, 1}{\longrightarrow} a_{1} \stackrel{c+1, c+2}{\longrightarrow} a'_2 .
$$
Thus, $y=1 \in a_2\triangle a'_2$, and $\chi_{modn}(a_2)=\chi_{modn}(a'_2)$.
 \end{proof}

\end{document}